\def\BibTeX{{\rm B\kern-.05em{\sc i\kern-.025em b}\kern-.08em
    T\kern-.1667em\lower.7ex\hbox{E}\kern-.125emX}}
\theoremstyle{plain}
\newcommand{\newtodo}[1]{\ClassWarning{NOT READY TO SUBMIT}{There is something left todo} \textcolor{blue}{}}
\newcommand{\review}[1]{\ClassWarning{NOT READY TO SUBMIT}{There is something left todo} \textcolor{orange}{}}
\newcommand{\name}{mmFlexible\xspace}
\newcommand{\algoname}{FSDA\xspace}
\newcommand{\dpa}{DPA\xspace}
\newcommand{\ttd}{TTD\xspace}
\definecolor{deepblue}{rgb}{0,0,0.5}
\definecolor{deepred}{rgb}{0.6,0,0}
\definecolor{deepgreen}{rgb}{0,0.5,0}
\definecolor{backcolour}{rgb}{0.95,0.95,0.92}
\def\beq{\begin{equation}}
\def\eeq{\end{equation}}
\def\beqa{\begin{eqnarray}}
\def\eeqa{\end{eqnarray}}
\def\beqan{\begin{eqnarray*}}
\def\eeqan{\end{eqnarray*}}
\newtheorem{theorem}{$\blacksquare$ Theorem}
\newtheorem{corollary}{$\blacksquare$ Corollary}
\def\tm1{t\! - \! 1}
\def\tp1{t\! + \! 1}
\newcommand{\Gbf}{\mathbf{G}}
\newcommand{\Ibf}{\mathbf{I}}
\newcommand{\Ubf}{\mathbf{U}}
\newcommand{\Vbf}{\mathbf{V}}
\newcommand{\Wbf}{\mathbf{W}}
\def\degree{^{\text{o}}}
\def\order{{\mathcal O}}
\def\ant{\text{ant}}
\def\round{\text{round}}
\def\des{\text{desired}}
\begin{document}


\title{\name: Flexible Directional Frequency Multiplexing for Multi-user mmWave Networks
}

\author{\IEEEauthorblockN{Ish Kumar Jain, Rohith Reddy Vennam, Raghav Subbaraman, Dinesh Bharadia}
\IEEEauthorblockA{University of California San Diego, La Jolla, CA}
\IEEEauthorblockA{Email: \{ikjain, rvennam, rsubbaraman, dineshb\}@ucsd.edu}
}


\maketitle

\begin{abstract}



Modern mmWave systems have limited scalability due to inflexibility in performing frequency multiplexing. All the frequency components in the signal are beamformed to one direction via pencil beams and cannot be streamed to other user directions. We present a new flexible mmWave system called \name that enables flexible directional frequency multiplexing, where different frequency components of the mmWave signal are beamformed in multiple arbitrary directions with the same pencil beam. Our system makes two key contributions: (1) We propose a novel mmWave front-end architecture called a delay-phased array that uses a variable delay and variable phase element to create the desired frequency-direction response. (2) We propose a novel algorithm called FSDA (Frequency-space to delay-antenna) to estimate delay and phase values for the real-time operation of the delay-phased array. Through evaluations with mmWave channel traces, we show that \name provides a 60-150\% reduction in worst-case latency compared to baselines\footnote{This is an extended version of Infocom'23 paper with additional Appendix \ref{app:proof} that provide detailed mathematical analysis and closed-form expressions for delays and phases in DPA. Open-source link \url{https://wcsng.ucsd.edu/dpa}.}.

\end{abstract}

\begin{IEEEkeywords}
mmWave, beamforming, delay-phased array, frequency multiplexing, OFDMA, scheduling
\end{IEEEkeywords}

\section{Introduction}\label{sec:intro}

Millimeter-wave (mmWave) networks have the potential to provide wireless connectivity to a growing number of users with their vast bandwidth resources. However, current mmWave systems have a significant limitation in that they are unable to simultaneously serve multiple users by distributing small chunks of frequency resources to different users who are in different directions. Unlike sub-6 systems, that use Omni-antennas to radiate signal in all directions, mmWave systems use pencil beams that illuminate a small region in space, meaning that all the frequency components are directed towards a fixed direction and cannot be distributed to other directions. This inflexibility leads to two main issues, as illustrated in Figure \ref{fig:TDMA_OFDMA_3D}(a). Firstly, it leads to high latency, as the base station (gNB) must serve different user directions in a time-division manner, causing some users to experience long wait times, which is detrimental to latency-sensitive applications. Secondly, it leads to low effective spectrum usage. When a gNB serves one device at a time, each device gets a lot of instantaneous capacity which it may fail to utilize due to limited demand.
But because the gNB cannot direct the remaining frequency resources in other directions, those resources are wasted, leading to low effective spectrum usage. Furthermore, other users in other directions could have used these wasted bands to improve overall spectrum usage.

In this paper, we ask the question ``\textit{whether a mmWave base station can transmit or receive to any set of arbitrary directions using any set of contiguous frequency bands, creating a flexible frequency-direction beamforming response}". This is possible using massive antennas and digital beamforming, but traditional mmWave systems rely on analog phased arrays with a single RF chain for cost and power efficiency. However, analog phased arrays cannot create such a frequency-direction response, as they take a single input from the RF chain and radiate all frequencies in the signal in one fixed direction. One naive solution is to split the phased array into multiple sub-arrays and program them to radiate in different directions, but this reduces the directivity in each direction, reducing signal strength, range, reliability, and even data rate.


\begin{figure}[t]
    \centering
    \includegraphics[width=0.43\textwidth]{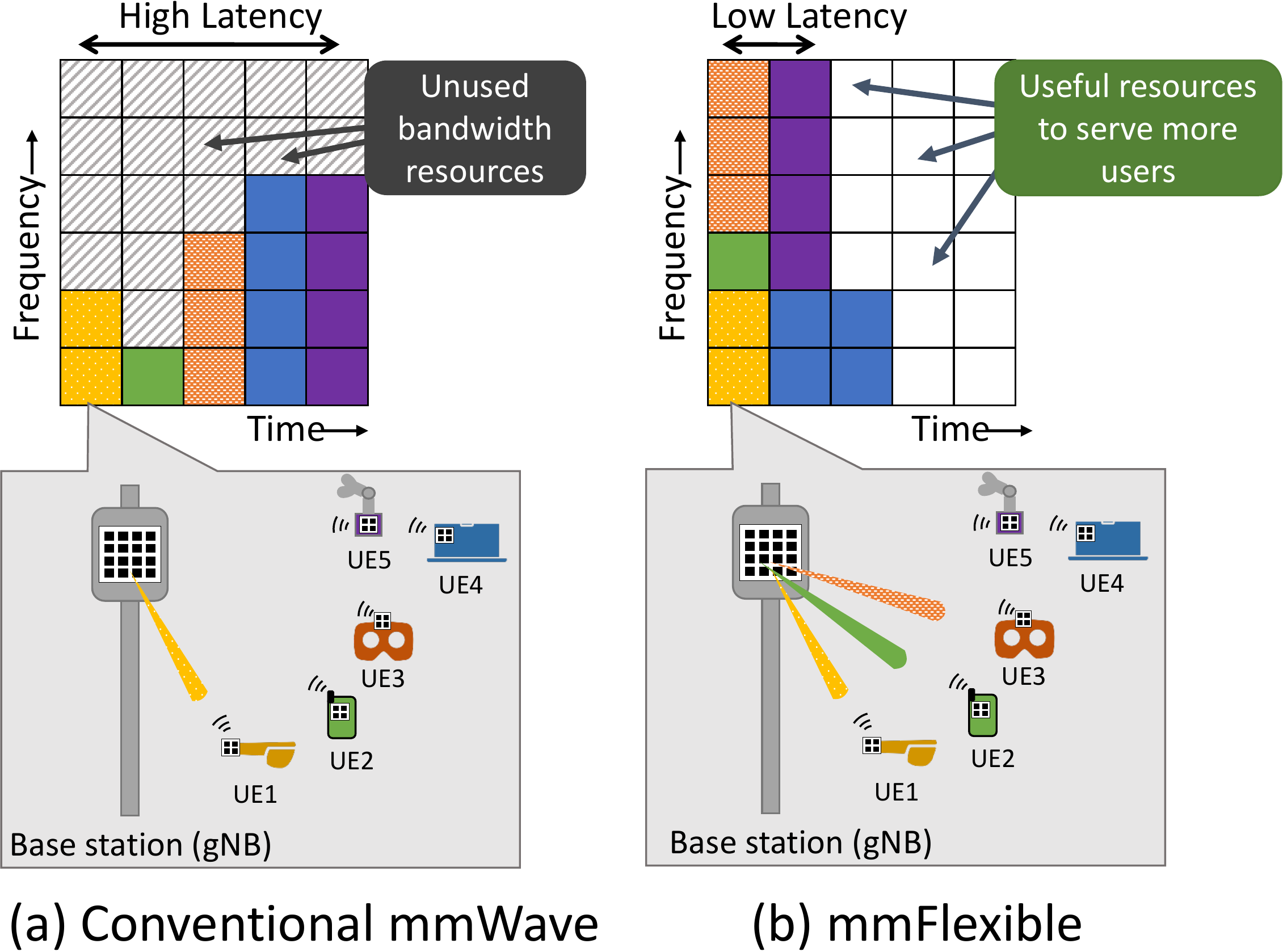}
    \caption{\name enables efficient use of the available mmWave spectrum resources through flexible directional-frequency multiplexing, allowing multiple users to be served simultaneously with low latency and high spectrum utilization.}
    \label{fig:TDMA_OFDMA_3D}
\end{figure}
Recently, new mmWave front-end architectures such as true-time delay (TTD) array~\cite{li2022rainbow,yan2019wideband,boljanovic2021fast} and leaky-wave antenna~\cite{ghasempour2020single} have been proposed for frequency-dependent beamforming that spread different frequencies in different directions. However, these beam patterns have limitations in that each user only receives a tiny fraction of the bandwidth in one timeslot, which may not meet their demand in a reasonable time frame. Additionally, these architectures do not provide control over the number of beams, beam directions, and \textbf{\textit{beam-bandwidth}}\footnote{\textbf{Beam-bandwidth} is defined as a fraction of system bandwidth that has high beamforming gain in the desired beam direction and low elsewhere.}, resulting in large chunks of frequency resources being wasted in directions where there is no active user, leading to low spectrum utilization.

We propose a system called \name that performs \textit{flexible directional-frequency multiplexing} by allocating a sub-set of contiguous frequency resources to each user, regardless of their direction, preventing spectrum wastage. This is achieved by creating multiple concurrent pencil beams (multi-beams) in different user directions, where each beam carries a separate frequency band according to the user's demand in that beam direction. A key feature of \name's multi-beam response is that it preserves beamforming gain across all beams while re-distributing power to only desired frequency-direction pairs with minimal leakage in other directions and frequency bands. This set of frequency-direction pairs can be chosen arbitrarily, providing flexibility in performing directional-frequency multiplexing. This enables efficient use of the entire frequency band (up to 800 MHz for 5G NR and 2.3 GHz for IEEE 802.11ax bands), reducing spectrum wastage and providing low latency network access, as shown in Fig. \ref{fig:TDMA_OFDMA_3D}(b).

To implement \name, we introduce a novel mmWave analog array architecture called delay phased array (\dpa) that can generate multi-beams with a flexible number of beams, with arbitrary beam directions and beam bandwidths (Refer to Figure~\ref{fig:dpa_desired_4beam}). Unlike fixed delay architectures, \dpa uses variable delay and phase elements at each antenna to create any desired frequency-direction response. Our insight is to use delays and phases in a complementary manner, with variable delay providing \textit{frequency selectivity} and variable phase providing \textit{direction steerability}, allowing \dpa to create multi-beams towards multiple arbitrarily chosen frequency-direction pairs.


We architect the design of \dpa and provide insights on the hardware requirements for creating the antenna array, and perform an analysis on the range and resolution of variable delay values required to generate our desired multi-beams. One of the challenges in implementing large delays on circuits is the size and complexity of the transmission lines, and integrating them onto an IC becomes even more difficult at mmWave frequencies due to bandwidth and matching constraints~\cite{ghaderi2019integrated}. Our design addresses this by significantly reducing the range of delay values required, compared to traditional \ttd array designs, making it practical and easy to manufacture. For example, \ttd array requires monotonically increasing delays at each antenna, requiring a delay of 20 ns even for a 16-element array, which is 13x more than state-of-the-art mmWave delay designs~\cite {ghaderi2019integrated}. In contrast, our \dpa design consists of increasing and decreasing delay values for consecutive antennas, resulting in lower delay range requirements than traditional \ttd arrays. Furthermore, the delay range is independent of the number of antenna elements, making it scalable for large arrays.


The next challenge is the software programming of \dpa to meet the beamforming requirements of \name. The software should determine the appropriate values for the variable delays and phases at each antenna of \dpa. Solving for these discrete values is computationally difficult as it is a non-convex and NP-hard problem. A naive solution is to pre-compute and store the delay and phase values for every possible frequency-direction pair, but this is infeasible due to a large number of such combinations ($10^{28}$). To overcome this, we develop a novel FSDA (frequency-space to delay-antenna) algorithm which provides a single-shot solution for estimating the delays and phases in real-time. It does this by mapping the desired frequency-space response to the delay-antenna space using a 2D transform and then extracting the corresponding delays and phases for each antenna. FSDA algorithm can be implemented in real-time using fast and efficient 2D FFT techniques. Our algorithm only requires the angles for each user and corresponding frequency resources allocated to users in the current time slot. The angles can be obtained from any standard compliant initial access protocol and so \name can be easily integrated into the standard 5G mmWave protocols.


\noindent
In summary,\textit{\textbf{ we make the following contributions}}:
\begin{itemize}
    \item  We propose \name, the first system that enables flexible directional-frequency multiplexing in mmWave networks, achieving higher spectrum usage, low latency, and scalability to support a large number of users.
    \item We design a novel mmWave front-end architecture called delay phased array that can generate multi-beams with a flexible number of beams, beam directions, and beam-bandwidths while maintaining high beamforming gains.
    \item We provide a new algorithm called FSDA (Frequency-space to delay-antenna) which estimates delays and phases in real-time using 2D FFT techniques and can be easily integrated with standard 5G mmWave protocols.
    \item We evaluate the performance of \name using real mmWave traces and show an improvement in latency by 60-150\% compared to baselines. Furthermore, the multi-user sum-throughput is improved by 3.9x compared to true-time-delay array baseline~\cite{li2022rainbow}.
\end{itemize}

\begin{figure*}
  \begin{minipage}[b]{0.18\linewidth}
    \centering
    \includegraphics[width=\linewidth]{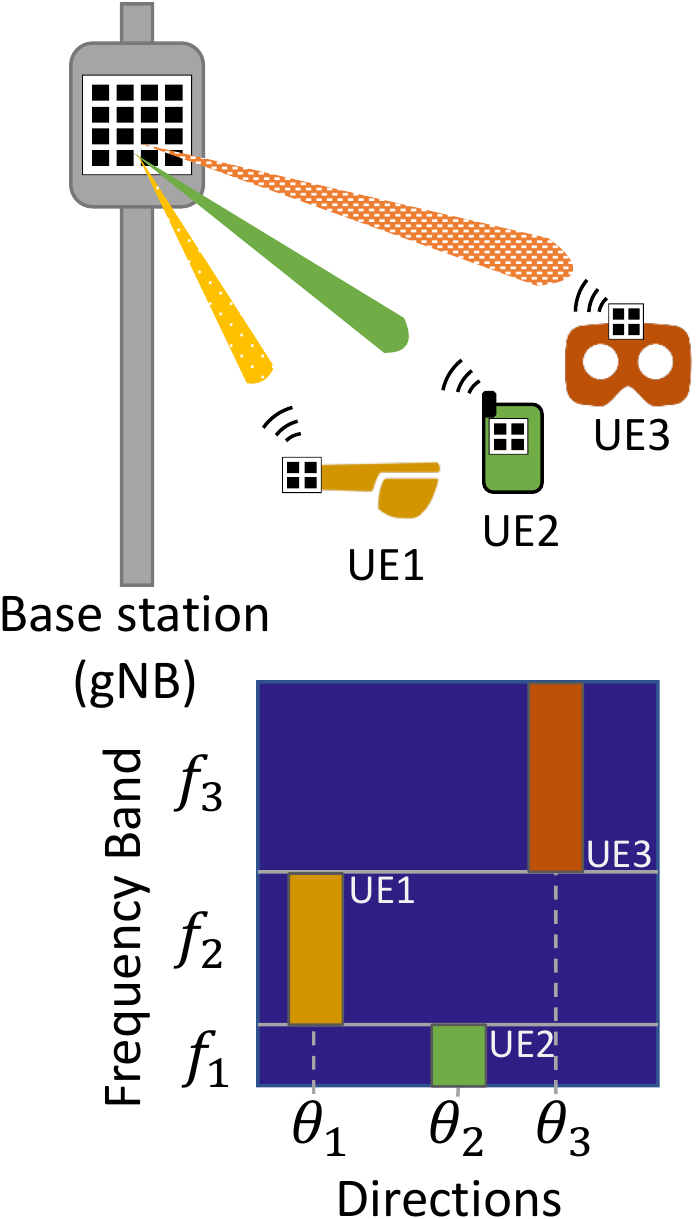}
    \caption{Desired frequency-space beam response for 3 users in 3 different angular directions.}
    \label{fig:dpa_desired_4beam}
  \end{minipage}
  \hspace{0.01\linewidth}
  \begin{minipage}[b]{0.79\linewidth}
    \centering
    \includegraphics[width=\linewidth]{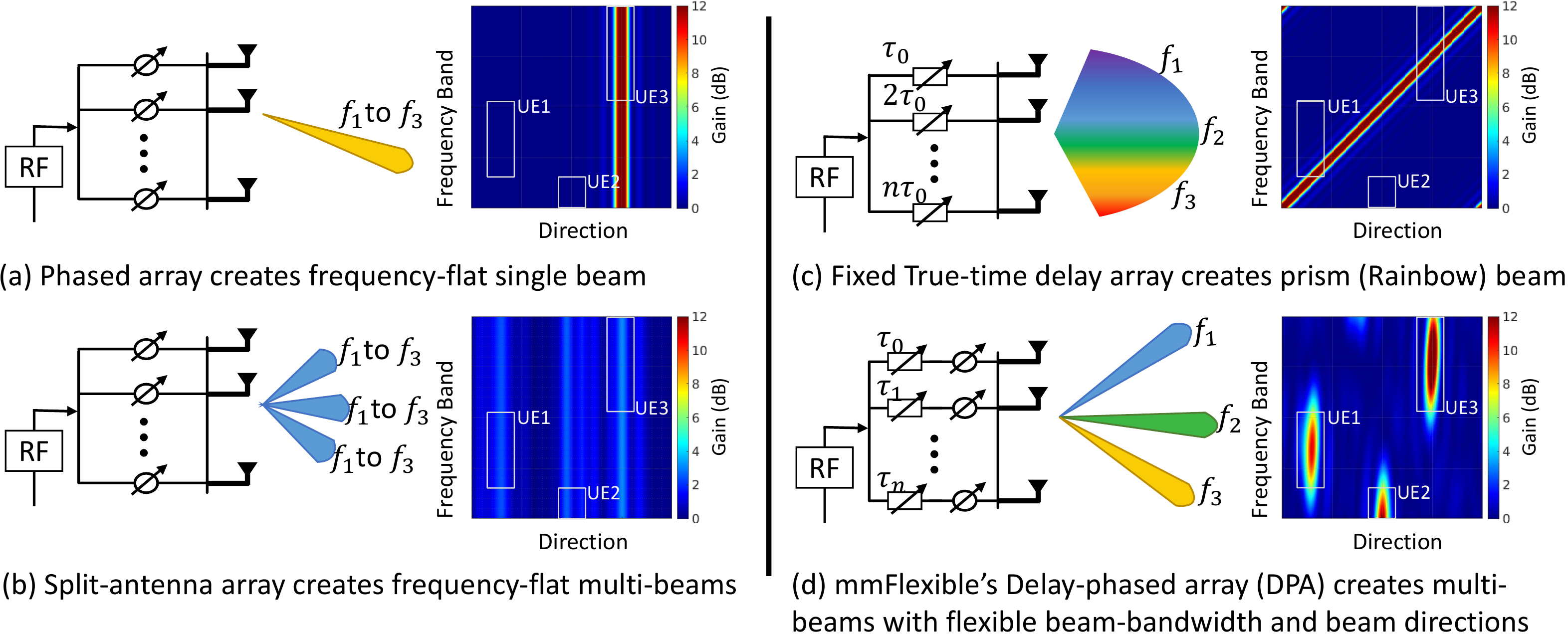}
    \caption{Comparison of different mmWave front-end architectures and corresponding frequency-space patterns they created. The traditional phased array and split-antenna array create a frequency-flat response, while \ttd array creates a frequency-selective rainbow-like pattern radiating in all directions. In contrast, \dpa provides a flexible frequency-direction response with a programmable number of beams, beam-bandwidth, and beam directions. }
    \label{fig:baseline_architecture_and_heatmap}
  \end{minipage}
\end{figure*}

\section{Background and Motivation}\label{sec:motivation}

In this section, we would provide a primer on different analog antenna array architectures, which have a single baseband radio frequency (RF) chain and mechanism to best achieve flexibility in resource allocation to multiple users. Next, we would show a realistic example (Figure~\ref{fig:dpa_desired_4beam}) that none of the architecture can meet the requirements for flexible directional-frequency multiplexing, and how our flexible \dpa architecture can meet these requirements.

\subsection{Primer on phased arrays and true-time delay arrays}
\textbf{$\blacksquare$ Phased array:}
The phased array takes a single input from the digital chain, split into $N$ copies, apply appropriate phase-shift, and radiates from $N$ antennas (Fig. \ref{fig:baseline_architecture_and_heatmap}(a)). The input signal with all of its constituent frequency bands is radiated in a direction specified by the phase setting. The set of phases at each antenna constitutes a weight vector $w_{\text{phase}}$ as:
\begin{equation}
    w_{\text{phase}}(n) = e^{j\Phi_n}
\end{equation}
where $\Phi_n$ is the programmable phases for antenna index $n (n\in [0,N-1])$. Now, if we program the phases to create a directional beam at an angle say 30$^\circ$, then this phase shift is applied to all the frequency components in the signal radiating them along that same angle 30$^\circ$. Different frequency components in the signal cannot be radiated to other directions because of the frequency-independent nature of weights in a phased array. Therefore, a phased array can serve users in only one direction at a time and cannot perform flexible directional-frequency multiplexing with many users. Split-antenna phased array (Fig. \ref{fig:baseline_architecture_and_heatmap}(b)) uses the phased array architecture split into multiple sub-arrays to create multiple beams towards different users in one TTI, but suffers from lower antenna gain, range, and throughput.

\noindent
\textbf{$\blacksquare$ True-time delay array:} The true-time delay (\ttd) array uses a delay element to replace the phase shift element as shown in Fig. \ref{fig:baseline_architecture_and_heatmap}(c). The antenna weights with delay element is given by:
\begin{equation}
    w_{\text{delay}}(t,n) = \delta(t-\tau_n)
\end{equation}
Past work on \ttd array~\cite{yan2019wideband,boljanovic2020true,boljanovic2021fast, li2022rainbow} have shown that this architecture creates a prism-type response by radiating all the frequencies in all the directions in a linear fashion (Fig.~\ref{fig:baseline_architecture_and_heatmap}(c)) by using a fixed set of delays at each antenna given by $\tau_n = \frac{n}{\text{B}}$, for bandwidth $B$. This response has two major limitations: 1) most frequency bands would be wasted in space if there are no user in that direction and 2) each user gets a small frequency band and cannot scale it up arbitrarily. Due to these limitations, \ttd array is not suitable for flexible directional-frequency multiplexing.

\subsection{Current mmWave architectures are incapable for flexible directional-frequency multiplexing}

Let us take a simple network scenario to understand the performance tradeoffs for the above systems. Typically for 4K VR applications, end-end latency should be $< 100$ ms, Where transmission goes from the public cloud, network provider, base station (edge), and device. Over-the-air (base station to device) transmission latency comes down to a strict latency requirement of $< 1$ ms \cite{qualcomm2017augmented}. Consider 10 users in the network with similar traffic demand: each user requires 60 Mbps (4K VR) throughput and $<1$ ms over-the-air latency due to the interactive nature of VR~\cite{pocovi2018achieving,3gpp2020,qualcomm2017augmented,mangiante2017vr}. The aggregate throughput provided by the users is 600 Mbps, only 30\% of the max physical layer throughput of a 400 MHz FR2 gNB in downlink (2.2 Gbps). We assume all users are distinctly located (in different angular directions) and RF conditions are good (e.g. LOS and no blockage). This allows us to control common external factors and exclusively evaluate architectural capabilities. We assume our \dpa and split array can create up to 8 concurrent beams and \ttd array creates a prism beam pattern. The comparison is shown in Table~\ref{tab:compare_baseline_background}. From our end-end evaluations (Fig.~\ref{fig:latency_only}), we can see that in edge scenarios all the baselines failed to meet the $1 $ ms over-the-air latency requirement. It is evident that \name is the best to support latency-critical applications while satisfying higher throughput demands.
\begin{table}[h]
    \centering
    \footnotesize
    \begin{tabular}{|c|c|c|c|}\hline
        Architecture & Latency* (ms) & Packet Loss & Throughput \\\hline
        phased array (TDMA) & $> 1.25$ ms & 24.0\% & 54.9 Mbps \\\hline
        split-antenna array & $> 2$ ms  & 33.3\% & 47.3 Mbps \\\hline
        \ttd array & $> 3.5 $ ms & 76.4\% & 18.3 Mbps \\\hline
        \dpa & $< 0.5$ ms & 0.0\% & 71.3 Mbps\\\hline
    \end{tabular}
    \caption{Delay-phased array \& other baselines for 10 4k VR users (*We mention over-the-air worst case latency).}
    \label{tab:compare_baseline_background}
\end{table}

\begin{figure*} [!t]
\subfigure[Desired F-S image]{
    \includegraphics[width=0.28\columnwidth]{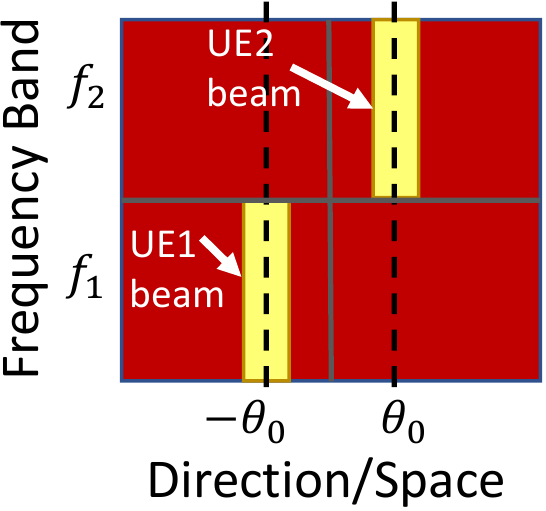}
    \label{fig:fsda_explain_desired}
  }
  \subfigure[Specific Frequency-Space (F-S) signature images]{
    \includegraphics[width=.88\columnwidth]{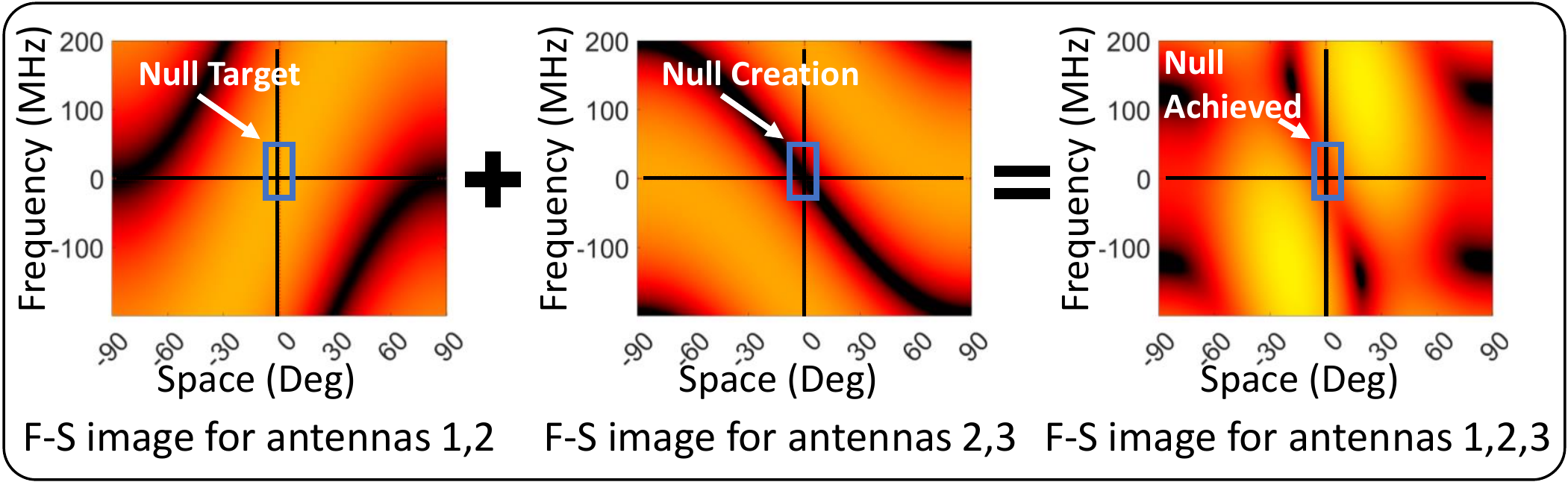}
    \label{fig:fsda_explain_v2}
  }
  \subfigure[8 antenna F-S image]{
    \includegraphics[width=.39\columnwidth]{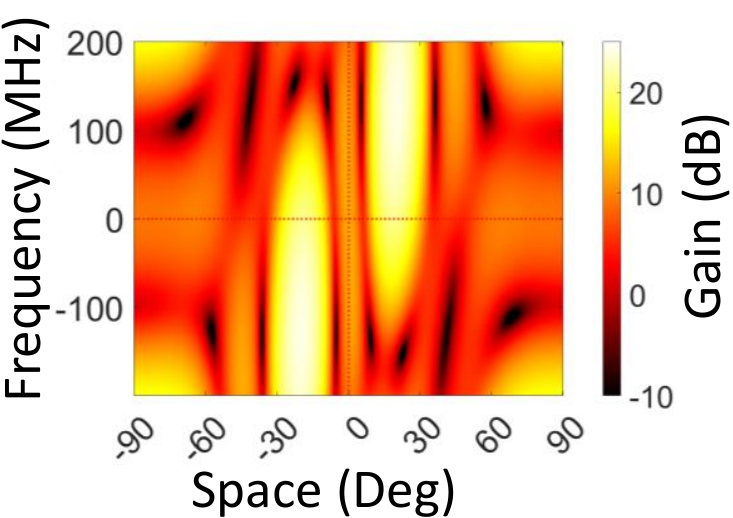}
    \label{fig:fsda_explain_8antenna}
  }
  \subfigure[Delays and phases]{
    \includegraphics[width=.3\columnwidth]{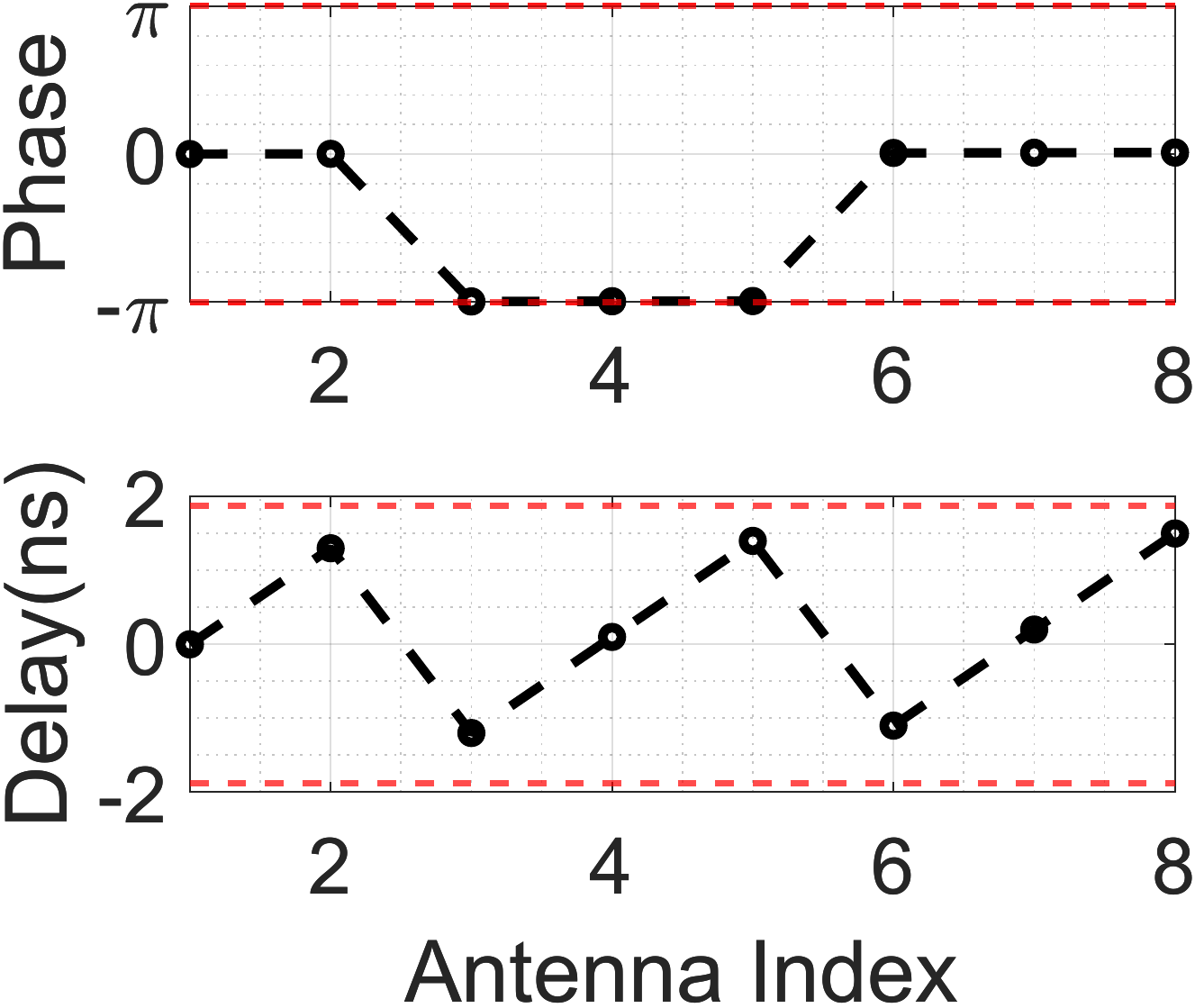}
    \label{fig:phase_delay_m20_20case}
  }
    \caption{Explanation on how \dpa creates a desired frequency-space image with two beams at $-20^\circ$ and $20^\circ$. The first beam occupies a frequency band in [-200,0] MHz and the second beam occupies a frequency band of (0,200] MHz. \dpa can create the desired image in (c) using delay and phase values in (d).}
    \label{fig:fsda_explain}
\end{figure*}


\section{\name's \dpa hardware design} \label{sec:design}
\name introduces a new mmWave front-end architecture, delay-phased array (\dpa), to enable flexible directional-frequency multiplexing for mmWave networks. The \dpa design addresses the limitations of existing mmWave systems with  phased arrays and \ttd arrays by creating a multi-beam response with flexible beam directions and beam bandwidths. In this section, we discuss the practical implementation of \dpa and show how we can build it with shorter delays.


\subsection{ Architecture of delay-phased array (DPA)} 
\dpa architecture consists of programmable delay and programmable phase element per antenna with a single-RF chain as shown in Fig.~\ref{fig:baseline_architecture_and_heatmap}(d). These elements can be programmed together to create flexible beam responses that are not possible by either of the two elements alone.  Our insight is to control two knobs: delays $\tau_n$ and phase $\Phi_n$ to get the desired response. We define beam weights for \dpa as:
\begin{equation}
    w_{\text{dpa}}(t,n) = w_{\text{phase}}(n)w_{\text{delay}}(t,n)= e^{j\Phi_n}\delta(t-\tau_n)
\end{equation}
Notice the dependence of \dpa weights with time, which leads to a dependence on frequency. Upon taking FFT, the exponential term in the weights become $\Phi_n-2\pi f \tau_n$, which is a function of frequency $f$ at antenna index $n$.
Therefore, the beamforming response of this weight vector will be a function of both frequency and direction.
The beamforming gain of an antenna array represents the power radiated by the antenna array in different directions. The expression for beamforming gain $G(f,\theta)$ for \dpa  at a frequency $f$ and direction $\theta$ is:
\begin{equation}
\begin{split}
        G(f,\theta)  &= \sum_{n=0}^{N-1} \mathcal{F}(w_\text{dpa}(t,n)) e^{-jn\pi\sin(\theta)}
\end{split}
\end{equation}
where $\mathcal{F}$ is Fourier transform and  $e^{-jn\pi\sin(\theta)}$ is the standard steering vector transformation from the antenna to space ($\sin(\theta)$)-domain\footnote{Note that the steering vector for a linear antenna array is given by  $e^{-jn2\pi \frac{d}{\lambda} \sin(\theta)}$, which depends on the array geometry (antenna spacing $d$) and signal wavelength $\lambda$~\cite{benesty2019array}. We assume $d=\frac{\lambda}{2}$ and approximate the steering vector as $e^{-jn\pi\sin(\theta)}$.}.
Essentially, the response is the sum of the individual contribution from all antennas. By equating exponential terms of \dpa weights to that of array response, we get: $\Phi_n - 2\pi f \tau_n = n\pi \sin(\theta)$. The variable delay causes a slope in frequency ($f$) - space ($\sin \theta$) plot, while the variable phase causes a constant shift along the space axis. Together, they can create an arbitrary line with a configurable slope and intercept in the frequency-space domain. With this insight in place, we will discuss how to program the phase and delay values at each antenna to get the desired beam response in Section \ref{sec:design_software}.


\subsection{ Range of delay element in \dpa}
Before discussing \dpa software programming, we emphasize that it is important to analyze the set of possible delay values that the hardware can support practically. Here we describe the requirements for the delay range and how it helps create a practical circuit board. Delay elements are implemented with variable-length transmission lines on a circuit board. Building large delay lines in IC at mmWave frequencies is prohibitive because of large size, bandwidth, and matching constraints~\cite{ghaderi2019integrated}. Therefore, our design ensures that the delay range is not too large. While traditional true-time delay (TTD) array requires a delay range proportional to the number of antenna $N$, which is large for large antenna array (18.75 ns for 16 antenna array)~\cite{li2022rainbow}. In contrast, the delay range for \name is independent of the number of antennas. For the two-beam case, the delay range for \name is $\frac{3}{2B}$ (shown later in (\ref{eq:tau_closed})), which is $3.7 ns$ for 400 MHz bandwidth for 5G NR; significantly less than that required by TTD arrays. The delay range increases with the number of concurrent beams, but is independent of the number of antennas, making it scalable to large arrays.

Delay control with sub-ns accuracy has been demonstrated in full-duplex circuits for interference cancellation~\cite{nagulu2021full}. Recently, authors of~\cite{ghaderi2020four} have shown accurate delay control with 0.1 ns resolution and with 6-bit control (64 values until 6.4 ns), which satisfies the requirement of \name.

\section{\name's \dpa Software Design}\label{sec:design_software}
\subsection{Requirements for \name}
Our goal is to construct arbitrary frequency-direction response $G(f,\theta)$ via \dpa architecture, which would be energy efficient and enable efficient resource utilization with low latency. If we carefully notice the example in section 2, we observe that the split antenna achieves the frequency-direction mapping but with a loss of 6 dB SNR, which is a corollary of the entire 400 MHz radiated in each of the four directions. It leads to our first requirement:   
\noindent
\textbf{Req.1:} \textit{
The system must be able to transmit/receive signals in the specific frequency-direction pairs associated with each user, with minimal energy leakage in other directions and frequencies. 
}
Furthermore, we should be able to control the amount of bandwidth assigned to each user, which leads to: 
\noindent
\textbf{Req.2:} \textit{
Flexibility in allocating bandwidth to each user, allowing for narrow beams in space for higher antenna gain and wide beams in frequency to support high-demand users.
}





\subsection{Meeting \name's requirements with \dpa} 
With this intuition in place, we revisit and explain how can \name achieve these requirements through a simple two-user example in Fig.~\ref{fig:fsda_explain}. Let us consider the two users are located at $-\theta_0$ and $\theta_0$ respectively, and the base station wishes to serve these two users with equal beam-bandwidth of $B/2$ each, where $B$ is the total system bandwidth. To support such flexible directional-frequency multiplexing, the base station must create a frequency-direction beam response shown in Fig.~\ref{fig:fsda_explain}(a). We call such 2D beam patterns as \textit{frequency-space (F-S) images} for simplicity. So how does \dpa create these images and meet the above requirements? 

We provide a closed-form expression for the set of delays $\tau_n$ and phases $\Phi_n$ for each antenna that would generate the above beamforming response as follow:
\begin{equation} \label{eq:tau_closed}
    \tau_n = \left(\frac{3}{2B}n\sin(\theta_0) +\frac{3}{4B}\right)\;\;\;\text{mod }\frac{3}{2B} 
\end{equation}
\begin{equation}\label{eq:phase_closed}
    \Phi_n = \text{round}(n\sin(\theta_0))\pi \;\;\;\text{mod }2\pi
\end{equation}

\textbf{\textit{A more generalized expression for an arbitrary number of beams, beam directions, and beam bandwidth along with their proof can be found in Appendix~\ref{app:proof}.}}

Now, we achieve the requirements for \name by assigning a complementary F-S images to a subset of antennas. For instance, we create a positive slope in F-S image using antennas 1 and 2, and then create a complementary negative slope with antenna 2 and 3 as shown in  Fig.~\ref{fig:fsda_explain}(b). 
When the two responses are combined together, we observe a frequency-space image where they combine constructively at desired user locations while creating a null (low gain) at other locations (Meeting \textbf{Req-1}). 

For the second requirement, we create such beams by choosing the number of antennas for creating a positive or negative slope. The intuition is that a higher number of antennas makes the corresponding signature image narrow in space. For instance, it is clear from Fig.~\ref{fig:fsda_explain}(d) that three consecutive antennas (e.g. antenna 3,4,5) have increasing delays, while only two antennae (e.g. 2,3) have decreasing delays. This helps in making the positive slope in the F-S image narrow (3 antenna contribution), while the negative slope remains wide (2 antenna contribution). This effect causes beams that are narrow in space, but arbitrarily wide in frequency as shown in Fig.~\ref{fig:fsda_explain}(c) (Meeting \textbf{Req-2}). 


This intuition helps in understanding how a simple 2-beam frequency-space image is created. We use this insight to develop a novel \algoname algorithm that estimates the delay and phase values for any frequency-space image with an arbitrary number of beams, beam directions, and beam-bandwidths.


\begin{figure*}[t]
    \centering
    \includegraphics[width=0.7\textwidth]{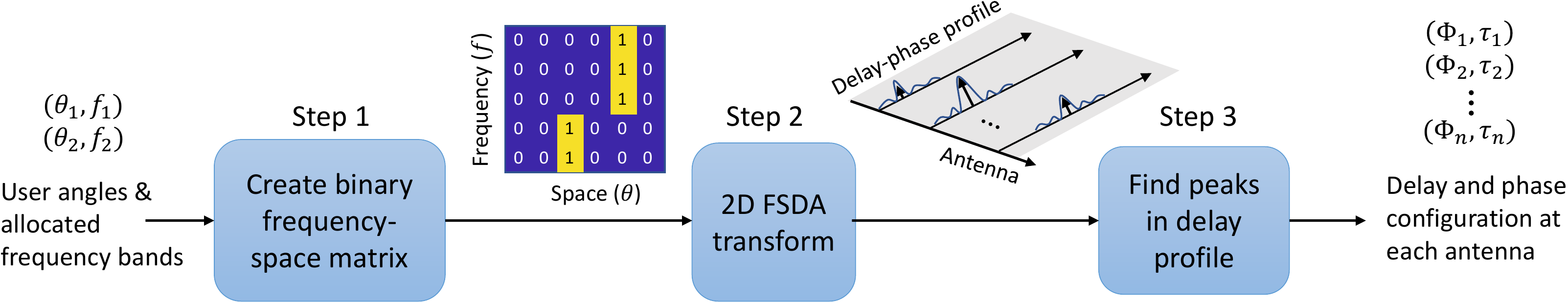}
    \caption{Three steps of our \algoname algorithm to estimate per-antenna delays and phases that generate a desired frequency-space image. The core step is our novel 2D FSDA transform with some pre-processing and post-processing of inputs and outputs respectively. }
    \label{fig:algo_fsda_steps}
\end{figure*}

\subsection{\algoname algorithm for estimating delays and phases in \dpa}\label{sec:design-est}
To create  a desired frequency-direction beam response, the base station needs to estimate the corresponding delays and phases per-antenna in \dpa. One naive solution is to try different discrete values in a brute-force way using a look-up table to get the desired beams. But, this solution is computationally hard and memory intensive since there is a large set of possibilities for the delays and phases at each antenna. For instance, with 64 delays and 64 phases per antenna (assume both are 6-bit, so $2^6=64$), a brute-force look-up table search would require $64^N\times 64^N \approx 10^{28}$ probes to try each combination and then storing it all in memory, which is impossible to solve with even high memory and high computing machines. Our insight is that we can pose this problem as an optimization framework and solve them in a computationally efficient way.

We now formulate the optimization problem with insights we have obtained from the previous subsection and from the fundamentals of digital signal processing. The goal is to relate weights (delays and phases) to the antenna gain pattern in (\ref{eq:gain}) in a way that simplifies our estimation problem. 

Our insight is that similar to how frequency and time are related by a Fourier transform, there is a similar transform that relates space and antenna using steering matrices. So, there are two transforms that bridges the world of antenna weights to the desired gain pattern: time to frequency transform and antenna to space transform.
Mathematically, we re-write the gain pattern of \dpa to emphasize this 2D transform:
\begin{equation}\label{eq:gain}
    G(f,\theta) = \sum_{k = 0}^{K-1} \sum_{n=0}^{N-1} \Ubf(f,k) w_\text{dpa}(k,n) V(n,\theta)
\end{equation}
where $\Ubf(f,k)$ is a discrete domain Fourier transform (DFT) and the steering matrix $V$ is defined per-element as $V(n,\theta) = e^{-jn\pi\sin(\theta)}$. Now since the signal is actually sampled only discretely with a sampling time of $T_s$, our original delay weight element $w_{\text{delay}}(t,n)$ would reduce to $w_{\text{delay}}(k,n) = \delta(kT_s-\tau_n)$ as described in (\ref{eq:gain}).

We then represent the gain pattern by a discrete frequency-space matrix $\Gbf$ and the weights as discrete time-antenna matrix $\Wbf$ and relate them with the following 2D transform:
\begin{equation}
    \Gbf = \Ubf\Wbf\Vbf
\end{equation}
where $\Ubf$ is time to frequency transform matrix and $\Vbf$ is antenna to space transform matrix. Here we formulate $\Wbf$ as $K\times N$ matrix, where $K$ is the number of discrete time values and $N$ number of antennas. 

We follow a three-step process to estimate the weight matrix $\Wbf$ that creates our desired frequency-space image intuitively explained in Fig. \ref{fig:algo_fsda_steps}. There are two inputs to our algorithm: Angles and desired frequency bands for each user. These two inputs are enough to represent the given frequency-space image. As a first, we create a binary frequency-space matrix that consists of 1s at desired frequency-space locations and 0s otherwise, we denote it by $\Gbf_{\des}$. We then formulate the following optimization problem:
\begin{equation}
\begin{split}
        \hat{\Phi_n},\hat{\tau_n} &= \min ||\Gbf_\des - \Ubf\Wbf\Vbf||^2\\
        \text{s.t.}\quad &\Wbf(k,n) = e^{j\Phi_n}\delta(kT-\tau_n)
\end{split}
\end{equation}
This optimization is a non-convex due to the non-linear terms such as exponential in phase and delta in delay. Moreover, the constraint of having a discrete set of values for delays and phases makes it NP-hard. We make an approximation by relaxing the delta constraint and letting the weights at each antenna take any variation over time. It means that we allow weights to take the form of a continuous profile over time  at each antenna rather than a delta function which is non-zero at only one value and zero otherwise. We call it the delay-phase profile at each antenna. How do we estimate this delay-phase profile?

Our insight is that we can write an inverse transform of $\Ubf$ and $\Vbf$ to go from the frequency-space domain to the time-antenna domain. The logic behind such formulation is that using an appropriate discrete grid along the time and space axis, we can formulate $\Ubf$ and $\Vbf$ as linear transforms, i.e., $\Ubf^\dagger\Ubf=\Ibf$ and $\Vbf\Vbf^\dagger=\Ibf$ for identity matrix $\Ibf$ (Note $(.)^\dagger$ is pseudo-inverse of a matrix). Therefore, it is easy to write their inverse by simply taking the pseudo-inverse. We estimate $\hat{\Wbf}$ as:
\begin{equation}
    \hat{\Wbf} = \Ubf^{\dagger}\Gbf_\des\Vbf^{\dagger}
\end{equation}
The final step of our algorithm is to extract delays and phases from $\hat{\Wbf}$. Note that each column in $\hat{\Wbf}$ contains the delay-phase profile. We find the maximum peak in this profile and the index corresponding to this peak gives the delay and the max value at this peak gives the phase term. Note that since we did not put any restriction on the number of non-zero delay taps, we could get more than one delay tap per antenna. We empirically found that the estimated delay profile has only one significant peak with high magnitude than other local peaks (See Section \ref{sec:evaluation}). Also, the intuition comes from our insights from the previous section that usually one delay per antenna suffices in creating the desired response. 

\noindent
\textbf{$\blacksquare$ Weights Quantization:} The delay and phase values obtained from \algoname algorithm are still continuous in nature and must be discretized to be fed into the \dpa hardware. We quantize both the phase and delay values with a 6-bit quantizer in software before feeding to the array. The quantized phase takes one of the 64 values in $[0^\circ,360^\circ)$ and the quantized delay varies in the range [0, 6.4ns) with an increment of 0.1 ns.

\noindent
\textbf{$\blacksquare$ Computation complexity of \algoname:} 
The run-time complexity of \algoname is dominated by the 2D FFT transform on a given frequency-space image. Given the frequency, the axis is divided into $M$ subcarriers and the space axis into $D$ directions, the run-time complexity is $\order{(MD(\log(M)+\log(D))}$.

\begin{figure}[t]
    \centering
    \includegraphics[width=0.48\textwidth]{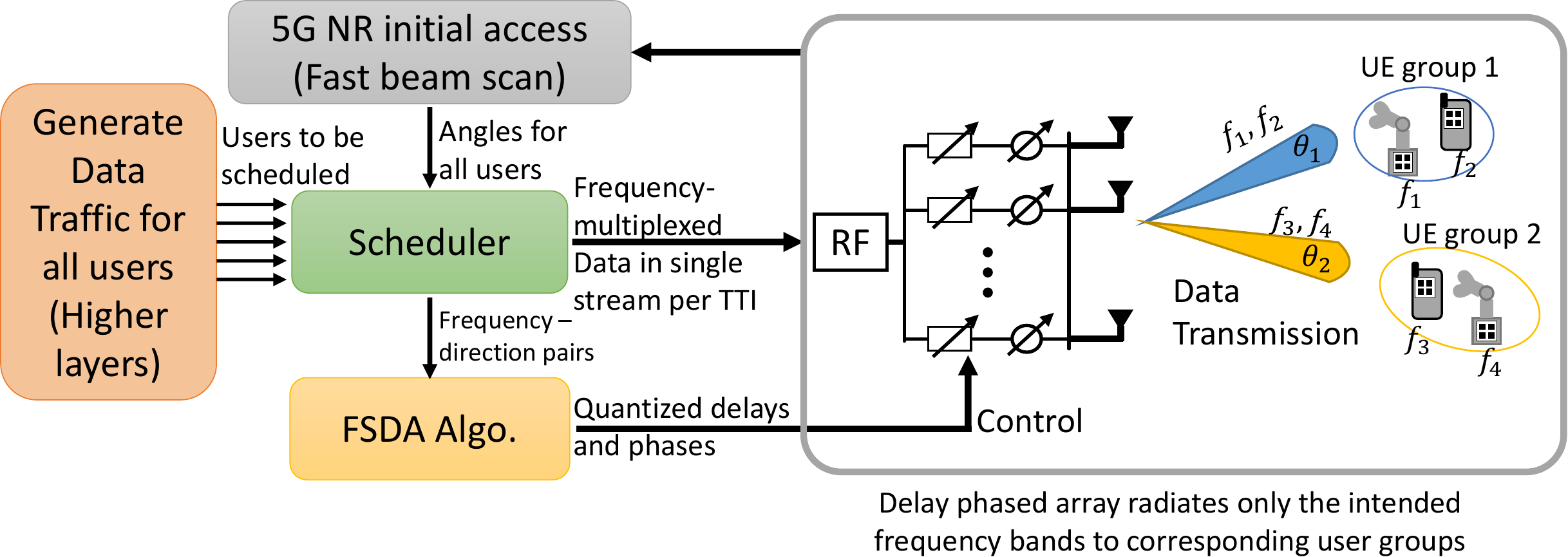}
    \caption{Implementation overview for \name with four main components: Data traffic generator, Fast beam scan angle estimation, a scheduler, and our \algoname algorithm. 
    }
    \label{fig:overview_fsda_scheduler}
\end{figure}

\section{Evaluation}\label{sec:evaluation}
\begin{figure*} [!t]
\centering
\subfigure[Latency]{
    \includegraphics[width=0.17\textwidth]{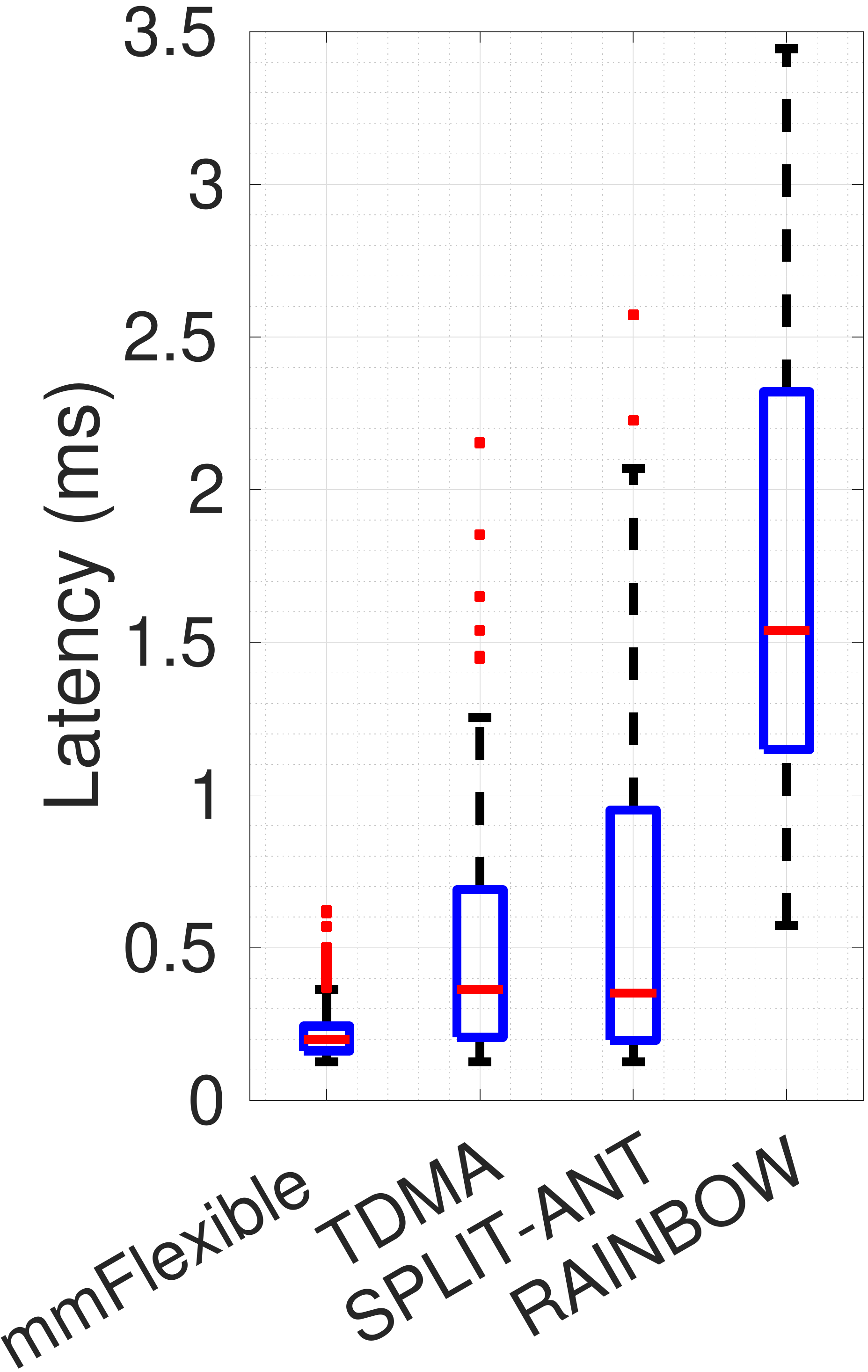}
    \label{fig:latency_only}
  }\hfill
  \subfigure[Loss for $<$1ms latency]{
    \includegraphics[width=.17\textwidth]{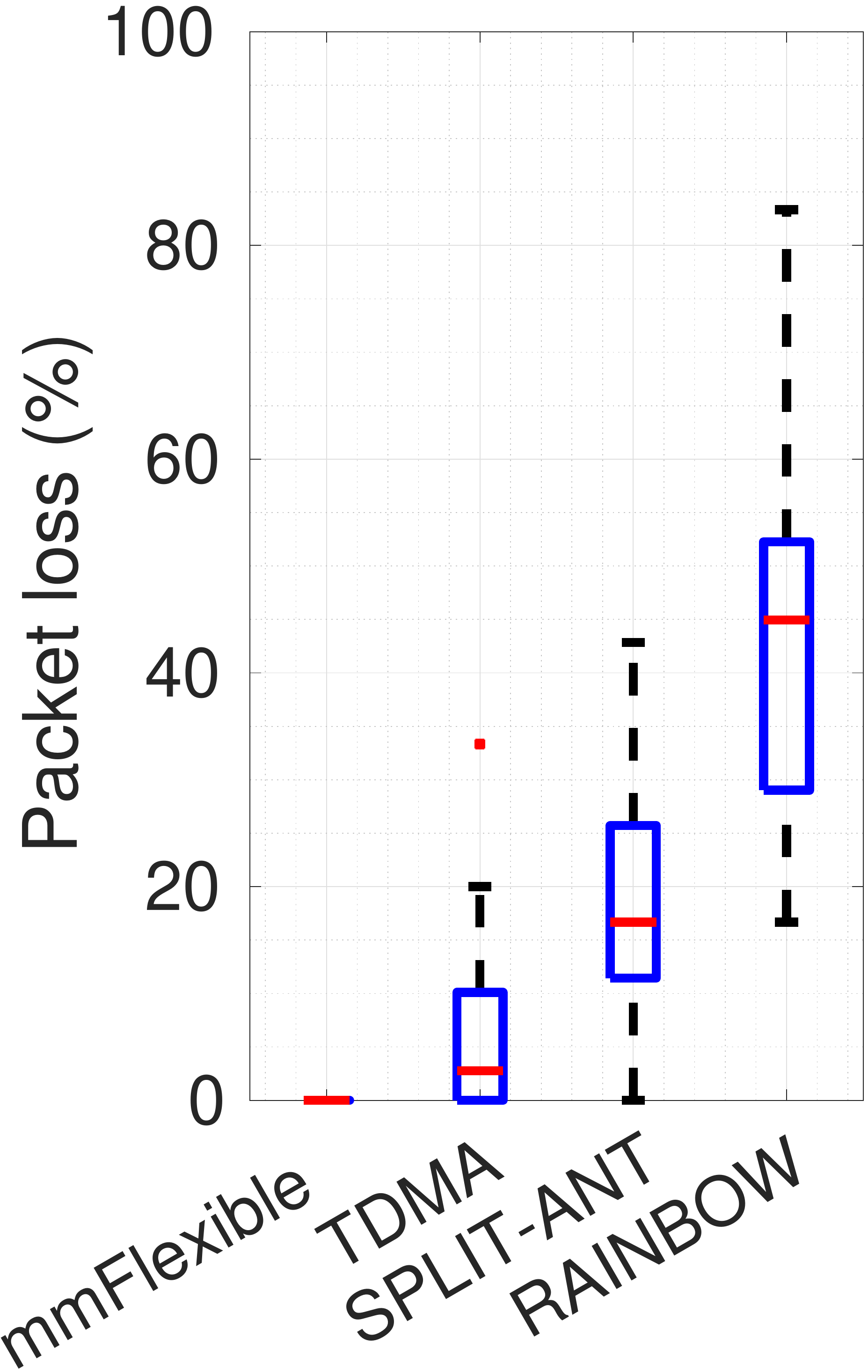}
    \label{fig:droppack_1ms}
  }\hfill
  \subfigure[Thput for $<$1ms latency]{
    \includegraphics[width=.17\textwidth]{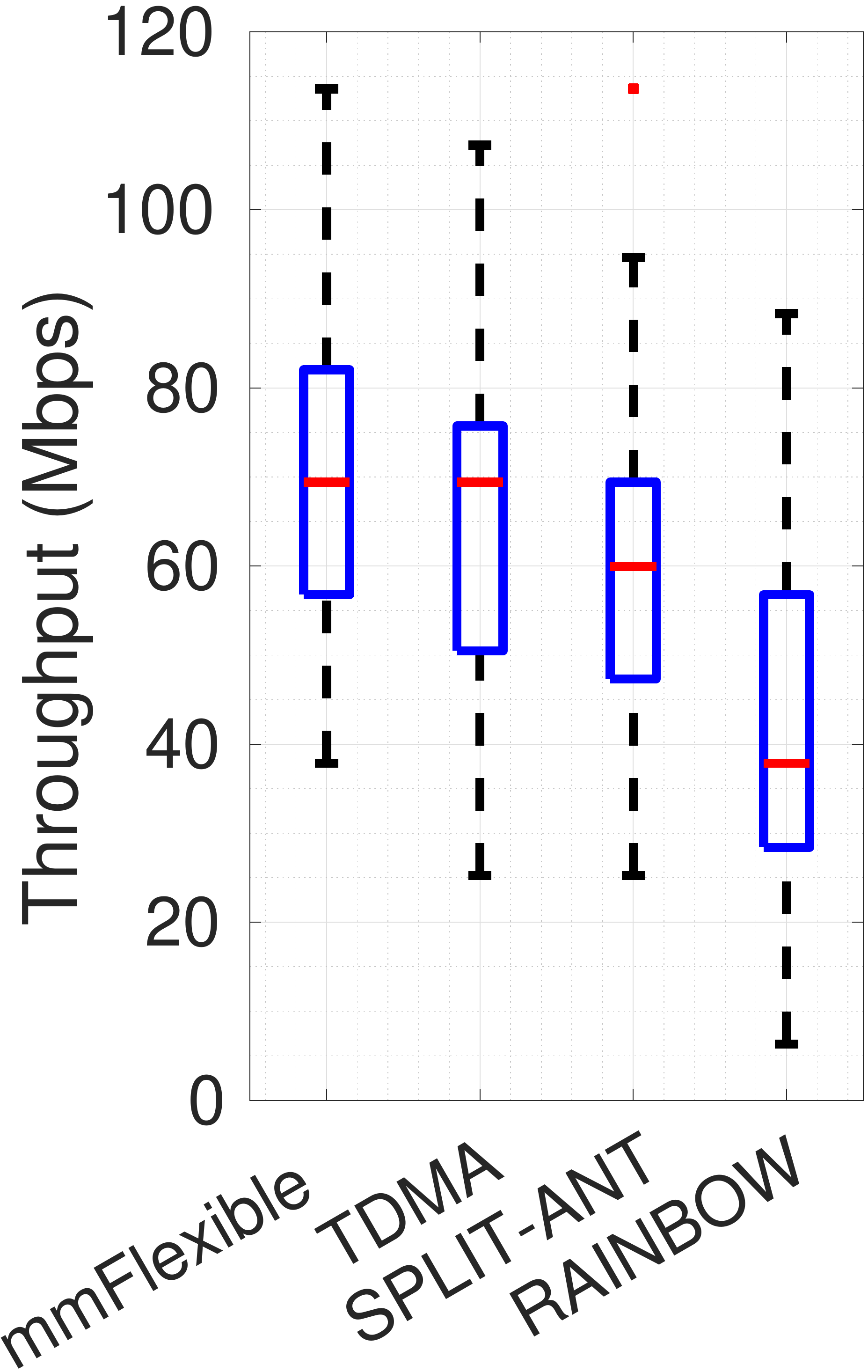}
    \label{fig:throughput_1ms}
  }\hfill
  \subfigure[Loss for $<$0.5ms latency]{
    \includegraphics[width=.17\textwidth]{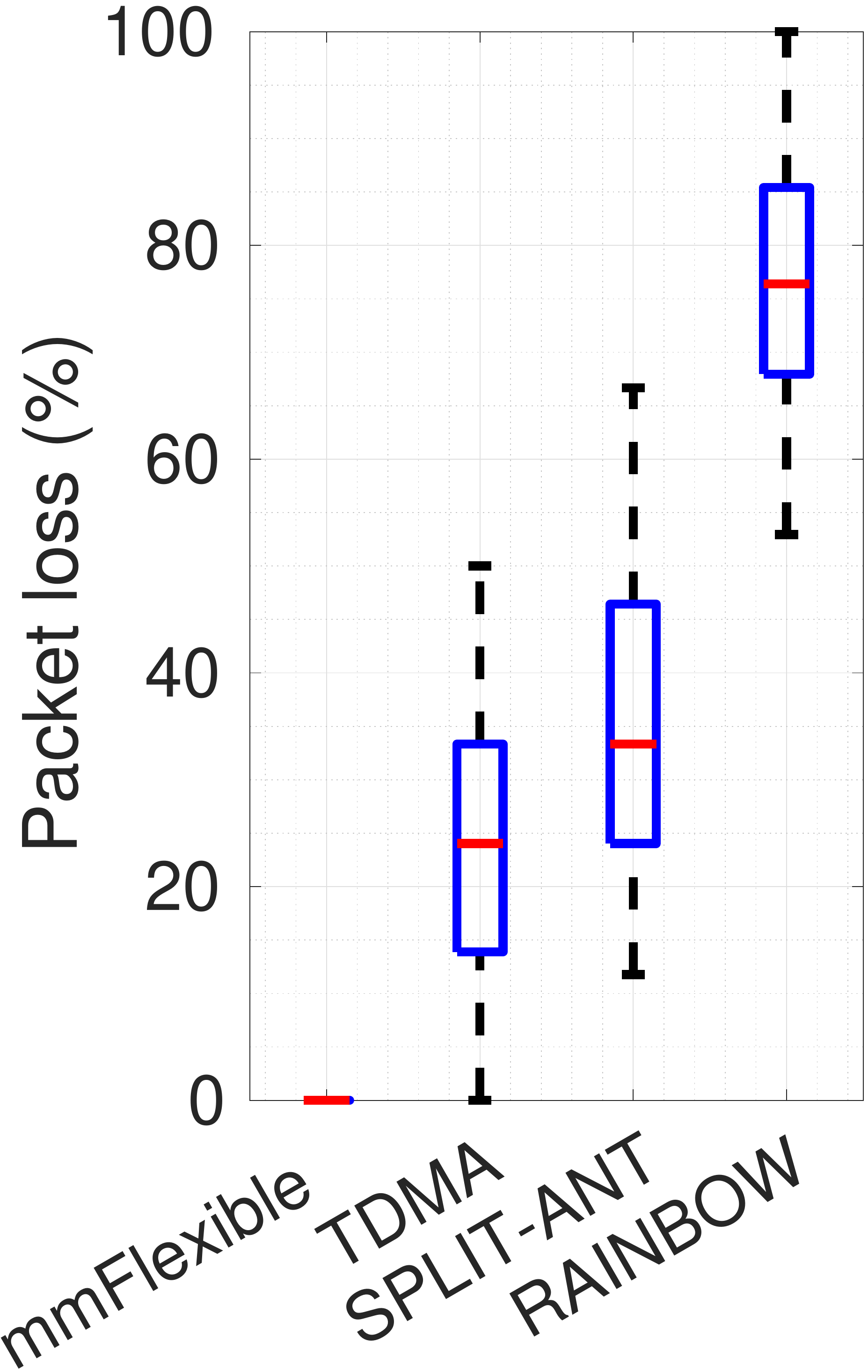}
    \label{fig:droppack_halfms}
  }\hfill
  \subfigure[Thput for $<$0.5ms latency]{
    \includegraphics[width=.17\textwidth]{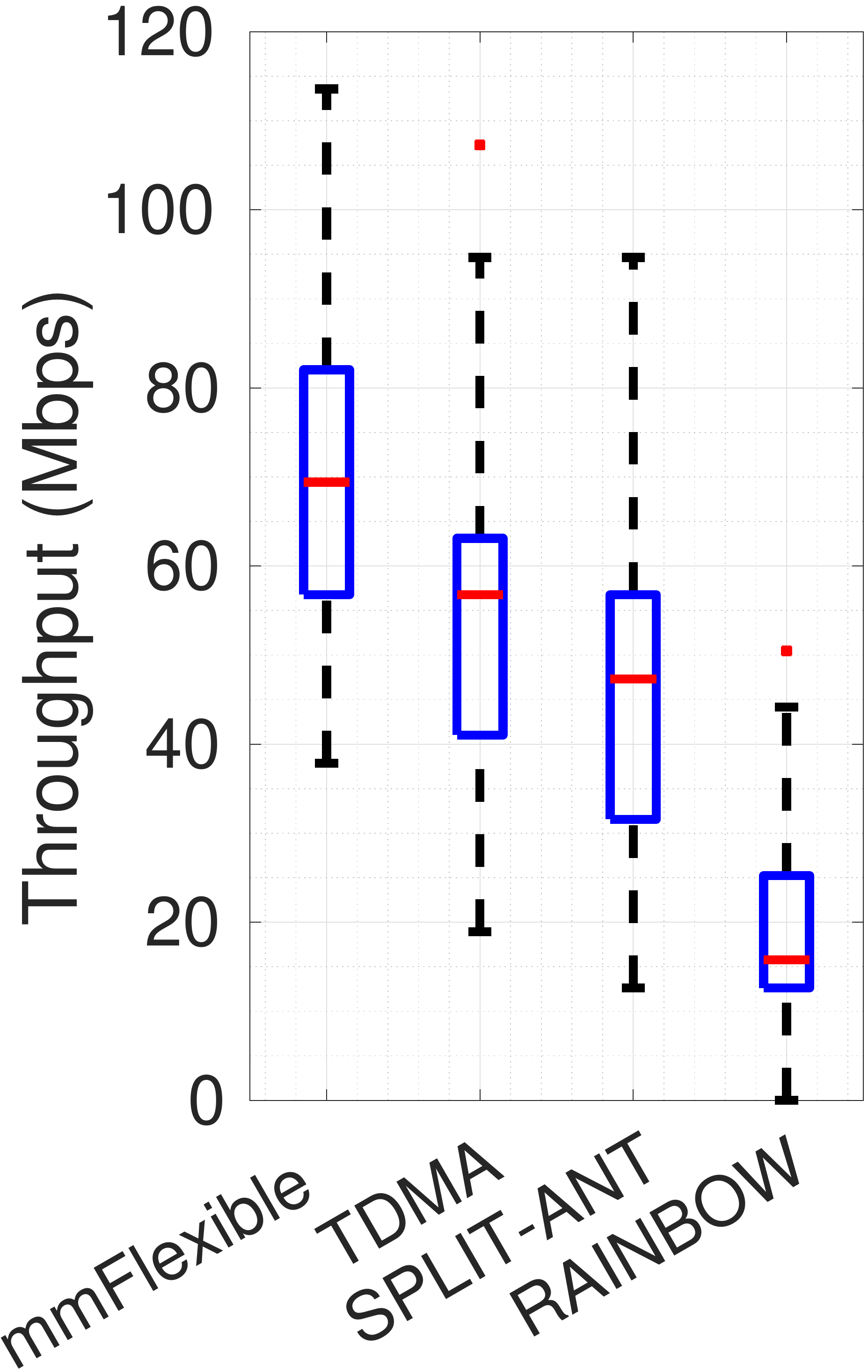}
    \label{fig:throughput_halfms}
  }\hfill
    \caption{End to End performance gain with DPA compared to baselines under different scenarios: (a) \name is able to meet strict latency constraints ($<$ 1 ms), but the baselines are not; (b) to (d) Percentage of packet loss and Throughput performance for latency constraint $<$ 1 ms and $<$ 0.5 ms respectively.
    }
\end{figure*}

\subsection{Implementation and emulation with 28 GHz dataset}

We implement an end-end system of \name with four major components, as illustrated in Fig.~\ref{fig:overview_fsda_scheduler}.
The components of the systems are as follows:
\begin{enumerate}[leftmargin=*]
    \item \noindent\textbf{\emph{Data traffic generator:}} We generate MAC layer packets with the throughput and latency constraints mentioned in the example in Sec.~\ref{sec:motivation} based on~\cite{mangiante2017vr,3gpp2020,qualcomm2017augmented}.  We test with two different latency constraints: 1 ms and 0.5 ms. We use the same traffic generator for our system and all baselines.
    \item \noindent\textbf{\emph{Users angle estimation (initial access):}} We use the \textit{channel collected from mobile 28 GHz testbed~\cite{jain2020mmobile}}, using switched beamforming techniques~\cite{caudill2021real, palacios2018adaptive} for user's angle estimation.  We leverage the existing 5G NR SSB Beam scan~\cite{giordani2018tutorial} using an exhaustive search to estimate angles. The gNB scans 64 beams in the codebook, and each UE reports the best beam index that maximizes the received signal strength from which the gNB determines the user's angle.
    \item \noindent\textbf{\emph{Data Scheduling:}} We implement a Proportional Fair (PF) scheduler~\cite{github-nokia-wireless} to allocate spectrum resources to users on a per-TTI basis. The available $120^\circ$ field of view is mapped into 10 groups with each $12^\circ$ half-power beam width. The scheduler uses user grouping, demand generation information, and SNRs (mapped to CQIs) to determine which user group to support and how many subcarriers to allocate for each user. Throughput is then calculated as a function of the allocated resources and channel to each user. 
    \item \noindent\textbf{\emph{FSDA Algorithm:}} 
     Our system's front-end uses \dpa, requiring delays and phases as input. The FSDA algorithm provides quantized delays and phases, which are applied to the \dpa to generate beams in desired directions and frequency bands. Array gain from the FSDA algorithm or respective baselines is then fed back to the scheduler for SNR computation.
\end{enumerate}
We evaluate the performance of the baselines and \name with a PF scheduler by emulating $2000$ TTIs of $125 \mu s$ each, for a total duration of 0.25 seconds, across all users. The system undergoes the above four-step procedure for each TTI.


\noindent
\textbf{$\blacksquare$ Baselines:} Our paper compares the performance of \name with three baselines: TDMA, Split antennas~\cite{jain2019impact}, and Rainbow-link~\cite{li2022rainbow}. The TDMA approach has the scheduler assign one direction per TTI and beam in that direction over all frequencies. The Split antennas baseline has the scheduler assign one or multiple directions based on SNR \& demand requirements support the given directions in all subcarriers. The Rainbow-link~\cite{li2022rainbow} transmits in all directions regardless of the number of user directions, using only subcarriers in user directions and wasting the rest. 

\subsection{End-to-end Results}

1) \textbf{Latency}: Latency is the time for a packet to travel from source (gNB) to destination (UE) over the air. We evaluate the latency distribution across the baselines and present the results in Fig~\ref{fig:latency_only}. We see that \name has a median latency of 0.2 ms, while TDMA and Split-antenna baselines have a higher median latency of 0.32 ms and 0.26 ms respectively. Our implementation features equal offered throughput for every user direction, which is the best case scenario for Rainbow-link operation; despite this, the median latency for the Rainbow-link is 1.5 ms because of its inability to assign bandwidth to a user that is proportional to its demand. The worst-case latency of \name is well below 1 ms. Notably, all baselines have weighted right tail distributions of latency, making their worst case much worse than 1 ms. The inability of the baseline methods to honor the latency constraint leads to dropped packets and ultimately lower link throughput and reliability.



\begin{figure*}[!t]
    \centering
    \includegraphics[width=0.9\textwidth]{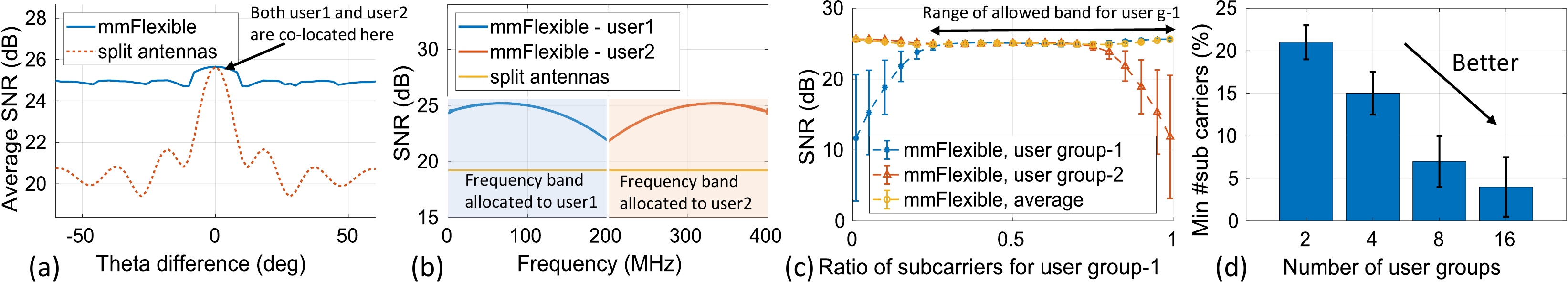}
    \caption{The benchmark performance gain of \name compared to baselines under different scenarios: (a) Angle separation between two users (b) (c) Subcarrier allocation between users (d) How much minimum frequency resources can be allocated without SNR degradation. }
    \label{fig:benchmark_figure}
\end{figure*}

 
 \begin{figure} [!t]
    \subfigure{{\includegraphics[width=0.23\textwidth]{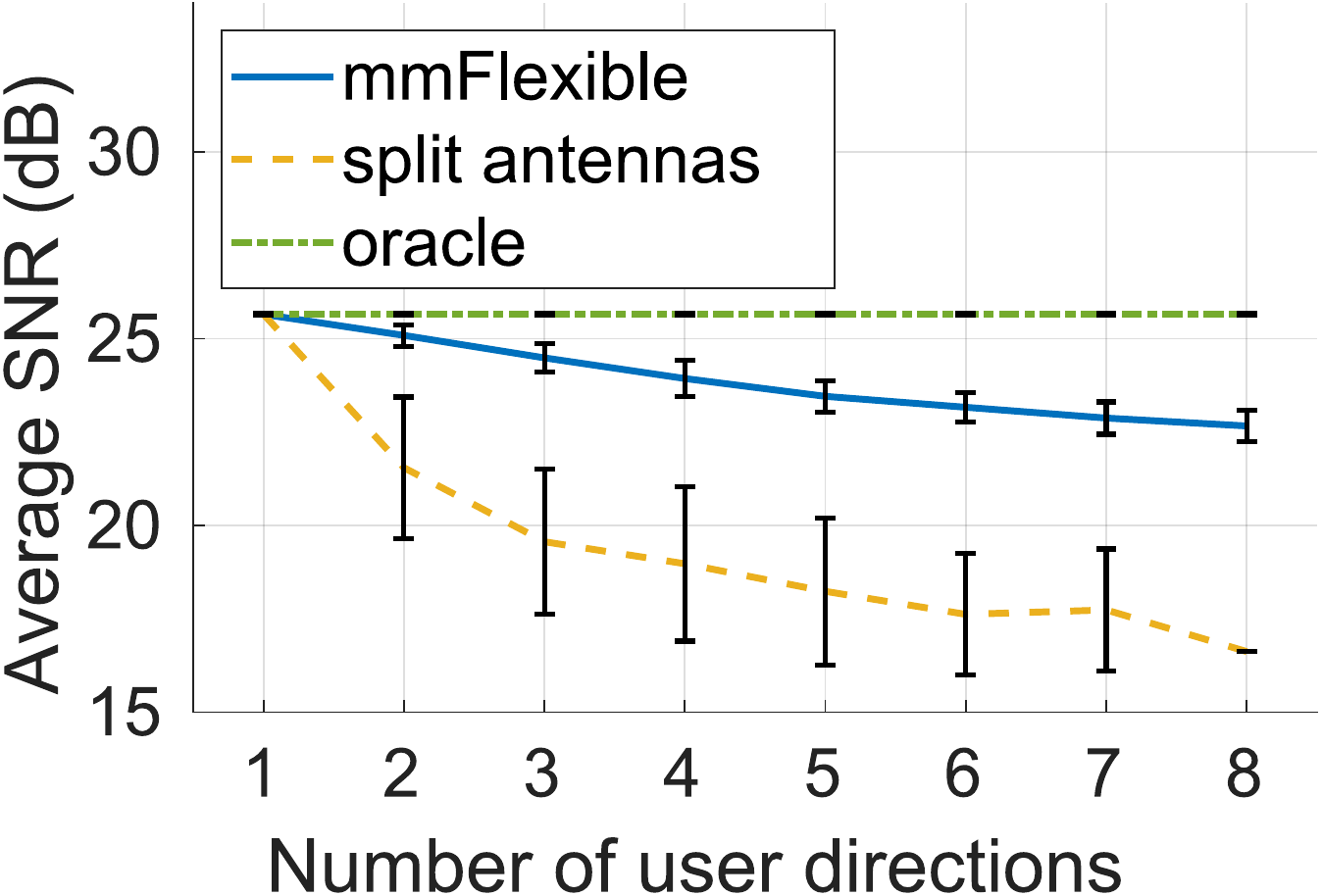}\label{fig:user_variations}}}
    \subfigure{{\includegraphics[width=0.23\textwidth]{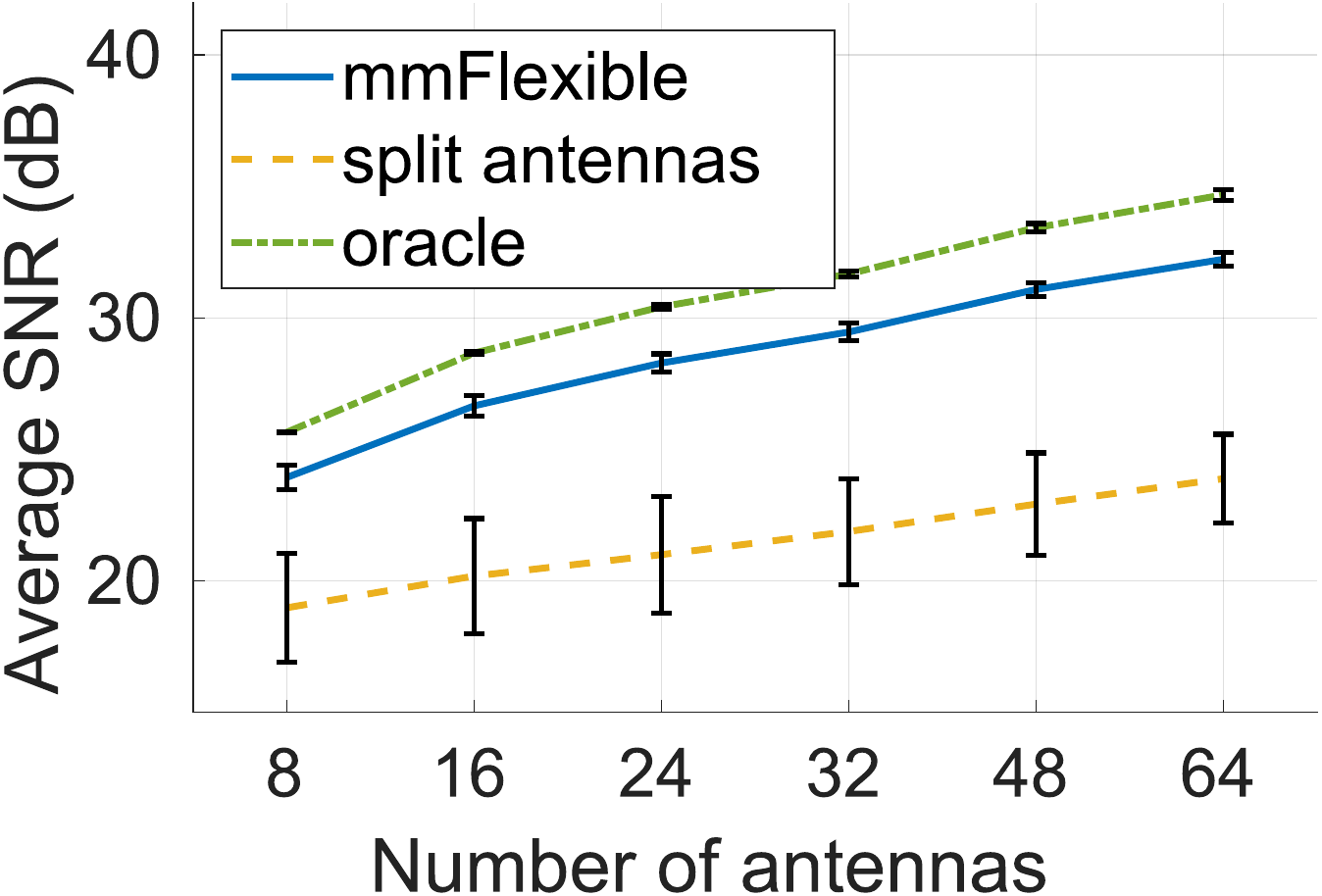}\label{fig:antenna_variations}}}
    \caption{Scaling the \name's performance with (a) Increasing number of user directions and (2) Increasing number of antennas.}
    \vspace{4pt}
 \end{figure} 

    

2) \textbf{Packet loss}: If a packet's latency constraint is not met, then the packet is considered undelivered and lost. In Fig~\ref{fig:droppack_1ms} and Fig~\ref{fig:droppack_halfms}, we see that \name is able to function without any packet loss. However, due to their inability to honor the latency constraints, TDMA, Split-antenna, and Rainbow-link baselines result in a median packet loss of 24.0\%, 33.3\%, and 76.4\% respectively. \name's ability to serve multiple users in different directions enables it to optimally allocate resources without any power degradation and meet both throughput and latency constraints. TDMA is forced to serve one use direction at a time, resulting in a violation of latency constraints. Split-antenna baseline attempts to serve users in multiple directions simultaneously, but suffers from reduced throughput due to SNR degradation, resulting in high latency and packet loss compared to \name and TDMA. The Rainbow-link baseline allocates too few resources to each user and is the slowest and most unreliable in delivering packets.

3) \textbf{Per-user Throughput}:    
As shown in Fig.\ref{fig:throughput_1ms} and Fig.\ref{fig:throughput_halfms}, \name outperforms all three baselines TDMA, Split antennas, and Rainbow-link. TDMA can only support one user group direction out of all the presented user demand directions, reducing its efficiency. However, in scenarios with heavy throughput demand, it performs similarly to \name. The mean throughput of \name is 1.3$\times$ that of TDMA throughput in the case of 0.5 ms latency requirement. The Split-antenna approach serves users in different directions by splitting its antennas, resulting in reduced overall throughput. \name provides 1.5$\times$ more throughput than the split scenario. Rainbow-link performs better only in scenarios with low throughput demand and users in all directions, but performance degrades in all other cases. \name provides 3.9$\times$ more throughput than the Rainbow-link baseline.

\subsection{Benchmarks}
We benchmark various theoretical and systems aspects of \name and compare the results with a split-antenna baseline. We chose the split-antenna baseline as it is closest to \name in creating multiple simultaneous beams in different directions. We present our results by calculating SNR for various scenarios. We use a channel dataset (28 GHz testbed~\cite{jain2020mmobile}) and compute SNR per user for all subcarriers by evaluating over LOS channel model~\cite{3gpp_138_901} with no blockage and equal distance consideration for all users.


1) \textbf{Effect of angular separation between users on SNR:}
We evaluate the performance of \name and a split-antenna approach in serving two users in different locations $(\theta_1, \theta_2)$ with a single RF chain and an equal number of subcarriers. Figure \ref{fig:benchmark_figure}(a) shows the mean SNR of the two users for different angular separations $(\theta_1 - \theta_2)$. We observe that \name performs optimally in all cases compared to the baseline. When the angular separation is close to zero, both users are in the same direction and can be served by a single beam. In these scenarios, both approaches converge to a single beam and perform similarly. The baseline becomes inefficient as the angular separation increases, while \name provides a stable response after $12\degree$ separation with a 3-5 dB higher gain than the baseline approach.


2) \textbf{Does \name have interference due to multiple user transmissions from different directions?}
No, \name is designed to support frequency multiplexing of users in different directions without interference. We evaluated this by comparing \name to a split-antenna baseline for two users separated by $30\degree$ angle and equal bandwidth, as shown in Fig.~\ref{fig:benchmark_figure}(b). Both approaches receive data from user-1 in the 0-200 MHz bandwidth and from user-2 in the remaining 200-400 MHz bandwidth, without sharing any subcarriers between users. This results in no interference from simultaneous multiple-user transmissions for both approaches. Additionally, \name provides a higher gain to users by radiating only in allocated subcarriers as illustrated in Fig.~\ref{fig:baseline_architecture_and_heatmap}(d), resulting in higher SNR than the baseline. The average SNR over all subcarriers shows that \name has a gain of ~5dB compared to the split-antenna approach, even a 3dB higher gain at edge subcarriers.

3) \textbf{Impact of subcarrier allocation on SNR}:
\name can reliably support both low-bandwidth \textit{Ultra-Reliable Low-Latency Communication (URLLC)} and high-bandwidth \textit{AR/VR} devices in different directions simultaneously. Here we evaluate the effect of subcarrier allocation on SNR for two user groups. Fig.\ref{fig:benchmark_figure}(c) shows the SNR achieved with subcarrier allocation to user group-1 from 1\% to 99\% (remaining subcarriers are allocated to user group-2). The average SNR over all users remains at 25 dB and does not depend on the proportion of subcarriers allocated to individual user groups. It is observed that the SNR of user group-1 converges to the average SNR when it is allocated more than 20\% of the total subcarriers. Therefore, to achieve the benefits of \name in a URLLC link, at least 20\% of the total subcarriers should be allocated to a user group. Additionally, as presented in Fig.\ref{fig:benchmark_figure}(d), as the number of user groups or user directions increases, the convergence point occurs at a lower percentage of subcarriers allocated. For instance, when there are 4 users in the system, then each user can be allocated with a minimum 15\% of subcarriers without degrading the SNR performance. This convergence point drops down to 4\% for 16 users which is favorable for URLLC applications which requires low-bandwidth and low latency.



\textit{\textbf{Theoretical Baseline (Oracle):}} 
Oracle is a theoretical entity that can perfectly transmit only in desired subcarriers without any degradation in power at edge subcarriers (as shown in Fig.~\ref{fig:dpa_desired_4beam} desired frequency-space response). 

4) \textbf{Impact of the number of user directions}:
The performance of the \name improves as the number of supported user directions increases. We tested the system using a base station antenna array with 8 transmit/receive antennas, varying the number of user directions from 1 to 8. Fig.~\ref{fig:user_variations} illustrates that the relative gain (difference between the average SNR of \name and split antennas) increases with an increase in user directions. As the number of user directions increases, the split antenna approach divides the antennas per beam, resulting in reduced gain. Conversely, \name transmits power only in desired frequency bands and angular directions resulting in a higher gain. The Oracle creates a digital frequency filter at each antenna, resulting in a perfect frequency-space slicing which ensures that the average SNR remains constant regardless of whether it serves in one direction or multiple directions simultaneously. In contrast, \name has one delay per antenna (for hardware feasibility), which makes it difficult to create an ideal frequency-space slicing, leading to power degradation at the edge subcarriers. Despite this, even after eight splits, the degradation is less than 2.5 dB with the Oracle, and the gain is more than 6 dB higher compared to the split antenna baseline. Error bar in Fig.~\ref{fig:user_variations} indicates average SNR variations with users in different angle separations. 

5) \textbf{Impact of the number of antennas}:
We show that \name performs better with the increase in the number of antennas. We evaluated this hypothesis by serving four users and varying antennas from 8 to 64. Fig.~\ref{fig:antenna_variations} shows gain variations from 8 antennas to 64 antennas for 4 users; it is clearly evident that \name outperforms with the increase in antennas over the split antennas baseline. The error bar in the figure shows the variations in the average SNR when serving four users at different angle separations. Similar to the Oracle, \name's performance remains constant even as the number of antennas increases because the number of frequency splits is determined solely by the number of user directions, which are the same in all cases. 



\section{Related Work} \label{sec:related}

\name builds upon previous work in mmWave and THz communications, but sets itself apart by introducing a system that can perform flexible directional-frequency multiplexing, while maintaining energy efficiency and high performance in terms of range, throughput, and link reliability. To the best of our knowledge, no existing literature has achieved this level of frequency slicing without compromising on performance. \name's unique approach enables efficient use of the entire frequency band, reducing spectrum wastage and providing low latency network access.

\noindent
\textbf{$\blacksquare$ Split antenna phased array:}
Traditional phased array beamforming does not support flexible directional-frequency multiplexing because of a single narrow pencil beam. In the past, split-antenna arrays have been used to create concurrent multi-beams across multiple directions~\cite{jain2021two, aykin2019multi, zhang2018multibeam, hassanieh2018fast, zhu2018high}. However, this approach often results in lower beamforming gain and throughput for each user. The array gain reduces proportionally to the number of beams as the total available power is distributed along multiple directions and across the entire bandwidth. In contrast, \name uses a unique split-beam mechanism with frequency selectivity that radiates only in the desired frequency band, preserving high directivity, signal strength, and throughput.






\noindent
\textbf{$\blacksquare$ True-time delay array architecture:}
Previous work on True-time delay arrays (TTD) has primarily focused on beam steering for ultra-wideband signals~\cite{rotman2016true,garakoui2015compact} and, more recently, single-shot beam training~\cite{yan2019wideband,wadaskar20213d,boljanovic2021fast}, compressive channel estimation~\cite{boljanovic2021compressive}, wideband tracking~\cite{tan2021wideband} and THz communication~\cite{tan2019delay}. However, none of these works address the problem of flexible low-latency multi-user communication. Rainbow-link~\cite{li2022rainbow} uses TTD arrays for multi-user communication, but it is limited to fixed low-throughput IoT applications (limited to 7.8 MHz per user~\cite{li2022rainbow}) and cannot flexibly allocate a large number of subcarriers to a single direction for broadband users.
\name addresses this limitation by using variable delay elements to create arbitrary frequency slicing and the ability to radiate those frequencies in any desired direction. \name's beamforming is orthogonal to previous work in this area, but it can leverage the fast beam training capability of TTD arrays.


Recently, a new approach called Joint Phase-Time Array (JPTA) has been proposed for creating frequency-dependent beams~\cite{ratnam2022joint} that is similar to \name in its goals. However, \name and JPTA have several key differences in their implementation and performance. Firstly, JPTA does not provide any details on the specific hardware architecture and the range of delay values required to support their beamforming patterns. \name is the first work that provides a clear explanation of the DPA hardware architecture and the range of delay values required to support our beamforming patterns. We have shown that our DPA patterns can be realized with shorter delays and independent of the number of antennas making it practical and realizable with current technology, and scalable to large antenna arrays. Secondly, JPTA requires digital processing over each frequency subcarrier, while DPA programming is fully analog and does not require any digital programming over the frequency subcarriers. Finally, JPTA has a high run-time complexity for estimating the delay values which makes it inadequate for real-time operation, while \name provides a closed-form mathematical expression for delays that can be used as plug-and-play with a constant run-time complexity.


\noindent
\textbf{$\blacksquare$ Other front-end mmWave architectures:}
In~\cite{garg202028}, a new mmWave receiver architecture for frequency multiplexing was proposed, which utilizes a network of mixers at each antenna to receive different frequency components from different directions. However, this approach has a fixed hardware structure that is not scalable to support a large number of users, requiring different hardware for different numbers of users. In contrast, \name's delay-phased array (\dpa) is flexible and programmable, providing a scalable solution for supporting a large number of users.
\section{Discussion and Future Work}\label{sec:discussion}



We discuss potential future work ideas:

\noindent
\textbf{$\blacksquare$ Circuit for \dpa:} Implementing circuit delays in mmWave frequencies is challenging due to non-linearity, bandwidth and matching constraints~\cite{cho2018true, hu20151}. In contrast, delay elements can be more accurately implemented at intermediate frequencies (IF) (sub-6 GHz) using techniques such as voltage-time converters~\cite{ghaderi2020four} and switched-capacitor arrays~\cite{ghaderi2019integrated}. Recent work~\cite{ghaderi2019integrated} has shown that efficient mixers, phase-shifters, and IF true-time-delays can be used to make a \dpa that meets the requirements of \name.

\noindent
\textbf{$\blacksquare$ Single RF vs Multi-RF systems:} \name works with a single-RF chain and radiating each frequency component in different directions, while past work on the single-RF system stream all frequencies in one direction. Multi-RF systems (Hybrid arrays)~\cite{shen2020mobility,zhang2019joint} offer freedom to create multi-stream to multi-directions, but each stream carries the entire bandwidth. Thus they also cannot perform flexible directional-frequency multiplexing. Our work can be extended to the multi-RF system where each RF is connected to a \dpa that creates large sectors where each \dpa can serve in a sector with our flexible directional-frequency multiplexing technology.

\noindent
\textbf{$\blacksquare$ Applications beyond flexible frequency multiplexing:} Delay-phased arrays have the potential to enable a plethora of applications in communication and sensing beyond flexible directional-frequency multiplexing. For instance, the ability to create arbitrary and controllable frequency-space beams can help faster localization and tracking of multiple targets. Delay-phased arrays also enable simultaneous communication and sensing paradigms where some frequency bands are used for communication while other bands can be used for sensing. All these applications can be enabled with simple software or firmware updates on the same underlying hardware. We leave these applications for our future work.

\section{Acknowledgements}\label{sec:acks}
We are grateful to the anonymous reviewers their valuable feedback, as well as to the WCSNG group at UC San Diego for their input. The research was supported by NSF \#2211805.

\appendix

\subsection{Generalized closed-form delay and phase expression}\label{app:proof}

In this section, we propose a closed-form mathematical expression for the delays and phases that can be used to generate a desired multi-beam pattern with DPA. This approach eliminates the need for complex computation and allows for real-time operation with run-time complexity reduced to $\order(1)$. Additionally, this mathematical expression also provides insight into the maximum range of delay values that the DPA hardware must support. We show that this range is significantly lower than that required by traditional true-time delay arrays and is independent of the number of antennas, making it scalable for large arrays. We start by deriving the expression for a simple two-beam case and then generalize it for an arbitrary number of beams, beam directions, and beam-bandwidth.




\begin{figure*}[!t]
  \begin{minipage}[b]{0.18\linewidth}
    \centering
    \includegraphics[width=\linewidth]{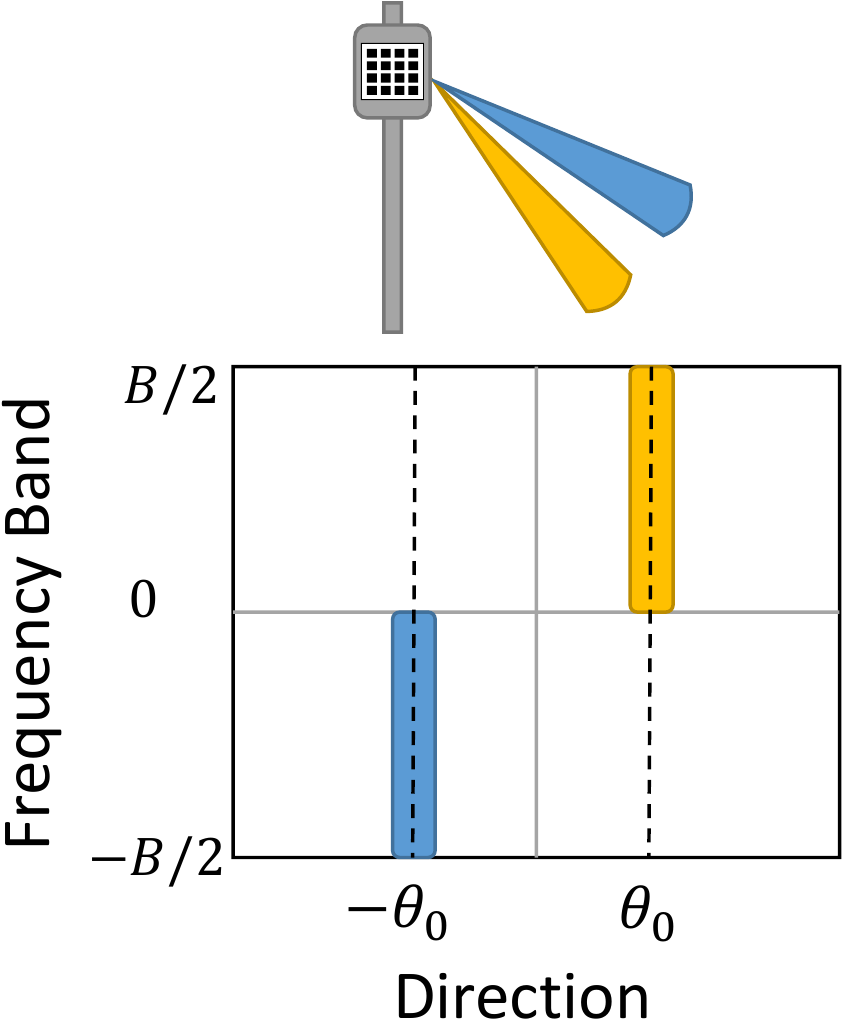}
    \caption{Desired frequency-space beam response for 2 users at directions $-\theta_0$ and $\theta_0$.}
    \label{fig:proof_desired_response}
  \end{minipage}
  \hspace{0.01\linewidth}
  \begin{minipage}[b]{0.79\linewidth}
    \centering
    \includegraphics[width=\linewidth]{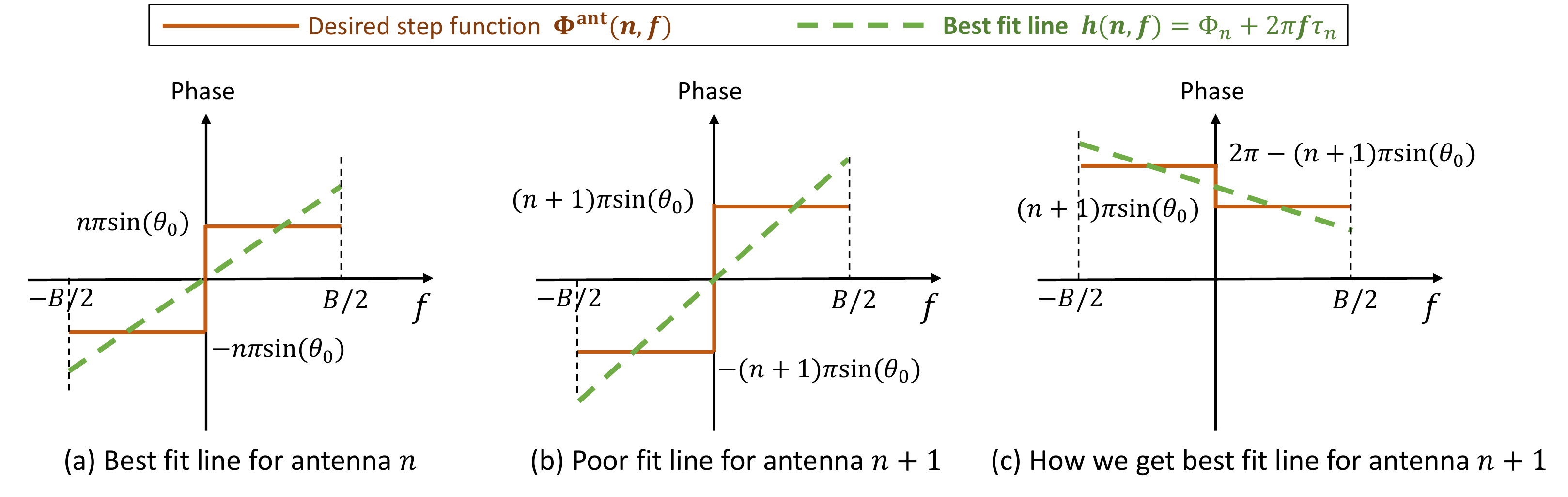}
    \caption{
    Sketch of proof: We show the phase response for each antenna is a step function with a variable step size that depends on the antenna index, $n$. (a) The objective is to fit a line to this step function in such a way that the error in the line fitting is minimized. (b) we show that as the antenna index increases to say, $n+1$, the step size of the step function also increases linearly with the antenna index. An increase in step-size leads to higher errors in line fitting. (c) 
    We address this issue by reducing the step size by $2\pi$, which will not change the desired phase response but will help to reduce the error.}
    \label{fig:proof_best_line_fit}
  \end{minipage}
\end{figure*}

\subsubsection{Simple two-beam case}
Here we derive what set of delays and phases per antenna would give us the beamforming gain pattern with the desired beam-bandwidth and beam direction for the two-beam case. We consider a simple case of two beams with equal beam-bandwidth of $B/2$ each, where $B$ is the total system bandwidth. We assume the two beams are directed along $(-\theta_0, \theta_0)$ respectively, as shown in Figure \ref{fig:proof_desired_response}. We will later discuss a general case with arbitrary beam-bandwidth and beam directions.

\begin{theorem}\label{th:2beamcase}
\textbf{(2-beam case)}
The closed-form expression for the set of delays $\tau_n$ and phases $\Phi_n$ for each antenna $n$ ($n=0,1,\ldots, N-1$) that would generate a given two-beam response with equal beam-bandwidth $B/2$ and angles $\pm \theta_0$ respectively is as follow:
\begin{equation} \label{eq:tau_closed_proof}
    \tau_n = \left(\frac{3}{2B}n\sin(\theta_0) +\frac{3}{4B}\right)\;\;\;\text{mod }\frac{3}{2B} 
\end{equation}

\begin{equation}\label{eq:phase_closed_proof}
    \Phi_n = \text{round}(n\sin(\theta_0))\pi \;\;\;\text{mod }2\pi
\end{equation}

\end{theorem}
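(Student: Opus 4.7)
The plan is to derive the expressions in \eqref{eq:tau_closed_proof}--\eqref{eq:phase_closed_proof} by framing the per-antenna design as a least-squares line fit modulo $2\pi$, exactly as sketched in Figure~\ref{fig:proof_best_line_fit}. Constructive combining at the two desired beams requires (from the relation $\Phi_n - 2\pi f\tau_n = n\pi\sin\theta$ derived in Section~\ref{sec:design}) that the phase contribution from antenna $n$ equal the step function
\begin{equation*}
\phi_n^{\mathrm{ideal}}(f) \;=\; n\pi\sin\theta_0\,\mathrm{sign}(f), \qquad f\in[-B/2, B/2],
\end{equation*}
(up to sign convention on the beam/band assignment), whereas DPA can realize only affine responses $\Phi_n - 2\pi f\tau_n$. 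I would therefore seek $(\Phi_n,\tau_n)$ that best approximate this step function, with the error measured modulo $2\pi$ since phases are circular.

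For the fit, I would exploit that on each half-band the ideal phase may be augmented by any integer multiple of $2\pi$ without changing the beam response: replacing $\phi_n^{\mathrm{ideal}}(f)$ with $\phi_n^{\mathrm{ideal}}(f)+2\pi k^{+}\mathbb{1}_{\{f>0\}}+2\pi k^{-}\mathbb{1}_{\{f<0\}}$ yields an effective step of $2\pi(n\sin\theta_0+k^{+}-k^{-})$ and a mean of $\pi(k^{+}+k^{-})$. Choosing $k^{+}-k^{-}=-r$ with $r:=\mathrm{round}(n\sin\theta_0)$ minimizes the residual step magnitude to at most $\pi$. The standard normal equations, using $\int_{-B/2}^{B/2} f\,\mathrm{sign}(f)\,df=B^{2}/4$ and $\int_{-B/2}^{B/2} f^{2}\,df=B^{3}/12$, then give intercept $\Phi_n = \pi(k^{+}+k^{-})$ and slope $-2\pi\tau_n=(3\pi/B)(n\sin\theta_0-r)$. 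A parity argument---$k^{-}-k^{+}=r$ forces $k^{+}+k^{-}$ to share the parity of $r$---reduces this to $\Phi_n\equiv r\pi\pmod{2\pi}$, which is \eqref{eq:phase_closed_proof}. Rearranging the slope gives $\tau_n = 3(r-n\sin\theta_0)/(2B)\in[-\tfrac{3}{4B},\tfrac{3}{4B}]$.

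The last step is to make $\tau_n$ non-negative and to arrive at the stated mod form. Since adding a common delay to every antenna only produces a global phase, I can shift $\tau_n$ by $\tfrac{3}{4B}$ to land in $[0,\tfrac{3}{2B}]$; and because $r\in\mathbb{Z}$, the term $3r/(2B)$ is absorbed into a $\mathrm{mod}\,\tfrac{3}{2B}$ wrap, yielding \eqref{eq:tau_closed_proof}. The step I expect to be the main obstacle is this modular reduction: one must verify carefully that (i) $k^{+}-k^{-}=-\mathrm{round}(n\sin\theta_0)$ really minimizes the $L^{2}$ cost among all integer choices of $(k^{+},k^{-})$, (ii) the two free choices with that difference yield the same $\Phi_n$ modulo $2\pi$, and (iii) the delay period $\tfrac{3}{2B}$ coincides exactly with the $\tau_n$ shift induced by stepping $r$ by one, so that the $\mathrm{mod}\,\tfrac{3}{2B}$ wrap on $\tau_n$ is consistent with the rounding that defines $\Phi_n$. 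Once these consistency checks are in place, the closed-form expressions follow.
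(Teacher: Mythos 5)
Your proposal is correct and follows essentially the same route as the paper's proof: approximate the per-antenna step-function phase target by an affine response via least squares, use $2\pi$ wrapping with $k=\mathrm{round}(n\sin\theta_0)$ to bound the residual step, and then shift by $\tfrac{3}{4B}$ and reduce modulo $\tfrac{3}{2B}$. The only differences are cosmetic --- you evaluate the normal equations with continuous integrals where the paper discretizes into $M$ bins and takes $M\to\infty$, and your two-sided wrapping with the parity argument for $k^{+}+k^{-}$ is a slightly cleaner justification that $\Phi_n$ is well defined modulo $2\pi$; the residual sign ambiguity you flag is just the paper's own $\pm f$ convention for which half-band serves which angle.
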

We emphasize key takeaways from these expressions before jumping into their proof. Note that the delay values are bounded within a range of $\frac{3}{2B}$ independent of the number of antennas. Within this range, the delay values will monotonically increase or decrease with antenna index $n$, but for large $n$, the delay wraps around with this range factor. This allows for a smaller range of delay values in the DPA hardware independent of the number of antennas. 

We now provide proof for the delay and phase expression and throw more insights into the bounded nature of delay and phase values. 



\begin{proof}[Proof for Theorem~\ref{th:2beamcase}] \textbf{(2-beam case)}
We provide a derivation along with high-level intuition on obtaining a closed-form expression for the delay and phase values. We will first formulate the objective function as an NP-hard problem and then provide an alternate optimization strategy as a set of linear equations that best approximates the solution. We first define the beamforming gain function as a function of frequency $f$ and direction $\theta$ and then look for maximizing this function at the desired frequency-direction pairs. The beamforming gain is given by:
\begin{equation}
    G(f,\theta) = \sum_{n=0}^{N-1} e^{j\Phi_n+j2\pi f\tau_n} e^{-jn\pi\sin(\theta)}
\end{equation}

Now, the objective is to maximize the beamforming gain $ \left\|G(f,\theta)\right\|^2$ at desired beam-bandwidth at given directions $\pm\theta_0$ as follows:
\begin{equation}
\begin{split}
    &\max_{\tau_n,\Phi_n} \quad\left\|G(f,\theta)\right\|^2\\
    & s.t. \quad  \theta = \begin{cases}
                -\theta_0 & f\in [\frac{-B}{2},0)\\
                \theta_0 & f \in [0,\frac{B}{2}]
            \end{cases}
\end{split}
\end{equation}
Notice that the direction and frequency have a non-linear relationship in the constraints. Specifically, the direction $\theta$ is a non-linear step function of frequency $f$, with a jump at frequency $f=0$. It jumps from the value $-\theta_0$ to $+\theta_0$ at this frequency. Because of this non-linear dependence of direction with frequency, the underlying optimization problem is NP-hard and cannot be solved optimally. We provide insight into the problem from a different angle and formulate a near-optimal optimization that can be solved to a closed-form expression.

To simplify the above optimization, We define two new functions $h(n,f)$ and $\Phi^\ant(n,f)$ below to simplify the non-linear constraint and re-write the optimization problem as follows:
\begin{equation}\label{eq:constraint}
\begin{split}
    \max_{\tau_n,\Phi_n} &\left\|\sum_{n=0}^{N-1} e^{j h(n,f)} e^{-j\Phi^\ant(n,f)}\right\|^2\\
     s.t. \quad  &h(n,f)=\Phi_n+2\pi f \tau_n\\
    \text{and}\quad &\Phi^{\text{ant}}(n,f) = \begin{cases}
        -n\pi\sin(\theta_0) & f\in[\frac{-B}{2},0]\\
        n\pi\sin(\theta_0) & f\in(0,\frac{B}{2}]\\
    \end{cases}
\end{split}
\end{equation}
where $h(n,f)$ is a function of variable phase $\Phi_n$ and delay $\tau_n$ at antenna $n$ and the function $\Phi^{\text{ant}}(n,f)$ represents the constraints from the desired frequency-direction response. Let's see how these functions help us to simplify the optimization problem. Specifically, we apply triangle inequality to find an upper bound on the optimization variable and then maximize this upper bound.  Triangle inequality states that the `norm-of-sum is upper bounded by sum-of-norms', which we can apply to our optimization as:
\begin{equation}
    \left\|\sum_{n=0}^{N-1} e^{j h(n,f)} e^{-j\Phi^\ant(n,f)}\right\| \le \sum_{n=0}^{N-1}\left\| e^{j h(n,f)} e^{-j\Phi^\ant(n,f)}\right\| = N
\end{equation}

So, the expression is maximized if each term in the sum is unity, i.e., $e^{j h(n,f)} e^{-j\Phi^\ant(n,f)}=1$, or, in other words, the two exponential terms are equal, i.e., $h(n,f) = \Phi^\ant(n,f)$ for each antenna and for each frequency. It is impossible to achieve this solution for all frequencies because the two functions $h$ and $\Phi^\ant$ vary differently with frequency; $h$ is linear, while $\Phi_\ant$ is a step function as shown in Figure~\ref{fig:proof_best_line_fit}(a). The only case when the optimal results are possible is when the step size of the step function $\Phi^\ant$ is zero, i.e., the two beams align to the same angle. In this case, the step function is reduced to a line, and optimal $h$ can be obtained. However, this case only produces a single beam without any dependence on frequency. A natural question is how we can obtain general frequency-dependent multi-beams. We propose an optimization framework that can help to find a closed-form expression for delays and phases. Our optimization problem is formulated in a way that finds the line $h$ that best fits the given step function $\Phi^\ant$. We achieve this by solving the following optimization problem on a per-antenna basis:
\begin{equation}
\begin{split}
        &\min_{\tau_n,\Phi_n} ||h(n,f) -  \Phi^\ant(n,f)||^2\\
\end{split}
\end{equation}
We can visualize this optimization in Figure \ref{fig:proof_best_line_fit}(a), where the line $h(n,f)$ is fit over the step function $\Phi^\ant(n,f)$. The slope of the best-fit line gives the delay value, and the y-intercept gives the phase value. In this way, we can estimate both delay and phase values by solving for the best-fit line.

However, as antenna index $n$ increases, the error in line fitting also increases due to the linear increase in the step size with $n$, as shown in Figure~\ref{fig:proof_best_line_fit}(b). This could lead to high error for large antenna arrays and limit our solution to scale with antennas. We have an innovative and simple solution to address this issue.  To address this issue, we utilize the concept of wrapping the phase of a signal by $2\pi$, i.e., adding an integer multiple of $2\pi$ to the phase does not change the signal. We use this idea to strategically add a phase of multiple of $2\pi$ to a specific set of frequencies in order to minimize the error in line fitting as shown in Figure~\ref{fig:proof_best_line_fit}(c). With this insight, we redefine the step function $\Phi^\ant$ as:
\begin{equation}
    \Phi^{\text{ant}}(n,f) = \begin{cases}
        k2\pi-n\pi\sin(\theta_0) & f\in[\frac{-B}{2},0]\\
        n\pi\sin(\theta_0) & f\in(0,\frac{B}{2}]\\
    \end{cases}
\end{equation}
where $k$ is a constant integer. A natural question is how do we estimate this integer to minimize the error in line fitting? Our solution is a two-step process: we solve for the delays and phases as a function of $k$ and then find the optimal value of $k$ to minimize the error. We now describe how we solve for the best-fit line to estimate the per-antenna delay and phase values.

To solve for per-antenna delays and phases, we form a system of linear equations. We first discretize the frequency as $f=m\Delta f$ for $m\in [-M/2, M/2]$, where the bandwidth is $B=(M+1)\Delta f$. Note that there are $M$ frequency bins that can be a large number, i.e., $M\rightarrow \infty$ for creating a continuous frequency axis. We then formulate a set of linear equations for each frequency term to solve for the variable delay $\tau_n$ and phase $\Phi_n$ for each antenna $n$. Specifically, we have the following linear equations:
\begin{equation}
    \Phi_n + 2\pi m\Delta f\tau_n = \Phi^\ant(n,m\Delta f) \;\forall m\in [-M/2, M/2]
\end{equation}
We re-write the equations in a matrix form as follows:
\begin{equation}
    Ax=b
\end{equation}
where $x$ is a $2 \times 1$ vector of variable phase and delay given by:
\begin{equation}\label{eq:x_def}
    x =   \begin{bmatrix}
            \Phi_n\\
            2\pi \Delta f\tau_n
     \end{bmatrix} 
\end{equation}
and the matrix $A$ and vector $b$ are constants given by:
\begin{equation}
    A =  \begin{bmatrix}
           1&\frac{-M}{2} \\
           1&\frac{-M}{2}+1 \\
           \vdots&\vdots \\
           1&\frac{M}{2}
     \end{bmatrix}    
\end{equation}
\begin{equation}
    b = \begin{bmatrix}
         \underbrace{k2\pi-\phi \;  \ldots \;k2\pi-\phi}_{ M/2} \;\underbrace{\phi \ldots \phi}_{ M/2+1}
         \end{bmatrix}^T
\end{equation}

where $\phi=n\pi \sin(\theta_0)$ is introduced for simplicity. Specifically, we add $k2\pi$ to the first half of the frequency subcarriers, and this is reflected in the value of $b$.


The solution can be obtained by solving a system of linear equations as follow:
\begin{equation}
    x = (A^TA)^{-1} A^Tb
\end{equation}

We first solve for $(A^TA)^{-1}$ and $A^Tb$ separately and then multiply them together to get $x$:

\begin{equation}
\begin{split}
    A^TA &= \begin{bmatrix}
        1&1&\hdots&1\\
        \frac{-M}{2}&\frac{-M}{2}+1&\hdots&\frac{M}{2}
    \end{bmatrix}
    \begin{bmatrix}
           1&\frac{-M}{2} \\
           1&\frac{-M}{2}+1 \\
           \vdots&\vdots \\
           1&\frac{M}{2}
     \end{bmatrix} \\
    &= \begin{bmatrix}
           M+1 & 0\\
           0 & 2\sum_0^{\frac{M}{2}}k^2
    \end{bmatrix} = \begin{bmatrix}
           M+1 & 0\\
           0 & \frac{M(M+1)(M+2)}{12}
    \end{bmatrix}
\end{split}
\end{equation}

Taking the inverse of the above 2x2 matrix, we get

\begin{equation}
    (A^TA)^{-1} = \begin{bmatrix}
           \frac{1}{M+1}&0\\
           0 & \frac{12}{M(M+1)(M+2)}
    \end{bmatrix}
\end{equation}

This solves for $A^TA$. Next, we obtain $A^Tb$ as:


\begin{equation}
\begin{split}
    A^Tb &=  \begin{bmatrix}
        1&1&\hdots&1\\
        \frac{-M}{2}&\frac{-M}{2}+1&\hdots&\frac{M}{2}
    \end{bmatrix}\\&\times
    \begin{bmatrix}
         \underbrace{k2\pi-\phi \;  \ldots \;k2\pi-\phi}_{ M/2} \;\underbrace{\phi \ldots \phi}_{ M/2+1}
         \end{bmatrix}^T\\
     & = \begin{bmatrix}
          \phi+kM\pi\\
          (2\phi-k2\pi)\sum_1^{M/2}m
     \end{bmatrix}
     \\&= \begin{bmatrix}
          \phi+kM\pi\\
          (\phi-k\pi)\frac{M(M+2)}{4}
     \end{bmatrix}
     \end{split}
\end{equation}
This solves for $A^Tb$. We now obtain the solution for unknown $x$ as follows:
\begin{equation}
\begin{split}
    x &= (A^TA)^{-1}A^Tb\\
    &= \begin{bmatrix}
           \frac{1}{M+1} &0\\
           0 & \frac{12}{M(M+1)(M+2)}
    \end{bmatrix}
    A^Tb
     \\& = \begin{bmatrix}
          A^Tb[1]\frac{1}{M+1}\\
          A^Tb[2]\frac{12}{(M(M+1)(M+2))}
     \end{bmatrix}
\end{split}
\end{equation}
Finally, using the definition of $x$ from (\ref{eq:x_def}), we get the solution for the per antenna phase and delay. The phase is given by $x[1]$ and delay is given by x[2] as follows:  
\begin{equation}\label{eq:phase2sol}
    \begin{split}
        \Phi_n &= \lim_{M\rightarrow\infty}x[1] = \lim_{M\rightarrow\infty}A^Tb[1]\frac{1}{M+1}\\
        &=\lim_{M\rightarrow\infty}\frac{\phi+kM\pi}{M+1}\\
        & = k\pi
    \end{split}
\end{equation}
where the approximation is taken for a large number of frequency bins, i.e., $M\rightarrow\infty$. Similarly, we get the expression for the delay:
\begin{equation}\label{eq:del2sol}
    \begin{split}
        \tau'_n &= \frac{x[2]}{2\pi\Delta f} 
        = A^Tb[2]\frac{12}{M(M+1)(M+2)}\frac{1}{2\pi\Delta f}\\
        &=(\phi-k\pi)\frac{M(M+2)}{4}\frac{12}{M(M+1)(M+2)}\frac{1}{2\pi\Delta f}\\
        &=\frac{3(\phi-k\pi)}{2\pi\Delta f (M+1)}\\
            &=\frac{3(n\sin(\theta_0)}{2B} - \frac{3k}{2B}\\
            &=\frac{3}{2B}(n\sin(\theta_0)-k)
    \end{split}
\end{equation}

This gives a closed-form expression of delay. Note that the expression for delay and phase both depend on the unknown constant integer $k$. 
So, how do we solve for $k$ to obtain a generalized formula for delay and phases? Out of many possible solutions, because of the presence of a random integer $k$, we need to choose the value of $k$ that minimizes the error in line fitting. To solve for $k$, we make an  observation that the step size of the step function is part of the exponential and therefore is bounded by $2\pi$.
Let's find the condition when the step size is bounded between $-\pi$ and $\pi$ as follows:
\begin{equation}
    \begin{split}
        &-\pi<\phi-(-\phi+k2\pi) \le \pi\\
        \Rightarrow\;\;&\pi<2\phi-k2\pi \le \pi\\
        \Rightarrow\;\;&\pi<2n\pi\sin(\theta_0)-k2\pi \le \pi\\
        \Rightarrow\;\;&\frac{1}{2}<n\sin(\theta_0)-k \le \frac{1}{2}\\
        \Rightarrow\;\;&\sin(\theta_0)-\frac{1}{2}<k \le \sin(\theta_0)+\frac{1}{2}\\
        \Rightarrow\;\;&k = \text{round}(n\sin(\theta_0))\\
    \end{split}
\end{equation}
Since $k$ is an integer, the only possible value of $k$ is given by $\text{round}(n\sin(\theta_0))$. This gives us a unique solution for the integer constant $k$. Putting the in the expression of phase and delay, we get the required phase:
\begin{equation}\label{eq:phase_derive}
    \Phi_n = \text{round}(n\sin(\theta_0))\pi 
\end{equation}
Similarly, we obtain the final expression of delay as:
\begin{equation} \label{eq:tau_derive_noshift}
    \begin{split}
        \tau'_n &= \frac{3}{2B}(n\sin(\theta_0)-\text{round}(n\sin(\theta_0)))
    \end{split}
\end{equation}
We notice the above expression of delay varies in the range of $\frac{-3}{4B}$ to $\frac{3}{4B}$, resulting in a total range of $\frac{3}{2B}$. Since negative delays are not possible to generate in a causal system, we can add a constant delay to all antennas to make delays positive. The minimum constant delay factor that we can add without compromising on the performance is $\frac{3}{4B}$. After adding this factor and simplifying the delay expression, we get the following: 
\begin{equation} \label{eq:tau_derive}
    \begin{split}
        \tau_n &= \tau'_n + \frac{3}{4B}\\
        &=\frac{3}{2B}(n\sin(\theta_0)-\text{round}(n\sin(\theta_0))) +\frac{3}{4B}\\
            &=\left(\frac{3}{2B}n\sin(\theta_0) +\frac{3}{4B}\right)\;\;\;\text{mod }\frac{3}{2B} 
    \end{split}
\end{equation}
where the last step is a simplification of the round function into a modulo function for ease of understanding and emphasizing that the range of values of delays is independent of the number of antennas. The delay range is $\frac{3}{2B}$ for two beam case which depends inverse
This is how we estimate the final expression of optimal phase $\Phi_n$ and delay $\tau_n$ in (\ref{eq:phase_closed_proof}) and (\ref{eq:tau_closed_proof}), respectively. 

\end{proof}


\begin{figure}[!t]
    \centering
    \includegraphics[width=\columnwidth]{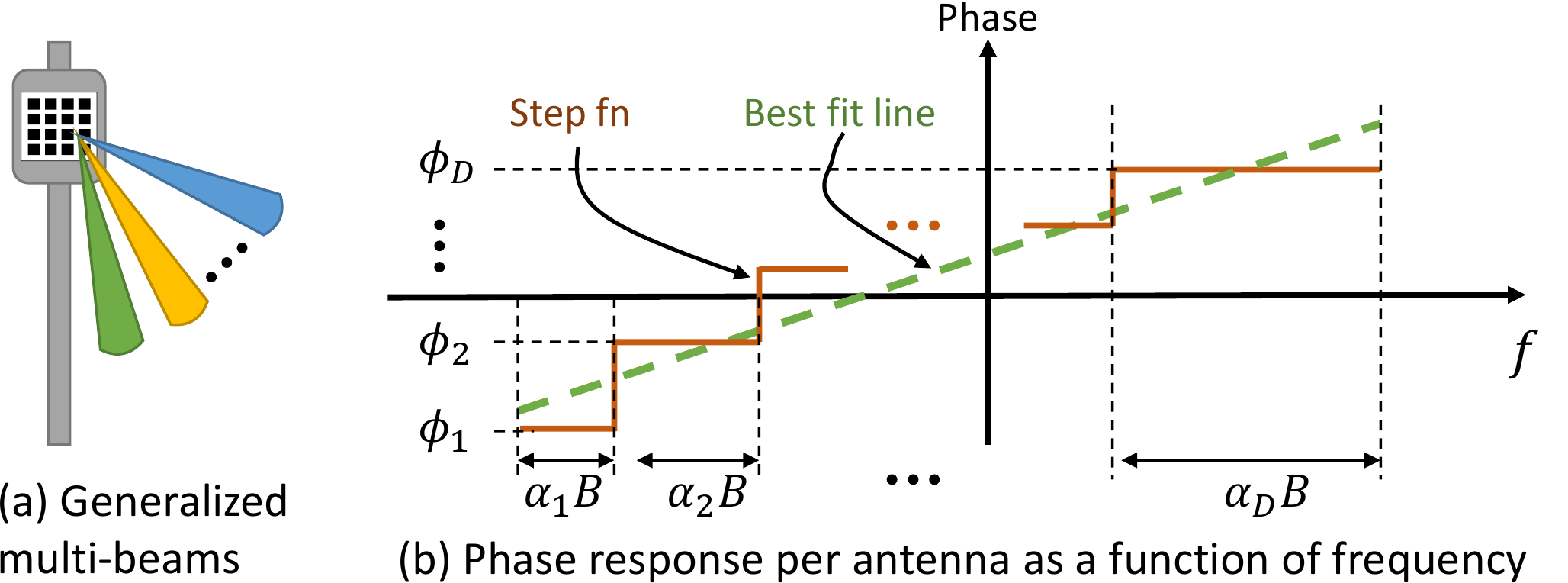}
    \caption{Generalized setting for multi-beams with an arbitrary number of beams $D$, beam angles, and beam-bandwidths. The beam angles contributes to $\phi_d$, and the beam-bandwidth is defined as $\alpha_d B$ for system bandwidth $B$ and fraction $\alpha_d<1$ for beam index $d\le D$.}
    \label{fig:proof_generalized}
\end{figure}
\subsubsection{Generalization to an arbitrary number of beams, beam directions, and beam-bandwidth}
We now generalize the beamforming response to an arbitrary number of beams with arbitrary beam directions and arbitrary beam-bandwidths as shown in Fig.~\ref{fig:proof_generalized}.

\begin{theorem}\label{th:generalized}
\textbf{(Generalized case):} Let there are $D$ beam directions with beam angles $\theta_d$ and beam-bandwidth $\alpha_d B$ for $\sum_d \alpha_d = 1$. We define $\phi_d = n\pi\sin(\theta_d)$ for simplicity. The per-antenna phases and delays in realizing such a generalized beamforming response is given by:
\begin{equation}\label{eq:phase_gen}
    \Phi_n = \sum_{d=1}^D\alpha_d(\phi_d+2k_d\pi)
\end{equation}
\begin{equation}\label{eq:delay_gen}
    \tau_n = \sum_{d=1}^D\frac{3}{\pi B}(\phi_d+2\pi k_d)\alpha_d(\sum_{\ell=1}^{d}2\alpha_{\ell} -\alpha_d -1)
\end{equation}
where,
\begin{equation}
    \phi_d = n\pi\sin(\theta_d)
\end{equation}
and the constant integer $k_d$ for beam $d$ and antenna $n$ is:
\begin{equation}\label{eq:kd_gen}
    k_d =\begin{cases}
    &0\quad d=1\\
    &k_{d-1}\!+\!\text{round}(\frac{n\sin(\theta_{d-1})-n\sin(\theta_d)}{2}) \;\;   d\ge 2
    \end{cases}
\end{equation}

\end{theorem}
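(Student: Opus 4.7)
The plan is to lift the line-fitting argument from the 2-beam proof to a piecewise-constant step function with $D$ levels, solve the resulting per-antenna least-squares problem in closed form, and then greedily choose each wrap integer $k_d$ so that every adjacent-band jump stays within $(-\pi,\pi]$. The scaffolding of Theorem~\ref{th:2beamcase}---triangle-inequality reduction, per-antenna decoupling, and the $2\pi$-shift trick---carries over unchanged; only the arithmetic behind the $L^2$ fit and the choice of wrap integers needs to be redone in a $D$-band setting.

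First I would re-introduce the generalized constraint $\Phi^\ant(n,f)$, now piecewise-constant with $D$ levels: on the band for beam $d$ it equals $\phi_d + 2\pi k_d$ for an unknown integer $k_d$, with band width $\alpha_d B$ and midpoint $\bar f_d = \tfrac{B}{2}\bigl(\sum_{\ell=1}^{d} 2\alpha_\ell - \alpha_d - 1\bigr)$. The same triangle-inequality step as before reduces gain maximization to the per-antenna line-fit
\begin{equation*}
\min_{\Phi_n,\tau_n}\;\bigl\lVert\Phi_n + 2\pi f\tau_n - \Phi^\ant(n,f)\bigr\rVert^2,\qquad f\in[-B/2,B/2].
\end{equation*}
I would then form the normal equations $A^\top A x = A^\top b$ with $x=[\Phi_n,\,2\pi\Delta f\,\tau_n]^\top$. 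In the limit $M\to\infty$, $A^\top A$ stays diagonal (the Gram matrix of $\{1,f\}$ on a symmetric interval), so intercept and slope decouple. The intercept is simply the mean of $\Phi^\ant$, producing (\ref{eq:phase_gen}) directly. The slope is $\int f\,\Phi^\ant(n,f)\,df$ divided by $B^3/12$; on band $d$ that inner product evaluates to $(\phi_d+2\pi k_d)\,\alpha_d B\,\bar f_d$, and converting $2\pi\Delta f\,\tau_n$ back to $\tau_n$ manufactures the factor $\alpha_d\bigl(\sum_{\ell=1}^d 2\alpha_\ell - \alpha_d - 1\bigr)$ in (\ref{eq:delay_gen}).

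The final step is to pin down the wrap integers $k_d$. In the 2-beam case the unique integer $k$ was chosen to keep the single step $\lvert 2\phi - 2\pi k\rvert \le \pi$. Here the step at the boundary between bands $d-1$ and $d$ equals $(\phi_d - \phi_{d-1}) + 2\pi(k_d - k_{d-1})$, and forcing it into $(-\pi,\pi]$ yields $k_d - k_{d-1} = \round\!\bigl(\tfrac{\phi_{d-1}-\phi_d}{2\pi}\bigr)$, which after substituting $\phi_d = n\pi\sin\theta_d$ collapses to the recursion (\ref{eq:kd_gen}) under the natural anchor $k_1=0$ (the global $2\pi$ ambiguity is immaterial because adding a constant to every $k_d$ only shifts $\Phi_n$ by a multiple of $2\pi$).

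The main obstacle will be justifying that this purely local, greedy choice of $k_d$ actually minimizes the global $L^2$ line-fitting objective rather than merely each individual jump. I expect to handle this by observing that, once $\Phi_n$ is taken as the mean, each $k_d$ enters the residual only through the value on its own band, and the only frequency-dependent cost it controls is the linear tilt against $f$, which in turn is dominated by the magnitude of the discontinuity at its own boundary---so the minimization decouples over $d$ and the adjacent-jump bound is globally optimal. A secondary, more routine bookkeeping task will be verifying that the resulting $\tau_n$ admits a modulo representation analogous to (\ref{eq:tau_derive}), so that the delay range depends on $D$ and the $\{\alpha_d\}$ but remains independent of the antenna index $n$, which is the scalability claim emphasized after the theorem statement.
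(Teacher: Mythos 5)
Your proposal follows essentially the same route as the paper's proof: the same per-antenna least-squares line fit via the normal equations (you evaluate the Gram and moment terms as continuous integrals over band midpoints, the paper as discrete sums followed by $M\to\infty$, but the arithmetic is identical and yields the same $\alpha_d\bigl(\sum_{\ell=1}^{d}2\alpha_\ell-\alpha_d-1\bigr)$ factor), and the same greedy adjacent-jump choice of the wrap integers $k_d$. Your closing concern about whether the local choice of $k_d$ is globally optimal is not resolved in the paper either --- it explicitly describes that step as a heuristic --- so your attempt matches the published argument in both substance and level of rigor.
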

Before we provide a derivation for the generalized delay and phase expression, we first verify the expression for the two-beam case by putting $D=2$ in the above expression. We would obtain a generic expression  with arbitrary beam-bandwidth and arbitrary beam directions and then simplify it to the case of equal beam-bandwidth and symmetric direction similar to the previous analysis.

\begin{corollary}\label{cor:2beam} (Generalized two-beam case)
For two beams with beam directions $\theta_1$ and $\theta_2$ and the beam-bandwidths $\alpha B$ and $(1-\alpha) B$ respectively, such that the bandwidth fraction $0\le \alpha\le 1$, the corresponding delays and phases that generate this two-beam response is given by:
\begin{equation}
    \Phi_n = \alpha\phi_1+ (1-\alpha)(\phi_2+2k_2\pi)
\end{equation}
\begin{equation}
    \tau_n = \frac{3}{\pi B}[(-\phi_1
        +\phi_2+2\pi k_2)\alpha(1-\alpha)]
\end{equation}
for $\phi_1 = n\pi\sin(\theta_1)$ and $\phi_2 = n\pi\sin(\theta_2)$, and the constant integer $k_2$ defined as
\begin{equation}
    k_2 = \round(\frac{n\sin(\theta_1)-n\sin(\theta_2)}{2})
\end{equation}
\end{corollary}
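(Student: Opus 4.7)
The plan is to derive the corollary as a direct specialization of Theorem~\ref{th:generalized} by substituting $D=2$ into the generalized formulas (\ref{eq:phase_gen})--(\ref{eq:kd_gen}) with bandwidth fractions $\alpha_1=\alpha$ and $\alpha_2=1-\alpha$. Since the generalized theorem is assumed proven, no new physical argument is needed; the task reduces to carefully performing the substitutions and simplifying the sums. First I would write out the index constant $k_d$ from (\ref{eq:kd_gen}): the base case gives $k_1=0$, and the recurrence immediately yields $k_2=\text{round}(\tfrac{n\sin\theta_1-n\sin\theta_2}{2})$, matching the corollary's definition.

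Next I would specialize the phase expression (\ref{eq:phase_gen}) by expanding the sum for $d=1,2$. The $d=1$ term contributes $\alpha(\phi_1+2k_1\pi)=\alpha\phi_1$ because $k_1=0$, and the $d=2$ term contributes $(1-\alpha)(\phi_2+2k_2\pi)$, so the two terms add up to exactly the stated $\Phi_n$ in the corollary. This step is essentially one line of algebra once the $k_d$ values are in hand.

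The main bookkeeping step is simplifying the delay expression (\ref{eq:delay_gen}). I would compute the coefficient $C_d := \alpha_d(\sum_{\ell=1}^{d} 2\alpha_\ell - \alpha_d - 1)$ for $d=1,2$ separately. For $d=1$, $\sum_{\ell=1}^{1}2\alpha_\ell-\alpha_1-1 = 2\alpha-\alpha-1 = \alpha-1$, giving $C_1 = \alpha(\alpha-1) = -\alpha(1-\alpha)$. For $d=2$, $\sum_{\ell=1}^{2}2\alpha_\ell - \alpha_2 - 1 = 2\alpha+2(1-\alpha)-(1-\alpha)-1 = \alpha$, giving $C_2 = (1-\alpha)\alpha$. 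The critical observation is that $C_1=-C_2=-\alpha(1-\alpha)$, so the two terms in (\ref{eq:delay_gen}) collapse into a single common factor times the difference $(-\phi_1 + \phi_2 + 2\pi k_2)$, recovering the stated formula for $\tau_n$.

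I expect the main obstacle to be purely algebraic: correctly handling the cumulative sum $\sum_{\ell=1}^{d}2\alpha_\ell$ so that the two delay contributions factor into the symmetric product $\alpha(1-\alpha)$. Once this factorization is verified and the $k_1=0$ simplification is applied, the corollary follows immediately, and as a sanity check I would verify that setting $\alpha=1/2$, $\theta_1=-\theta_0$, $\theta_2=\theta_0$ recovers the symmetric two-beam formulas (\ref{eq:tau_closed_proof}) and (\ref{eq:phase_closed_proof}) of Theorem~\ref{th:2beamcase}, which serves as a consistency check on the derivation.
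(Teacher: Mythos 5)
Your proposal is correct and follows essentially the same route as the paper: substitute $D=2$ with $\alpha_1=\alpha$, $\alpha_2=1-\alpha$, use $k_1=0$, and observe that the two delay coefficients reduce to $\alpha(\alpha-1)$ and $\alpha(1-\alpha)$ so the terms factor into the common product $\alpha(1-\alpha)$. Your added consistency check against the symmetric case of Theorem~\ref{th:2beamcase} mirrors the verification the paper performs immediately after the corollary.
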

\begin{proof}[Proof of Corollary \ref{cor:2beam}]
We prove the corollary for the two-beam case by putting the number of beams $D=2$ in the generalized expressions and further simplify them with $\alpha_1=\alpha$ and $\alpha_2=1-\alpha$. Also, note that $k_1=0$, by definition. 
We now simplify the phase as follows:
\begin{equation}
    \begin{split}
        \Phi_n &= \alpha_1(\phi_1+2k_1\pi)+ \alpha_2(\phi_2+2k_2\pi)\\
        & = \alpha\phi_1+ (1-\alpha)(\phi_2+2k_2\pi)\\
    \end{split}
\end{equation}
and delay
\begin{equation}
    \begin{split}
        \tau_n &= \frac{3}{\pi B}[(\phi_1+2\pi k_1)\alpha_1(\alpha_1 -1)\\
        &\quad+ (\phi_2+2\pi k_2)\alpha_2(2\alpha_1 +\alpha_2 -1)]\\
        &=\frac{3}{\pi B}[(\phi_1)\alpha(\alpha-1)
        + (\phi_2+2\pi k_2)(1-\alpha)(\alpha)]\\
        &=\frac{3}{\pi B}[(-\phi_1
        +\phi_2+2\pi k_2)\alpha(1-\alpha)]\\
    \end{split}
\end{equation}
This proves the corollary.
\end{proof}
We will now verify the delays and phases for the special case of equal beam-bandwidth, i.e., $\alpha=1/2$ and symmetric angles, i.e., $\phi_1 = -n\pi\sin(\theta_0)$ and $\phi_2=n\pi\sin(\theta_0)$. In this case, we get $k_2=-\round(n\sin(\theta_0))$ and a simplified phase as:
\begin{equation}
    \begin{split}
        \Phi_n &=\alpha\phi_1+ (1-\alpha)(\phi_2+2k_2\pi)\\
            &=\phi_1/2 + \phi_2/2 +k_2\pi\\
            &=k_2\pi\\
            &=-\round(n\sin(\theta_0))\pi\\
            &=\round(n\sin(\theta_0))\pi
    \end{split}
\end{equation}
which is an integer multiple of $\pi$, where the integer multiple is given by $\round(n\sin(\theta_0))$. This expression is the same as what we derived earlier in (\ref{eq:phase_derive}). 
We now get delay expression as:
\begin{equation}
    \begin{split}
        \tau_n &=  \frac{3}{\pi B}[(-\phi_1
        +\phi_2+2\pi k_2)\alpha(1-\alpha)]\\
        &=\frac{3}{\pi B}[(n\pi\sin(\theta_0)
        +n\pi\sin(\theta_0)+2\pi k_2)\alpha(1-\alpha)]\\
        &=\frac{3}{ B}[(2n\sin(\theta_0)
        +2 k_2)\alpha(1-\alpha)]\\
        &=\frac{3}{ 2B}(n\sin(\theta_0)
        +k_2)\\
        &=\frac{3}{ 2B}(n\sin(\theta_0)
        - \round(n\sin(\theta_0)))\\
    \end{split}
\end{equation}
This delay expression is the same as what we have derived earlier for the simple 2-beam case in (\ref{eq:tau_derive_noshift}). This validates the generalized multi-beam expressions for two beams. We now prove the generalized expression for an arbitrary number of beams, beam directions, and beam-bandwidths.

\begin{proof}[Proof of Theorem \ref{th:generalized}] \textbf{(Generalized case)}
We follow the same formulation as before as a set of linear equations with a new set of constraints for the generalized case. We notice that the constant matrix $A$ doesn't change for the generalized case, and only the constant vector $b$ is modified. We would solve for delays and phases similar to the simple 2-beam case using pseudo-inverse (Recall $x = (A^TA)^{-1} A^Tb$).

We start with a generic expression for vector $b$ as:

\begin{equation}
    b = 
    \begin{bmatrix}
         \underbrace{\phi_1+k_12\pi   \ldots, }_{\alpha_1 M}\underbrace{\;\ldots\;}_{\ldots} \underbrace{,\ldots\phi_D+k_D2\pi }_{(\alpha_D) M}
         \end{bmatrix}^T
\end{equation}

Therefore, solution for $A^Tb$ is now:
\begin{equation}
\begin{split}
    A^Tb &=  \begin{bmatrix}
        1&1&\hdots&1\\
        \frac{-M}{2}&\frac{-M}{2}+1&\hdots&\frac{M}{2}
    \end{bmatrix}
    b
     \end{split}
\end{equation}

We first solve for phase. Since the matrix $A$ is unchanged, we directly apply $(A^TA)^{-1}$ from the two-beam case (\ref{eq:phase2sol}) and solve for the per-antenna phase as follows:
\begin{equation}
    \begin{split}
        \Phi_n &=\lim_{M\rightarrow \infty} A^Tb[1] \frac{1}{M+1}\\
        &= \lim_{M\rightarrow \infty}\frac{\sum_{d=1}^D\alpha_dM(\phi_d+2k_d\pi)}{M+1}\\
        & =  \sum_{d=1}^D\alpha_d(\phi_d+2k_d\pi)
    \end{split}
\end{equation}
This proves the expression of phase.
We now solve for the delay in a similar way as we solved for the two-beam case in (\ref{eq:del2sol}) as follows:
\begin{equation}
    \begin{split}
        \tau_n &= \lim_{M\rightarrow \infty} A^Tb[2]  \frac{12}{M(M+1)(M+2)} \frac{1}{2\pi \Delta f}\\
        &= \lim_{M\rightarrow \infty} A^Tb[2]  \frac{12}{M(M+2)} \frac{1}{2\pi B}\\
            &= \lim_{M\rightarrow \infty} \bigl[(\phi_1+k_12\pi)(-\sum_{m=0}^{M/2}m + \sum_{m=0}^{\alpha_1M-M/2}m)\\
            &\quad\quad + (\phi_2+k_22\pi)(-\sum_{m=0}^{\alpha_1M-M/2}m + \sum_{m=0}^{(\alpha_1+\alpha_2)M-M/2}m)\\
            & \quad \cdots +(\phi_D+k_D2\pi)(\sum_{m=0}^{\sum_{\ell=1}^D\alpha_\ell M-M/2}\!\!\!\!\!m - \sum_{m=0}^{\sum_{\ell=1}^{D-1}\alpha_\ell M-M/2}\!\!\!\!\!m) \bigr]\\
            & \quad  \quad \times\frac{12}{M(M+2)} \frac{1}{2\pi B}\\
    \end{split}
\end{equation}
where we put the expression for $B = (M+1)\Delta f$ for simplification. Still, the above expression is complex and messy, which we intend to write in a simplified closed-form solution. 
Our idea to simplify the above expression is that since there is a $M^2$ term in the denominator, we only collect the $M^2$ terms from the numerator and ignore other constant or linear terms. This is because we are interested in the case when the number of frequency bins is very high, i.e., $\lim_{M\rightarrow \infty}$. This leads to a simplified expression as follows:
\begin{equation}
    \begin{split}
        \tau_n 
            &=\frac{3}{\pi B}\bigl[(\phi_1+k_12\pi)(-(1/2)^2+(\alpha_1-1/2)^2)\\
            &\quad\quad + (\phi_2+k_22\pi)(-(\alpha_1-1/2)^2+(\alpha_1+\alpha_2-1/2)^2 )\\
            & \quad \quad \cdots (\phi_D+k_D2\pi)((\sum_{\ell=1}^{D}\alpha_\ell-1/2)^2-(\sum_{\ell=1}^{D-1}\alpha_\ell -1/2)^2 ) \bigr]\\
            &=\frac{3}{\pi B}\bigl[(\phi_1+k_12\pi)(\alpha_1(\alpha_1-1))\\
            &\quad\quad + (\phi_2+k_22\pi)(\alpha_2(2\alpha_1+\alpha_2-1))\\
            & \quad \quad \cdots (\phi_D+k_D2\pi)(\alpha_D(\sum_{\ell=1}^{D-1}2\alpha_{\ell} +\alpha_D - 1)) \bigr]\\
    \end{split}
\end{equation}
This proves the generalized expression for phases and delays in (\ref{eq:phase_gen}) and (\ref{eq:delay_gen}), respectively, for an arbitrary number of beams, beam directions, and beam-bandwidth.
\end{proof}


\begin{figure*} [!t]
\centering
\subfigure[Desired F-S response]{
    \includegraphics[width=0.21\textwidth]{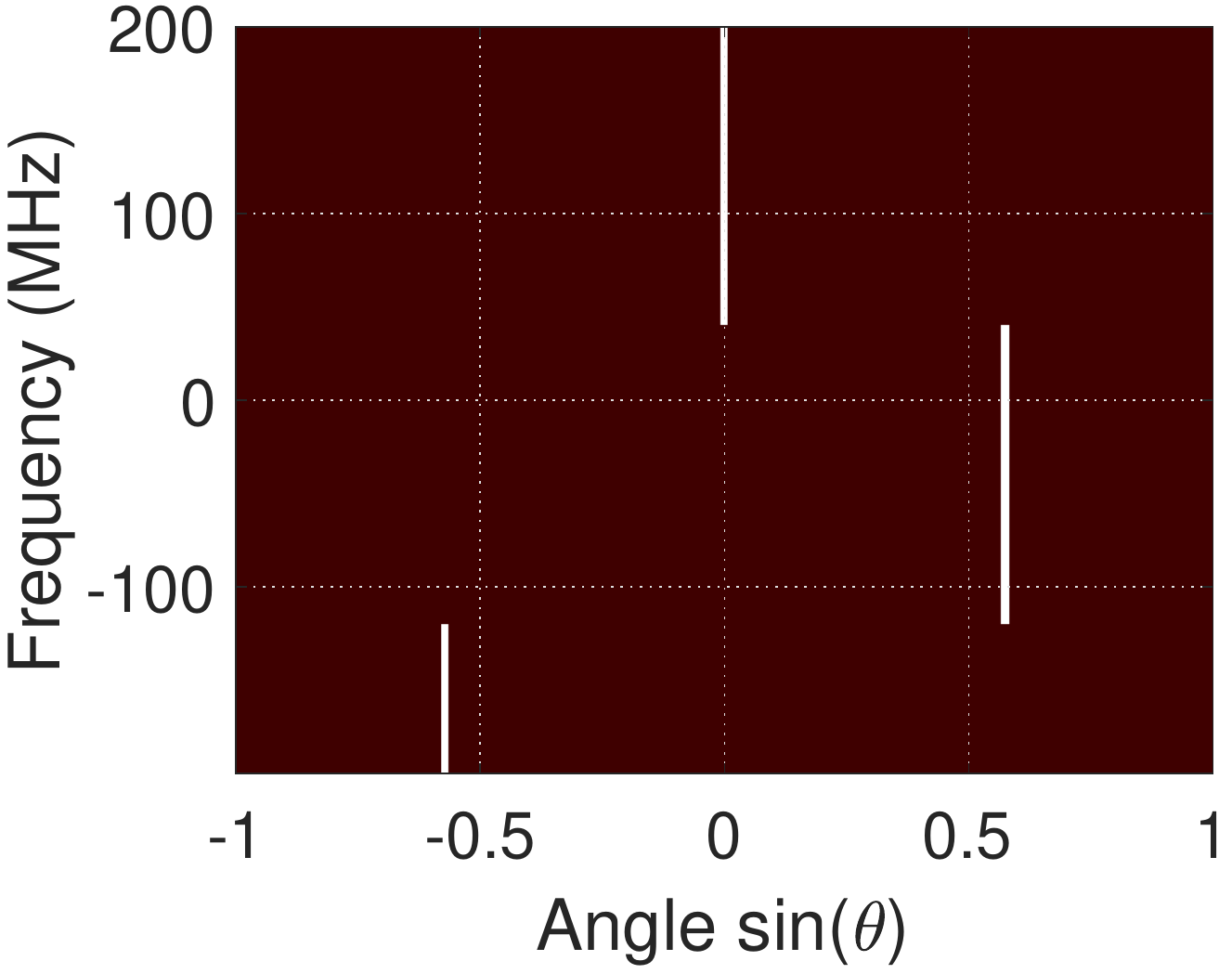}
    \label{fig:image_desired_case1}
  }\hfill
  \subfigure[FSDA F-S response]{
    \includegraphics[width=.23\textwidth]{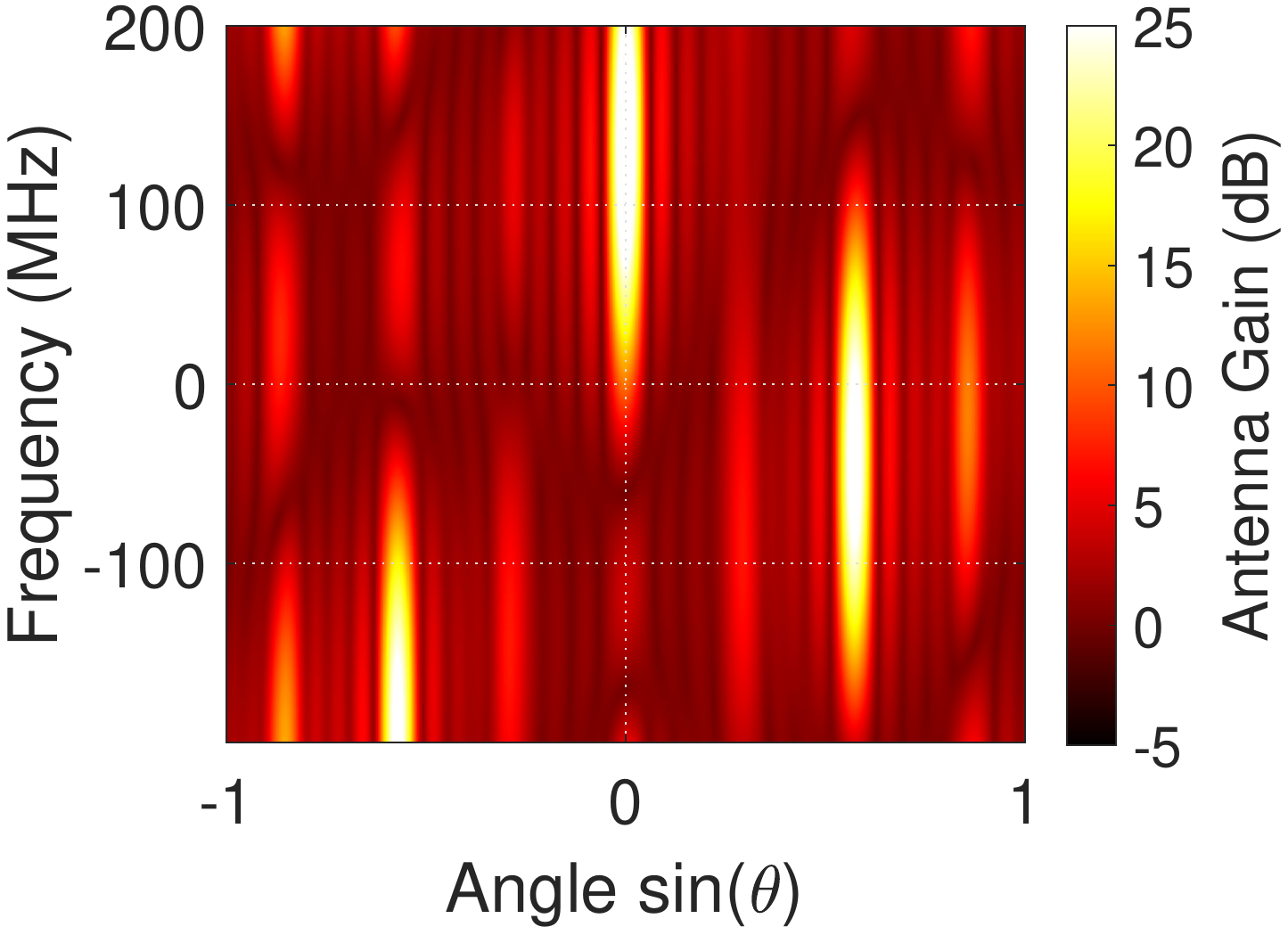}
    \label{fig:image_fsda_case1}
  }\hfill
  \subfigure[Math F-S response]{
    \includegraphics[width=.23\textwidth]{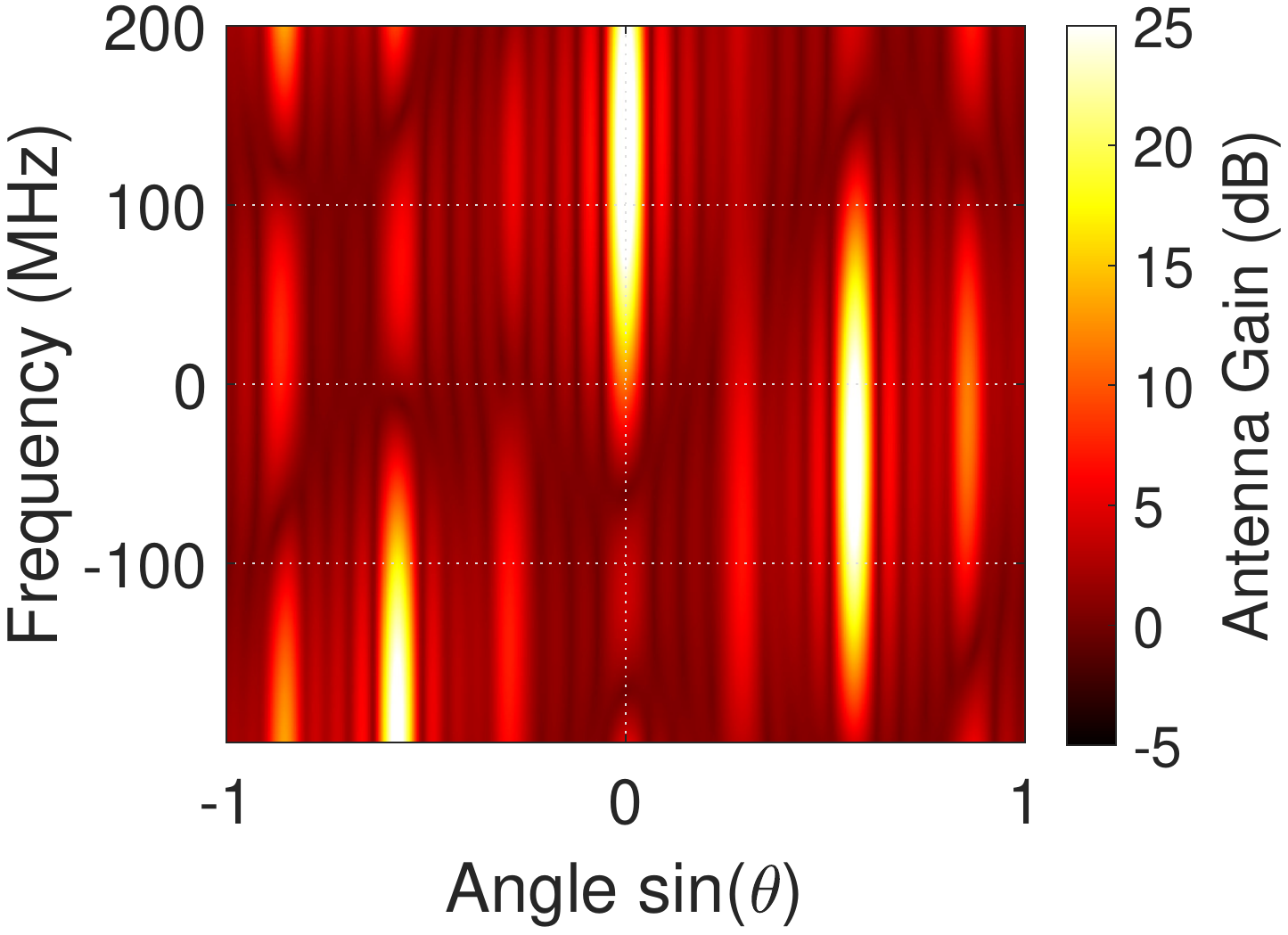}
    \label{fig:image_math_case1}
  }\hfill
  \subfigure[Phases and Delays]{
    \includegraphics[width=.22\textwidth]{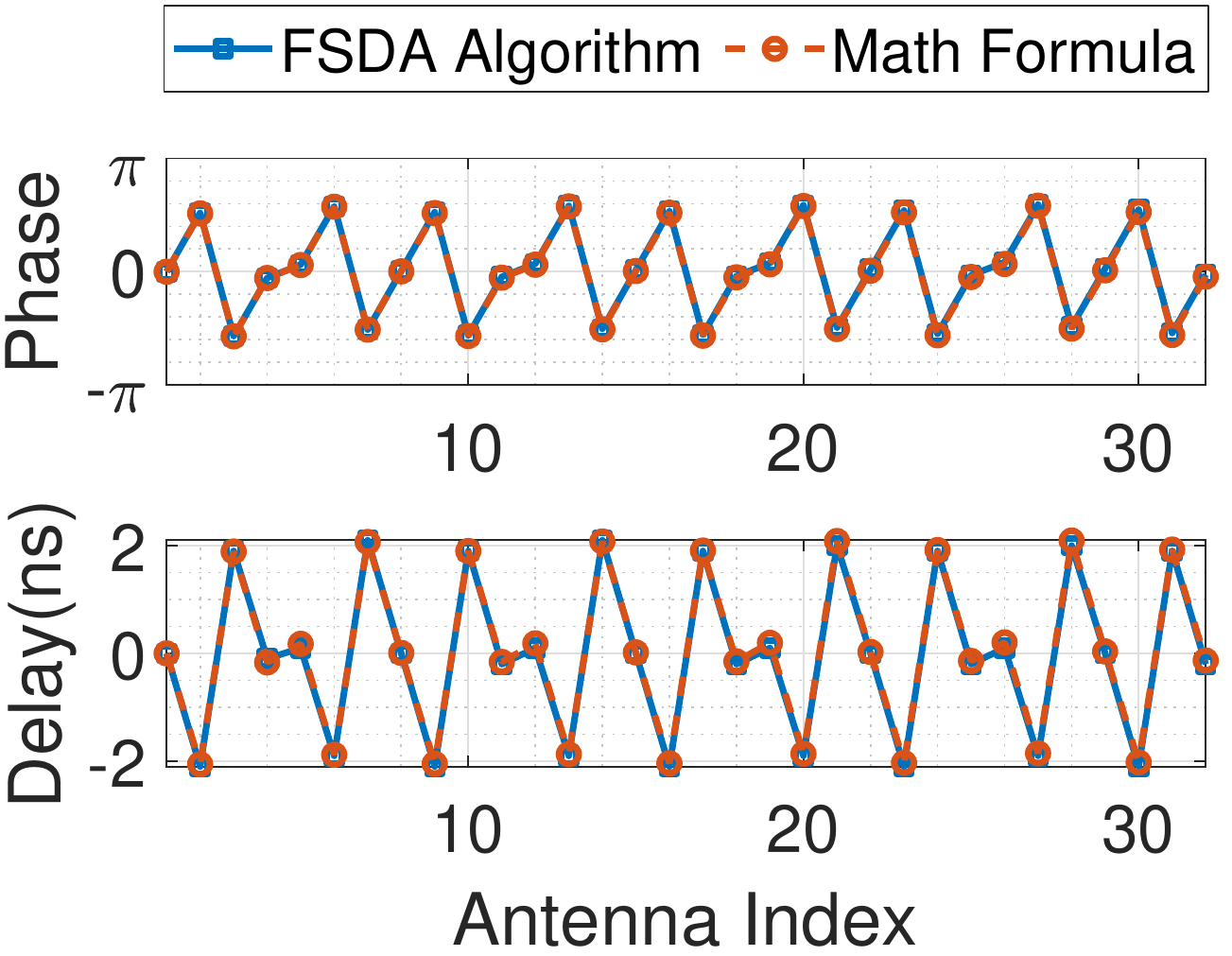}
    \label{fig:phase_delay_beamcase1}
  }\hfill
    \caption{Generalized 3-beam case: Comparing the Frequency-Space (F-S) beamforming response of FSDA algorithm and Maths. 
    }
    \label{fig:3beam}
\end{figure*}

\begin{figure*} [!t]
\centering
\subfigure[Desired F-S response]{
    \includegraphics[width=0.21\textwidth]{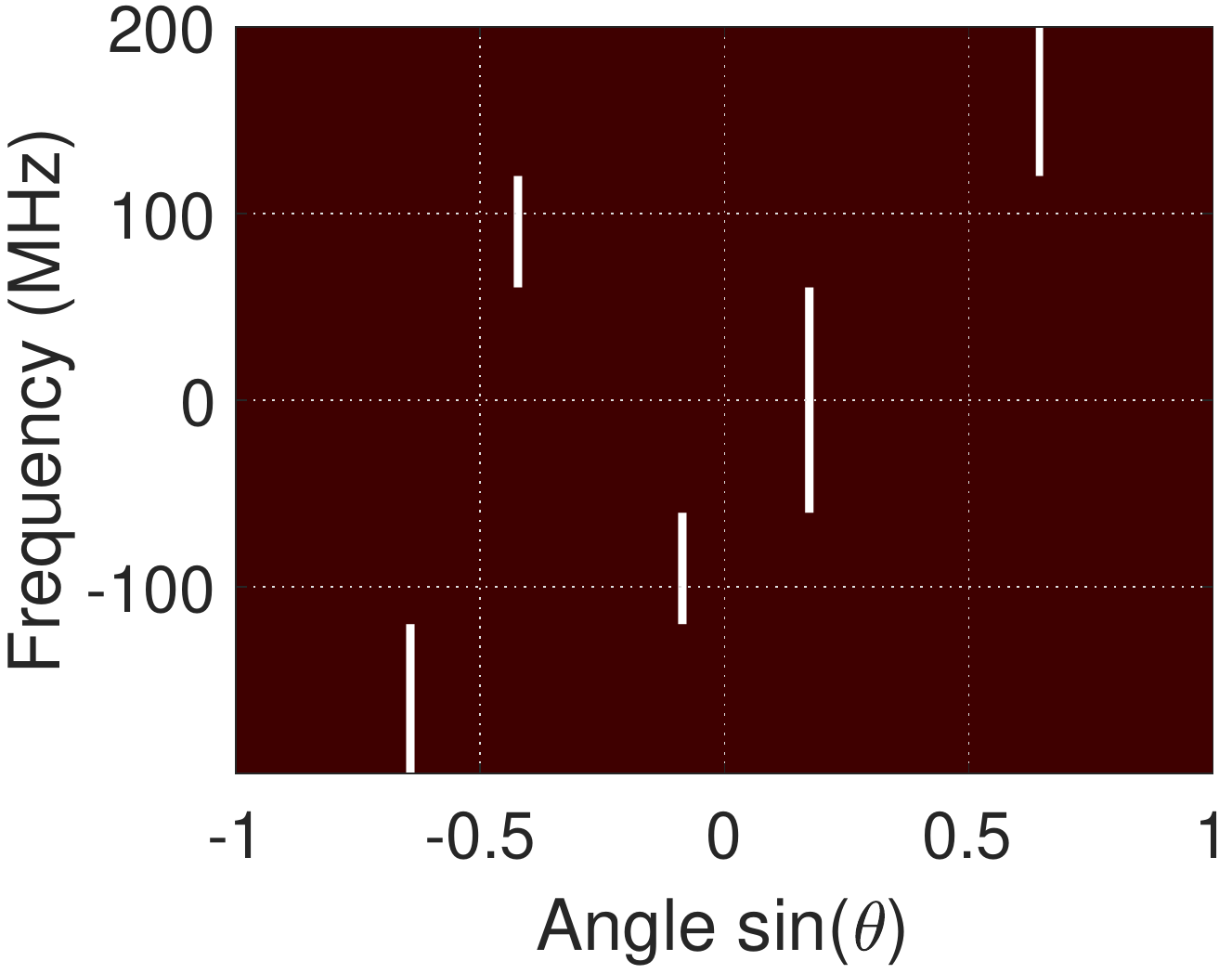}
    \label{fig:image_desired_case2}
  }\hfill
  \subfigure[FSDA F-S response]{
    \includegraphics[width=.23\textwidth]{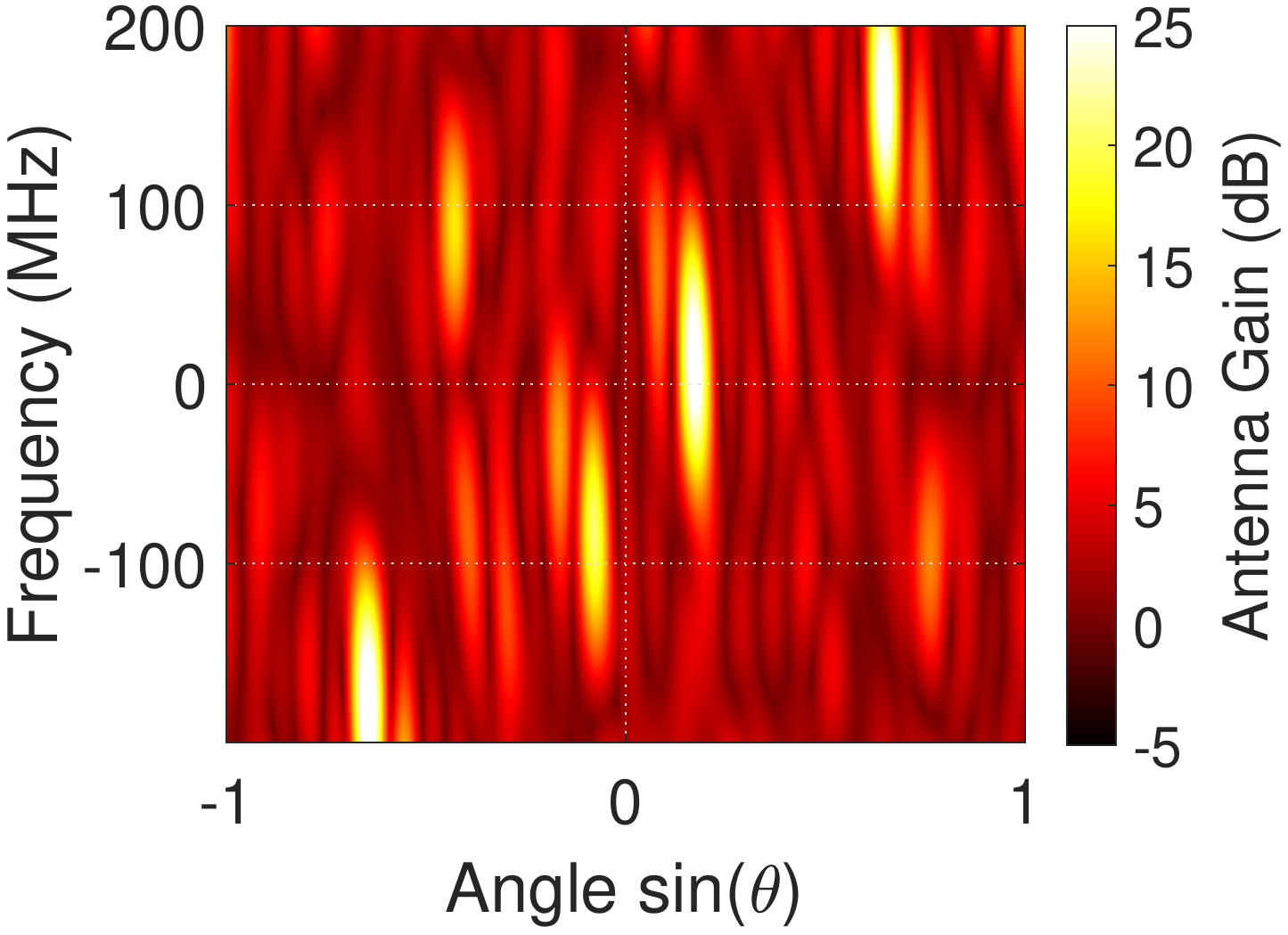}
    \label{fig:image_fsda_case2}
  }\hfill
  \subfigure[Math F-S response]{
    \includegraphics[width=.23\textwidth]{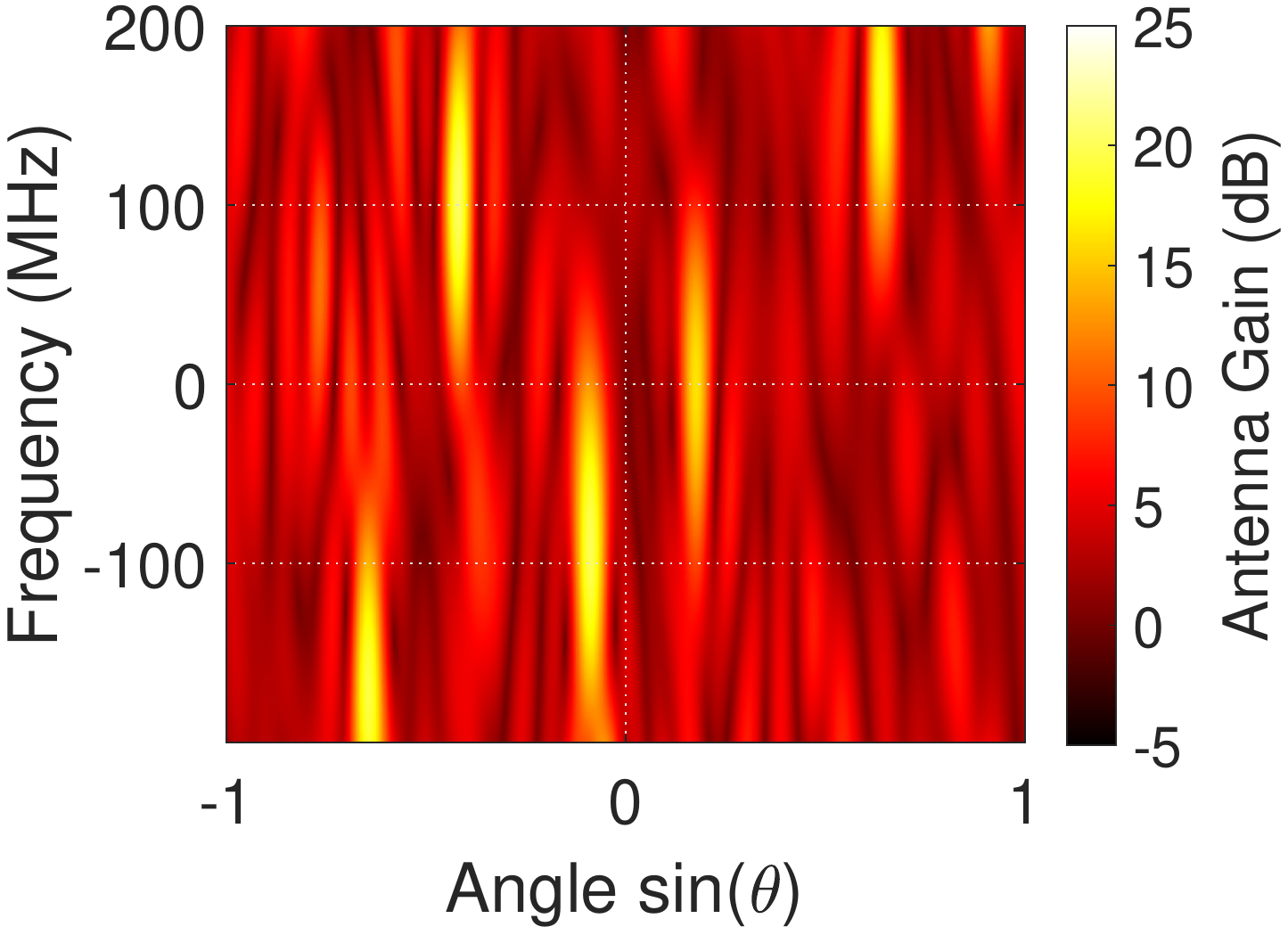}
    \label{fig:image_math_case2}
  }\hfill
  \subfigure[Phases and Delays]{
    \includegraphics[width=.22\textwidth]{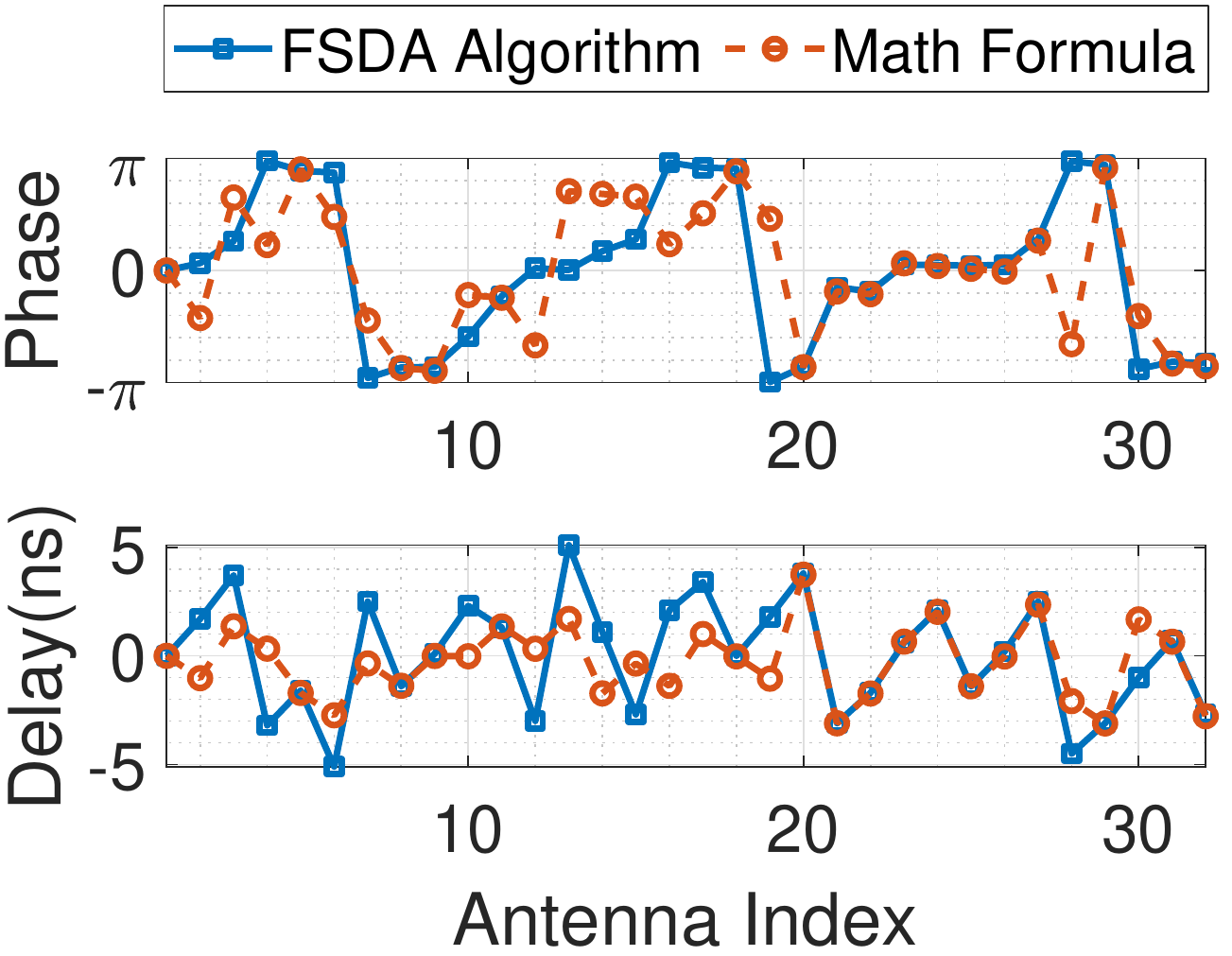}
    \label{fig:phase_delay_beamcase2}
  }\hfill
    \caption{Generalized 5-beam case: Comparing the Frequency-Space (F-S) beamforming response of FSDA algorithm and Maths. 
    }
    \label{fig:5beam}
\end{figure*}

\begin{figure*} [!t]
\centering
\subfigure[Desired F-S response]{
    \includegraphics[width=0.21\textwidth]{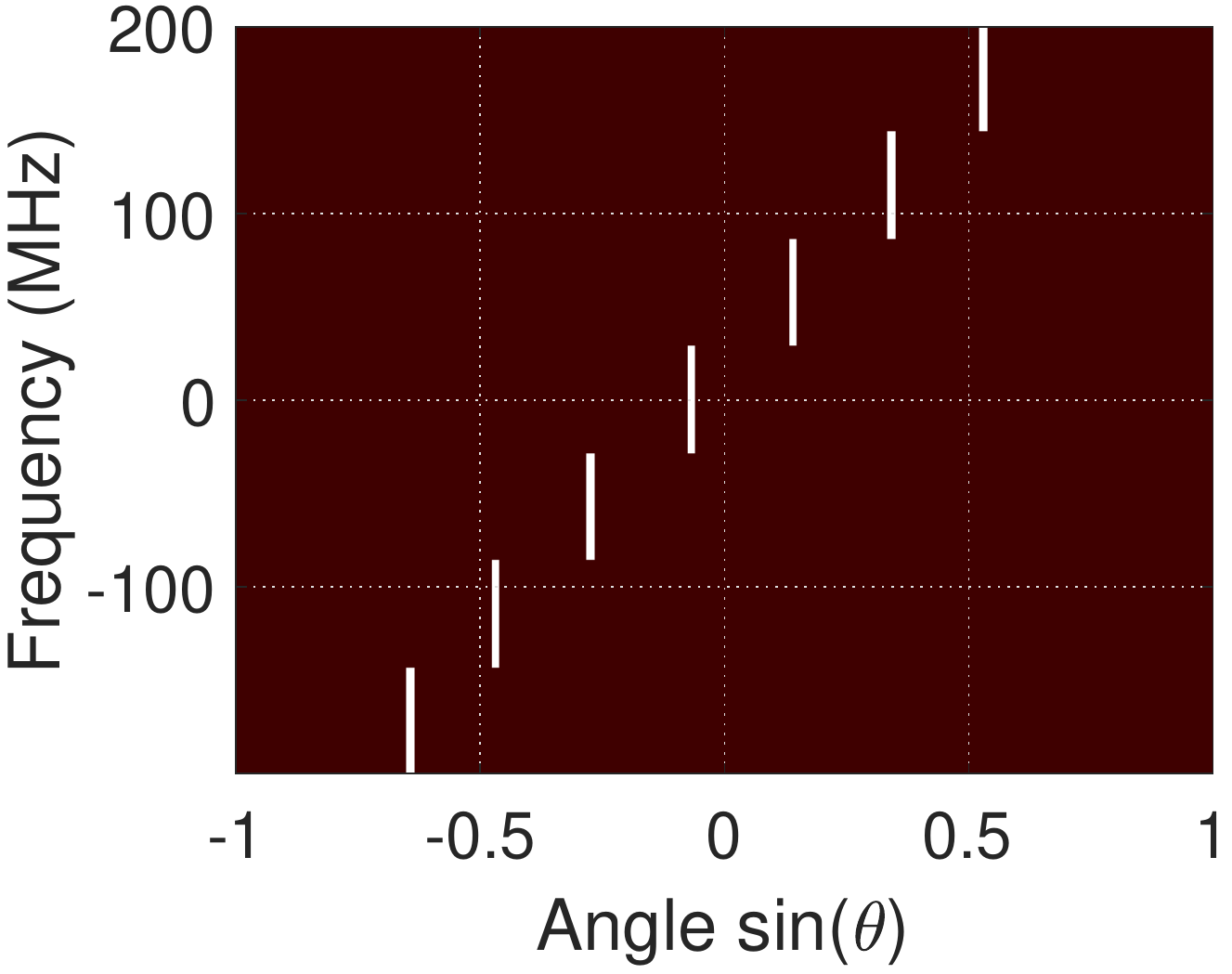}
    \label{fig:image_desired_case3}
  }\hfill
  \subfigure[FSDA F-S response]{
    \includegraphics[width=.23\textwidth]{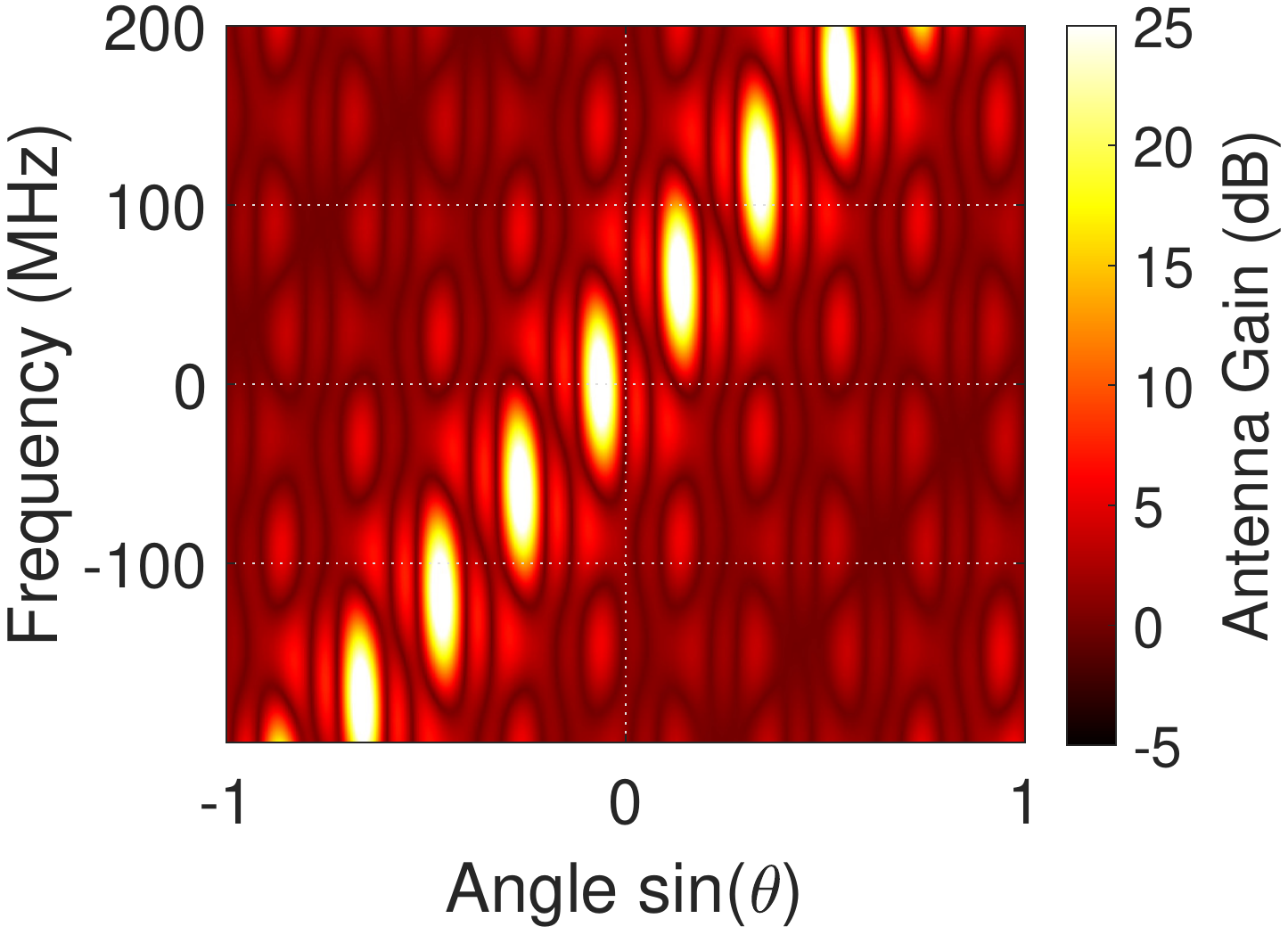}
    \label{fig:image_fsda_case3}
  }\hfill
  \subfigure[Math F-S response]{
    \includegraphics[width=.23\textwidth]{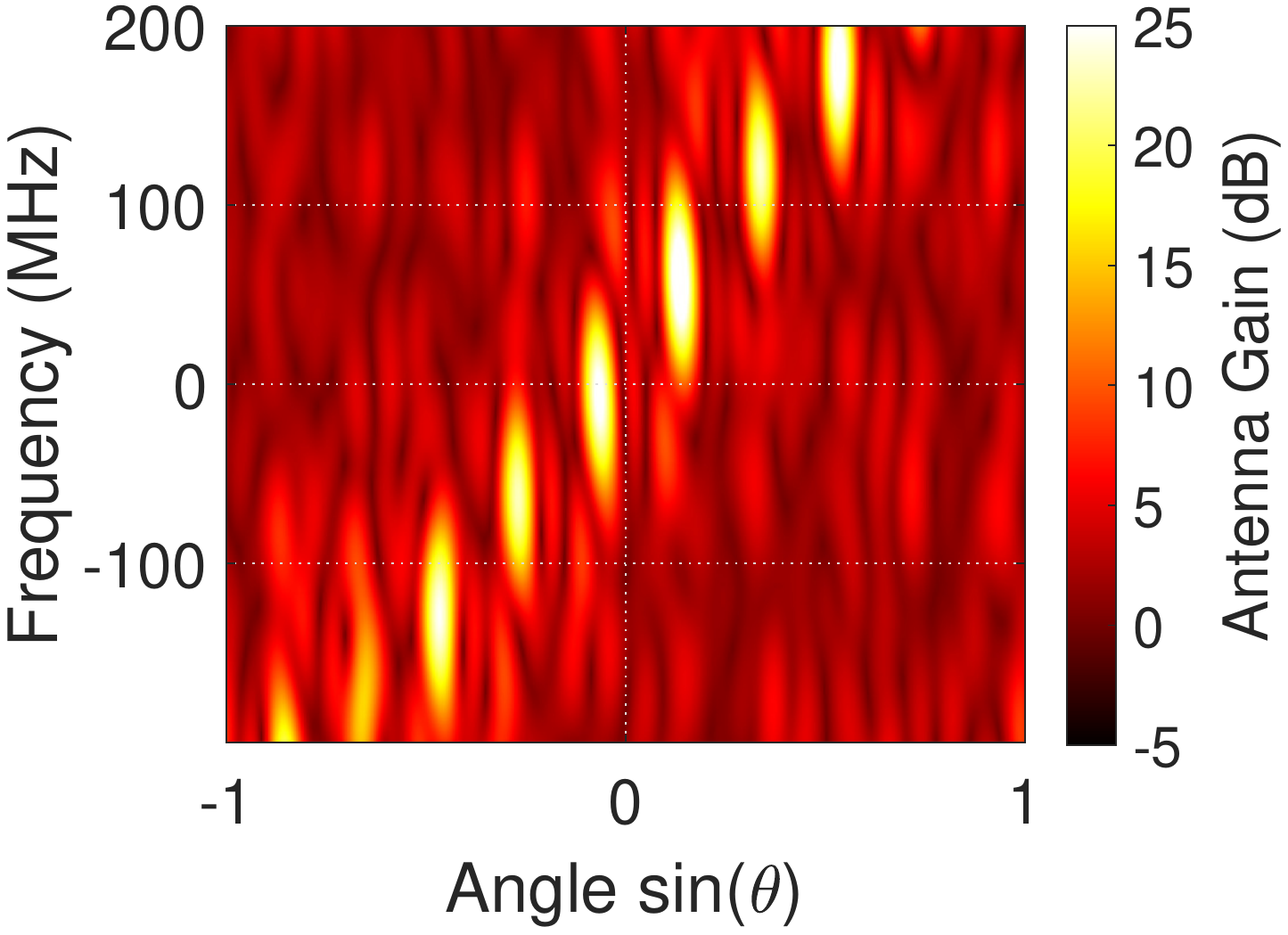}
    \label{fig:image_math_case3}
  }\hfill
  \subfigure[Phases and Delays]{
    \includegraphics[width=.22\textwidth]{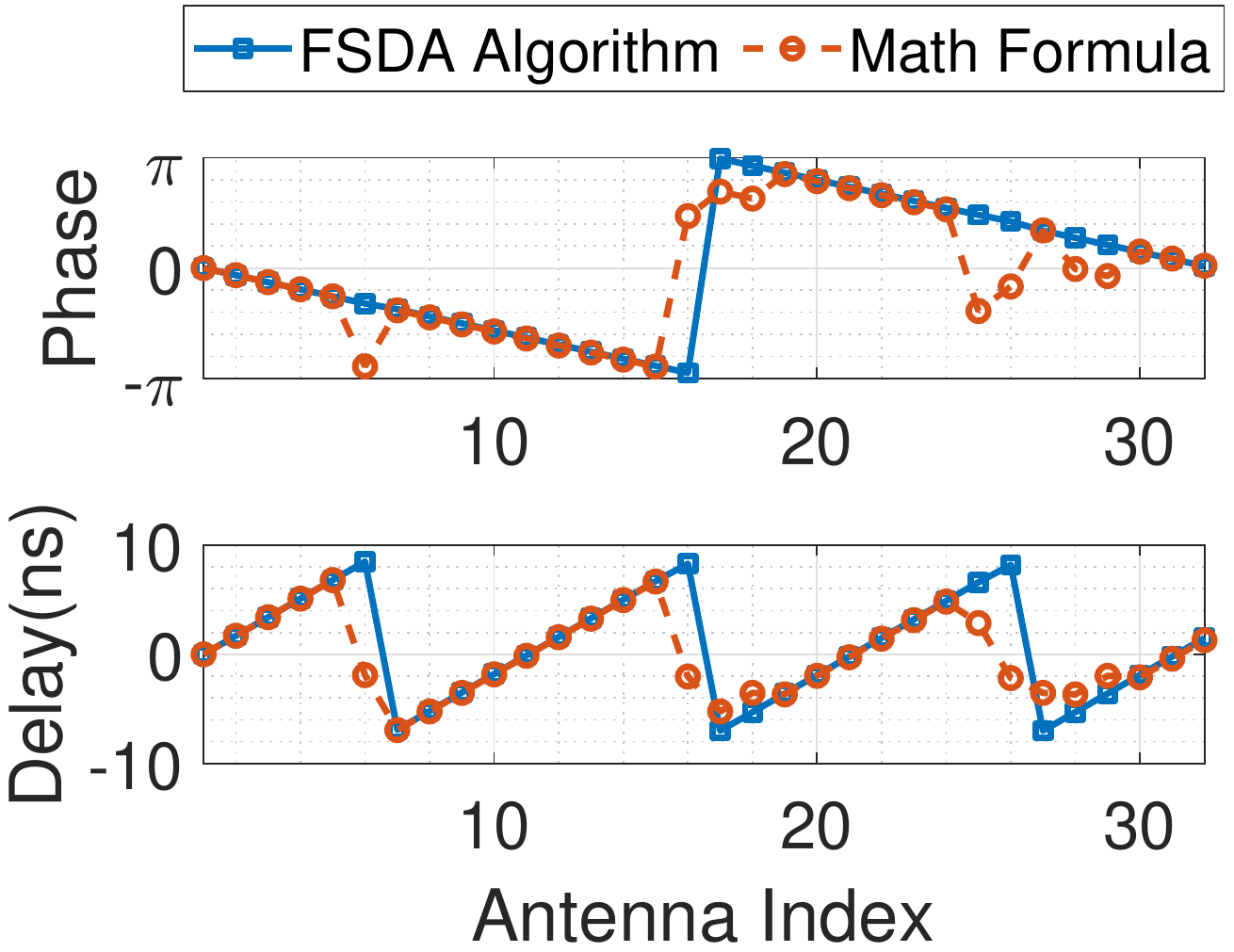}
    \label{fig:phase_delay_beamcase3}
  }\hfill
    \caption{Uniform 7-beam case: Comparing the Frequency-Space (F-S) beamforming response of FSDA algorithm and Maths. 
    }
    \label{fig:6beam}
\end{figure*}

We now discuss insights into how we obtain the expression for the unknown integer constant $k_d$ to bind the error in line fitting. Recall in the two-beam system, we have seen that by strategically adding an integer multiple of $2\pi$ to the phase of one beam, we can reduce the error in line fitting without changing the actual value of the phase. This is due to the concept of phase wrapping, where adding $2\pi$ to a phase value results in the same point on the complex plane.

To generalize this insight for an arbitrary number of beams, we need to ensure that the phase difference between any two consecutive beams is less than $\pi$. Because if the phase difference goes more than $\pi$, then we can always add $2\pi$ to the phase of one of the two consecutive beams to make the phase difference less than $\pi$. By doing this, we can ensure that the step size doesn't grow high for a large antenna index, thus minimizing the error in line fitting. To obtain the expression for the integer constant $k_d$ for each beam $d$, we start by fixing the phase of the first beam and then adjust the phase of the second beam with respect to the first beam and follow this process to all consecutive antennas: adjust the phase of $d$th beam with respect to the $d-1$th beam. This ensures that no two consecutive beams have a phase difference greater than $\pi$. This results in the final expression for $k_d$ in (\ref{eq:kd_gen}), which expresses the integer constant multiple of $2\pi$ that is added to the phase of each beam to achieve the desired bound on the phase difference. This bound on phase difference leads to a bound on the error in line fitting, resulting in an accurate line fitting for generalized multi-beam systems.

\subsection{Evaluation of Mathematical Multi-beam patterns}
Here we provide an extensive evaluation for \name under different scenarios. We compare the beamforming patterns that are produced by Mathematical expression vs FSDA algorithm. We also show the impact of delay and phase quantization to show the practical performance of \name.

\noindent
\textbf{$\blacksquare$ Comparison of Maths expression vs. FSDA algorithm:}
We present several examples of joint frequency-space beamforming using the DPA architecture. We utilize a 32-element linear antenna array with programmable phase and delay elements per antenna in various scenarios, including three-beam, five-beam, and seven-beam configurations as shown in Figure~\ref{fig:3beam},~\ref{fig:5beam}, and ~\ref{fig:6beam} respectively. In, Figure~\ref{fig:3beam}(a), we show a desired frequency-space (F-S) image that illustrates high intensity at the desired frequency-direction pairs, where we want high array gain. We then show two F-S images that were obtained through DPA after programming them using the FSDA algorithm and a mathematical expression in Figure~\ref{fig:3beam}(b) and (c), respectively. Additionally, we present the set of delays and phase values obtained from the FSDA algorithm and mathematical expression at each of the 32 antennas in Figure~\ref{fig:3beam}(d).

\begin{enumerate}[leftmargin=*]
    \item 
    DPA can reasonably achieve the desired multi-beam patterns with an arbitrary number of beams, beam directions and beam bandwidths. We observe that the antenna gain is high in the desired frequency-direction pairs and low elsewhere, as expected. This verifies that the linear approximation of the non-linear per-antenna phase profile is fairly accurate. However, at the edges of each beam, there may be a gradual transition instead of a sharp one. This can be seen at the boundary of the frequency band supported by each beam, but it only occurs along the frequency band and not in the angular domain. The beams remain sharp in the intended directions and do not spread to other unwanted angular directions.
    \item We compared two F-S images created by the FSDA algorithm and a mathematical expression and found that the patterns produced by the FSDA algorithm are more precise and sharp. For example, the FSDA can create small beam bandwidths, while the mathematical patterns have wider bandwidths. The difference in the final F-S images produced by each method can be attributed to small variations in the set of delays and phases obtained by the FSDA and the mathematical approach, which we observed in~\ref{fig:phase_delay_beamcase3}. The difference in the FSDA and mathematical approach is caused by the heuristic method of obtaining the integer constant $k_d$ used to scale the phase profile of beam $d$, which reduces the error in line fitting. The mathematical approach is more conservative, as it prefers solutions with shorter delays over longer delays obtained by the FSDA for the same desired F-S response. However, the difference in the patterns obtained from both approaches is small, and both methods demonstrate the ability to create frequency-dependent multi-beam patterns.
\end{enumerate}

\begin{figure} [!t]
\centering
\subfigure[Delay Quantization]{
    \includegraphics[width=0.22\textwidth]{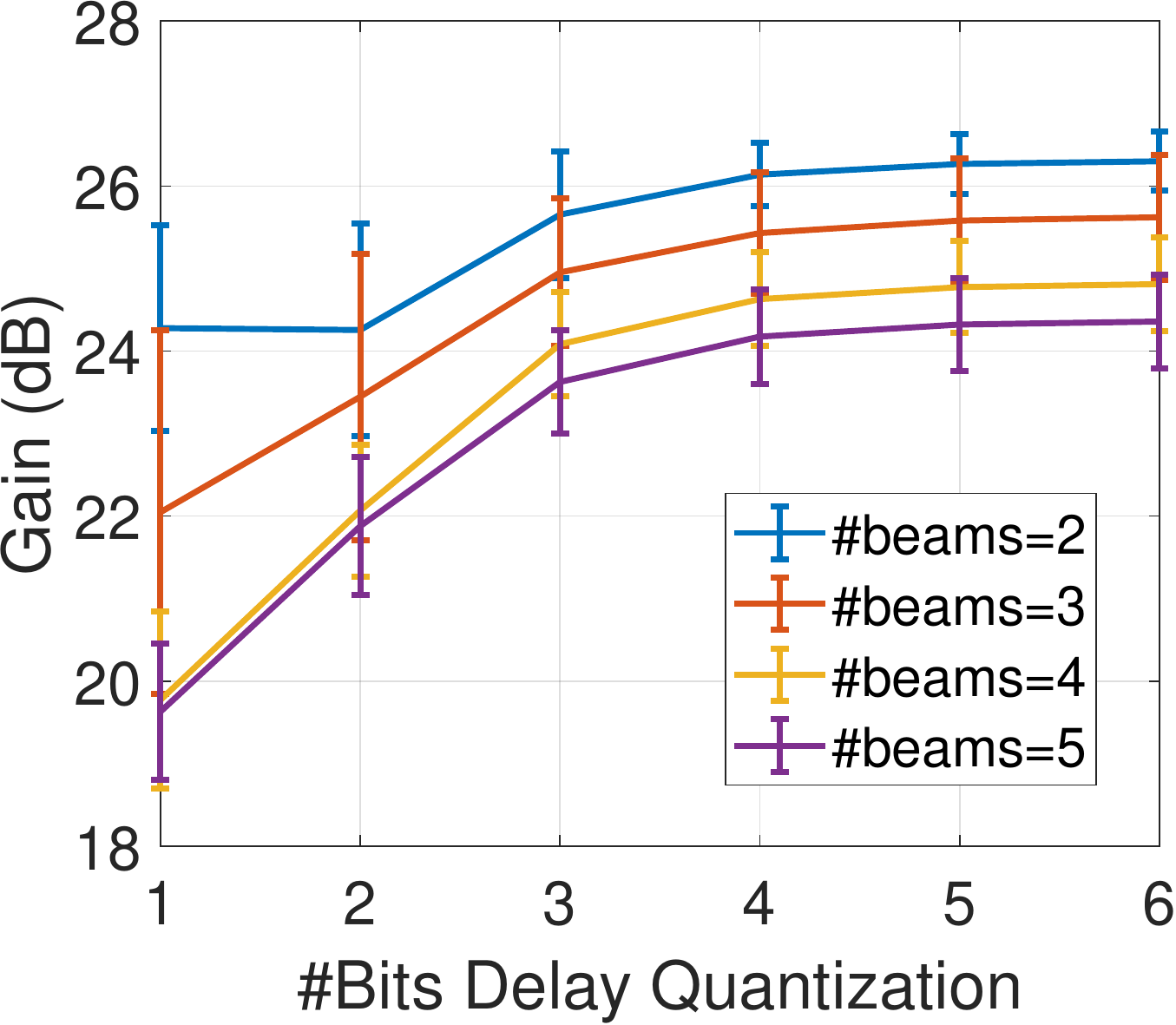}
    \label{fig:delay_quantization}
  }\hfill
  \subfigure[Phase Quantization]{
    \includegraphics[width=.22\textwidth]{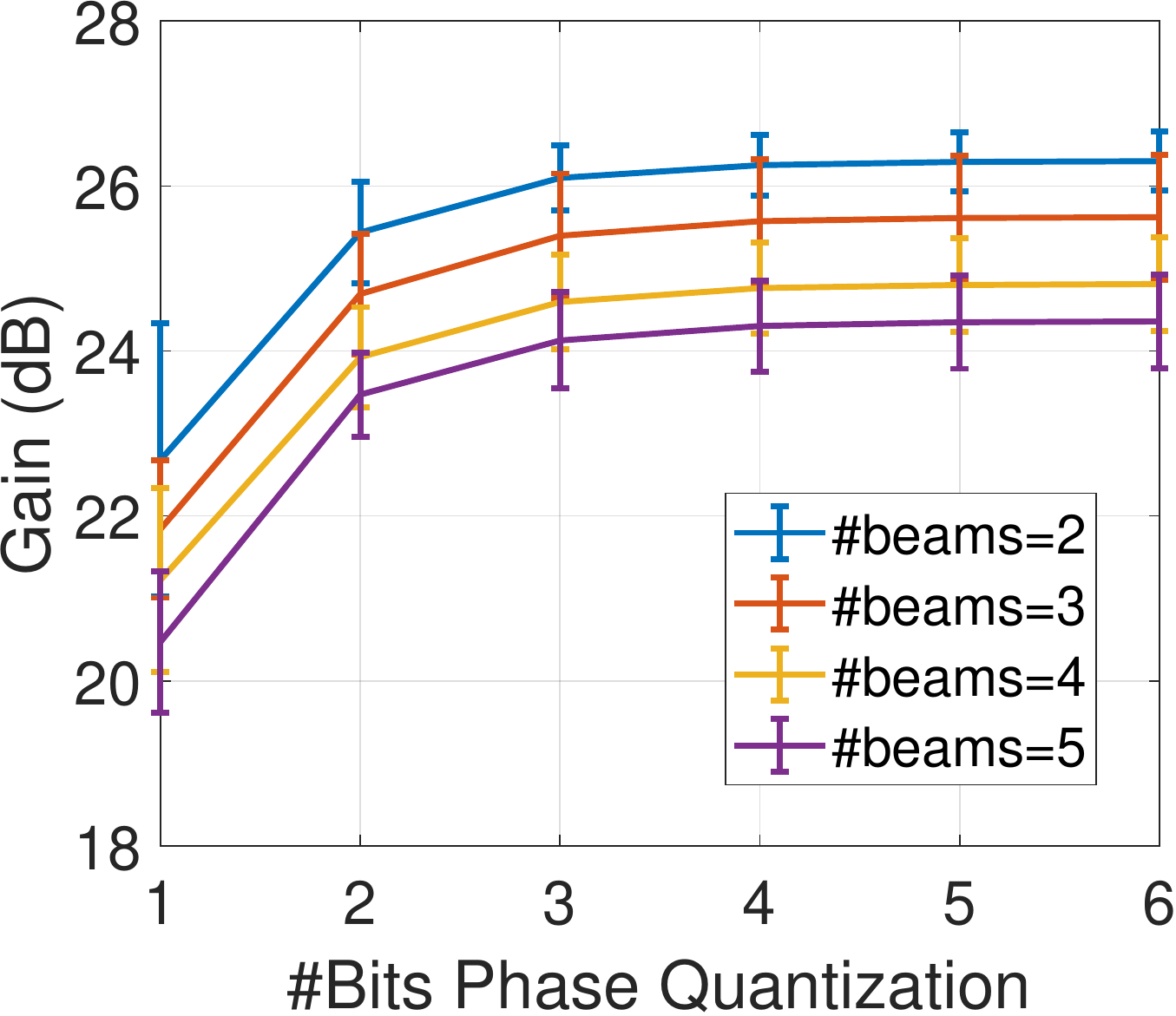}
    \label{fig:phase_quantization}
  }\hfill
    \caption{Effect of delay and phase quantization in overall array gain of 16 antennae \dpa. 
    }
    \label{fig:quantization}
\end{figure}

 
\noindent
\textbf{$\blacksquare$ Effect of phase and delay quantization:}
We are now investigating the effect of quantization on the performance of the \name system. The DPA antenna array hardware must be designed with a certain level of quantization in the values of delays and phases. The more quantization bits used, the better the array's performance will be, but this also increases the complexity of the hardware design. This creates a trade-off between the number of quantization bits and the gain performance of the array. We have shown in Figure~\ref{fig:quantization} that the array gain improves as the number of quantization bits is increased from 1 to 6 for both delay and phase quantization. However, the improvement in the array gain becomes small for higher numbers of bits and reaches a saturation point. Specifically, the array gain improves by less than 1 dB when the quantization is increased from 3 bits. Therefore, practical \name hardware can be designed with a minimum of 3-bit quantization in phase/delay while maintaining high performance.


\subsection{Conclusion for Appendix}

In this appendix, we presented a detailed mathematical analysis of the delay-phased array (DPA) and derived a closed-form mathematical expression for delay and phase values. We proposed an innovative approach for representing the per-antenna phase profile as a function of frequency and demonstrated that an ideal phase profile is in the form of a step function with the number of steps equal to the number of beams and the step size and width dependent on the beam directions and beam-bandwidth, respectively. We emphasized the challenge of realizing this non-linear phase profile using analog hardware, as it does not provide precise control over each frequency component in the signal. To overcome this limitation, we developed an approximate near-optimal method that utilizes a line-fitting method to derive the best-fit line for the given step function. The slope of the best-fit line gives the delay estimate, while the y-intercept gives the phase estimate. This method enables the derivation of delay and phase values in closed form and provides a deeper understanding of their behavior.

In particular, we showed that the delay values are always bounded within a small range that is independent of the number of antennas, similar to how the phase is bounded by $2\pi$. This allows for the creation of practical hardware with shorter delay lines, in contrast to traditional Rainbow-link architectures~\cite{li2022rainbow}, which require long delay lines. We also demonstrated the application of DPA in flexible directional frequency multiplexing, where frequency-dependent multi-beam patterns can be used to extend mmWave connectivity to IoT and URLLC applications by concurrently serving many devices with a small chunk of bandwidth while maintaining a directional link with all devices. Additionally, it can aid in concurrent control and communication by creating multiple pencil beams, with one dedicated beam carrying beam scan control messages and the other beams carrying communication data toward the direction of an active user.

\newpage
\bibliographystyle{IEEEtran}
\bibliography{references}

\begin{thebibliography}{10}
\providecommand{\url}[1]{#1}
\csname url@samestyle\endcsname
\providecommand{\newblock}{\relax}
\providecommand{\bibinfo}[2]{#2}
\providecommand{\BIBentrySTDinterwordspacing}{\spaceskip=0pt\relax}
\providecommand{\BIBentryALTinterwordstretchfactor}{4}
\providecommand{\BIBentryALTinterwordspacing}{\spaceskip=\fontdimen2\font plus
\BIBentryALTinterwordstretchfactor\fontdimen3\font minus
  \fontdimen4\font\relax}
\providecommand{\BIBforeignlanguage}[2]{{%
\expandafter\ifx\csname l@#1\endcsname\relax
\typeout{** WARNING: IEEEtran.bst: No hyphenation pattern has been}%
\typeout{** loaded for the language `#1'. Using the pattern for}%
\typeout{** the default language instead.}%
\else
\language=\csname l@#1\endcsname
\fi
#2}}
\providecommand{\BIBdecl}{\relax}
\BIBdecl

\bibitem{li2022rainbow}
R.~Li, H.~Yan, and D.~Cabric, ``Rainbow-link: Beam-alignment-free and
  grant-free mmw multiple access using true-time-delay array,'' \emph{IEEE
  Journal on Selected Areas in Communications}, 2022.

\bibitem{yan2019wideband}
H.~Yan, V.~Boljanovic, and D.~Cabric, ``Wideband millimeter-wave beam training
  with true-time-delay array architecture,'' in \emph{2019 53rd Asilomar
  Conference on Signals, Systems, and Computers}.\hskip 1em plus 0.5em minus
  0.4em\relax IEEE, 2019, pp. 1447--1452.

\bibitem{boljanovic2021fast}
V.~Boljanovic, H.~Yan, C.-C. Lin, S.~Mohapatra, D.~Heo, S.~Gupta, and
  D.~Cabric, ``Fast beam training with true-time-delay arrays in wideband
  millimeter-wave systems,'' \emph{IEEE Transactions on Circuits and Systems I:
  Regular Papers}, vol.~68, no.~4, pp. 1727--1739, 2021.

\bibitem{ghasempour2020single}
Y.~Ghasempour, R.~Shrestha, A.~Charous, E.~Knightly, and D.~M. Mittleman,
  ``Single-shot link discovery for terahertz wireless networks,'' \emph{Nature
  communications}, vol.~11, no.~1, pp. 1--6, 2020.

\bibitem{ghaderi2019integrated}
E.~Ghaderi, A.~S. Ramani, A.~A. Rahimi, D.~Heo, S.~Shekhar, and S.~Gupta, ``An
  integrated discrete-time delay-compensating technique for large-array
  beamformers,'' \emph{IEEE Transactions on Circuits and Systems I: Regular
  Papers}, vol.~66, no.~9, pp. 3296--3306, 2019.

\bibitem{boljanovic2020true}
V.~Boljanovic, H.~Yan, C.-C. Lin, S.~Mohapatra, D.~Heo, S.~Gupta, and
  D.~Cabric, ``True-time-delay arrays for fast beam training in wideband
  millimeter-wave systems,'' \emph{arXiv preprint arXiv:2007.08713}, 2020.

\bibitem{qualcomm2017augmented}
A.~R. Qualcomm, ``Augmented and virtual reality: the first wave of 5g killer
  apps,''
  \url{https://www.qualcomm.com/content/dam/qcomm-martech/dm-assets/documents/cr-qcom-140.pdf},
  2017.

\bibitem{pocovi2018achieving}
G.~Pocovi, H.~Shariatmadari, G.~Berardinelli, K.~Pedersen, J.~Steiner, and
  Z.~Li, ``Achieving ultra-reliable low-latency communications: Challenges and
  envisioned system enhancements,'' \emph{IEEE Network}, vol.~32, no.~2, pp.
  8--15, 2018.

\bibitem{3gpp2020}
\BIBentryALTinterwordspacing
``{Speech and multimedia Transmission Quality (STQ); QoS parameters and test
  scenarios for assessing network capabilities in 5G performance
  measurements},'' {3rd Generation Partnership Project (3GPP)}, TR {ETSI TR 103
  702 V1.1.1 }, 2020-11. [Online]. Available:
  \url{http://www.etsi.org/deliver/etsi_tr/138900_138999/138901/14.00.00_60/tr_138901v140000p.pdf}
\BIBentrySTDinterwordspacing

\bibitem{mangiante2017vr}
S.~Mangiante, G.~Klas, A.~Navon, Z.~GuanHua, J.~Ran, and M.~D. Silva, ``Vr is
  on the edge: How to deliver 360 videos in mobile networks,'' in
  \emph{Proceedings of the Workshop on Virtual Reality and Augmented Reality
  Network}, 2017, pp. 30--35.

\bibitem{benesty2019array}
J.~Benesty, I.~Cohen, and J.~Chen, \emph{Array Processing}.\hskip 1em plus
  0.5em minus 0.4em\relax Springer, 2019.

\bibitem{nagulu2021full}
A.~Nagulu, A.~Gaonkar, S.~Ahasan, S.~Garikapati, T.~Chen, G.~Zussman, and
  H.~Krishnaswamy, ``A full-duplex receiver with true-time-delay cancelers
  based on switched-capacitor-networks operating beyond the delay--bandwidth
  limit,'' \emph{IEEE Journal of Solid-State Circuits}, vol.~56, no.~5, pp.
  1398--1411, 2021.

\bibitem{ghaderi2020four}
E.~Ghaderi and S.~Gupta, ``A four-element 500-mhz 40-mw 6-bit adc-enabled
  time-domain spatial signal processor,'' \emph{IEEE Journal of Solid-State
  Circuits}, vol.~56, no.~6, pp. 1784--1794, 2020.

\bibitem{jain2020mmobile}
I.~K. Jain, R.~Subbaraman, T.~H. Sadarahalli, X.~Shao, H.-W. Lin, and
  D.~Bharadia, ``mmobile: Building a mmwave testbed to evaluate and address
  mobility effects,'' in \emph{Proceedings of the 4th ACM Workshop on
  Millimeter-Wave Networks and Sensing Systems}, 2020, pp. 1--6.

\bibitem{caudill2021real}
D.~Caudill, J.~Chuang, S.~Y. Jun, C.~Gentile, and N.~Golmie, ``Real-time mmwave
  channel sounding through switched beamforming with 3-d dual-polarized
  phased-array antennas,'' \emph{IEEE Transactions on Microwave Theory and
  Techniques}, vol.~69, no.~11, pp. 5021--5032, 2021.

\bibitem{palacios2018adaptive}
J.~Palacios, ``{Adaptive Codebook Optimization for Beam Training on
  Off-the-Shelf IEEE 802.11ad Devices},'' in \emph{Proceedings of the 24rd
  Annual International Conference on Mobile Computing and Networking}.\hskip
  1em plus 0.5em minus 0.4em\relax ACM, 2018.

\bibitem{giordani2018tutorial}
M.~Giordani, M.~Polese, A.~Roy, D.~Castor, and M.~Zorzi, ``{A tutorial on beam
  management for 3GPP NR at mmWave frequencies},'' \emph{IEEE Communications
  Surveys \& Tutorials}, vol.~21, no.~1, pp. 173--196, 2018.

\bibitem{github-nokia-wireless}
alvarovalcarce, ``Nokia wireless suite - github reference,''
  \url{https://github.com/nokia/wireless-suite} (2020/11/12).

\bibitem{jain2019impact}
I.~K. Jain, R.~Kumar, and S.~S. Panwar, ``The impact of mobile blockers on
  millimeter wave cellular systems,'' \emph{IEEE Journal on Selected Areas in
  Communications}, vol.~37, no.~4, pp. 854--868, 2019.

\bibitem{3gpp_138_901}
\BIBentryALTinterwordspacing
``{5G; Study on channel model for frequencies from 0.5 to 100 GHz},'' {3rd
  Generation Partnership Project (3GPP)}, TR {138 901 V14.0.0}, 2017-05.
  [Online]. Available:
  \url{http://www.etsi.org/deliver/etsi_tr/138900_138999/138901/14.00.00_60/tr_138901v140000p.pdf}
\BIBentrySTDinterwordspacing

\bibitem{jain2021two}
I.~K. Jain, R.~Subbaraman, and D.~Bharadia, ``Two beams are better than one:
  towards reliable and high throughput mmwave links,'' in \emph{Proceedings of
  the 2021 ACM SIGCOMM 2021 Conference}, 2021, pp. 488--502.

\bibitem{aykin2019multi}
I.~Aykin, B.~Akgun, and M.~Krunz, ``Multi-beam transmissions for blockage
  resilience and reliability in millimeter-wave systems,'' \emph{IEEE Journal
  on Selected Areas in Communications}, vol.~37, no.~12, pp. 2772--2785, 2019.

\bibitem{zhang2018multibeam}
J.~A. Zhang, X.~Huang, Y.~J. Guo, J.~Yuan, and R.~W. Heath, ``Multibeam for
  joint communication and radar sensing using steerable analog antenna
  arrays,'' \emph{IEEE Transactions on Vehicular Technology}, vol.~68, no.~1,
  pp. 671--685, 2018.

\bibitem{hassanieh2018fast}
H.~Hassanieh, O.~Abari, M.~Rodriguez, M.~Abdelghany, D.~Katabi, and P.~Indyk,
  ``Fast millimeter wave beam alignment,'' in \emph{Proceedings of the 2018
  Conference of the ACM Special Interest Group on Data Communication}.\hskip
  1em plus 0.5em minus 0.4em\relax ACM, 2018, pp. 432--445.

\bibitem{zhu2018high}
D.~Zhu, J.~Choi, Q.~Cheng, W.~Xiao, and R.~W. Heath, ``High-resolution angle
  tracking for mobile wideband millimeter-wave systems with antenna array
  calibration,'' \emph{IEEE Transactions on Wireless Communications}, vol.~17,
  no.~11, pp. 7173--7189, 2018.

\bibitem{rotman2016true}
R.~Rotman, M.~Tur, and L.~Yaron, ``True time delay in phased arrays,''
  \emph{Proceedings of the IEEE}, vol. 104, no.~3, pp. 504--518, 2016.

\bibitem{garakoui2015compact}
S.~K. Garakoui, E.~A. Klumperink, B.~Nauta, and F.~E. van Vliet, ``Compact
  cascadable gm-c all-pass true time delay cell with reduced delay variation
  over frequency,'' \emph{IEEE journal of solid-state circuits}, vol.~50,
  no.~3, pp. 693--703, 2015.

\bibitem{wadaskar20213d}
A.~Wadaskar, V.~Boljanovic, H.~Yan, and D.~Cabric, ``{3D Rainbow Beam Design
  for Fast Beam Training with True-Time-Delay Arrays in Wideband
  Millimeter-Wave Systems},'' in \emph{2021 55th Asilomar Conference on
  Signals, Systems, and Computers}.\hskip 1em plus 0.5em minus 0.4em\relax
  IEEE, 2021, pp. 85--92.

\bibitem{boljanovic2021compressive}
V.~Boljanovic and D.~Cabric, ``Compressive estimation of wideband mmw channel
  using analog true-time-delay array,'' in \emph{2021 IEEE Workshop on Signal
  Processing Systems (SiPS)}.\hskip 1em plus 0.5em minus 0.4em\relax IEEE,
  2021, pp. 170--175.

\bibitem{tan2021wideband}
J.~Tan and L.~Dai, ``Wideband beam tracking in thz massive mimo systems,''
  \emph{IEEE Journal on Selected Areas in Communications}, vol.~39, no.~6, pp.
  1693--1710, 2021.

\bibitem{tan2019delay}
------, ``Delay-phase precoding for thz massive mimo with beam split,'' in
  \emph{2019 IEEE Global Communications Conference (GLOBECOM)}.\hskip 1em plus
  0.5em minus 0.4em\relax IEEE, 2019, pp. 1--6.

\bibitem{ratnam2022joint}
V.~V. Ratnam, J.~Mo, A.~Alammouri, B.~L. Ng, J.~Zhang, and A.~F. Molisch,
  ``Joint phase-time arrays: A paradigm for frequency-dependent analog
  beamforming in 6g,'' \emph{IEEE Access}, vol.~10, pp. 73\,364--73\,377, 2022.

\bibitem{garg202028}
R.~Garg, G.~Sharma, A.~Binaie, S.~Jain, S.~Ahasan, A.~Dascurcu,
  H.~Krishnaswamy, and A.~S. Natarajan, ``A 28-ghz beam-space mimo rx with
  spatial filtering and frequency-division multiplexing-based single-wire if
  interface,'' \emph{IEEE Journal of Solid-State Circuits}, 2020.

\bibitem{cho2018true}
M.-K. Cho, I.~Song, and J.~D. Cressler, ``A true time delay-based sige
  bi-directional t/r chipset for large-scale wideband timed array antennas,''
  in \emph{2018 IEEE Radio Frequency Integrated Circuits Symposium
  (RFIC)}.\hskip 1em plus 0.5em minus 0.4em\relax IEEE, 2018, pp. 272--275.

\bibitem{hu20151}
F.~Hu and K.~Mouthaan, ``A 1--20 ghz 400 ps true-time delay with small delay
  error in 0.13 $\mu$m cmos for broadband phased array antennas,'' in
  \emph{2015 IEEE MTT-S International Microwave Symposium}.\hskip 1em plus
  0.5em minus 0.4em\relax IEEE, 2015, pp. 1--3.

\bibitem{shen2020mobility}
L.-H. Shen and K.-T. Feng, ``Mobility-aware subband and beam resource
  allocation schemes for millimeter wave wireless networks,'' \emph{IEEE
  Transactions on Vehicular Technology}, vol.~69, no.~10, pp. 11\,893--11\,908,
  2020.

\bibitem{zhang2019joint}
W.~Zhang, Y.~Wei, S.~Wu, W.~Meng, and W.~Xiang, ``Joint beam and resource
  allocation in 5g mmwave small cell systems,'' \emph{IEEE Transactions on
  Vehicular Technology}, vol.~68, no.~10, pp. 10\,272--10\,277, 2019.

\end{thebibliography}




\end{document}